%% file: main.tex
\documentclass[acmsmall]{acmart}
\AtBeginDocument{%
  }

\setcopyright{none}

\input{include}
\input{macro}

\begin{document}

\title{Contravariant Universe Subtyping, Formalized, Mechanized}

\author{Shengyi Jiang}
\email{syjiang@cs.hku.hk}
\orcid{0000-0002-4443-0753}
\affiliation{%
  \institution{The University of Hong Kong}
  \city{Hong Kong}
  \country{China}
}

\author{Jason Z. S. Hu}
\authornote{This research was conducted independently of Amazon.}
\email{fdhzs2010@hotmail.com}
\orcid{0000-0001-6710-6262}
\affiliation{%
  \institution{Amazon}
  \city{Seattle}
  \state{Washington}
  \country{USA}
}

\citestyle{acmauthoryear}

\begin{abstract}
  In a type theory with a cumulative universe hierarchy, \emph{subtyping} is often incorporated to lift lower universe levels to higher ones. %
  In practice, kernels of proof assistants typically implement
  universe subtyping conservatively: subtyping only propagates covariantly into the
  output types of $\Pi$ types, while input types must be equivalent. %
  A full-blown subtyping relation also compares the input types contravariantly. %
  However, contravariant subtyping is not justified by the existing set-theoretic model, and poses challenges in the development of a complete subtyping algorithm. 
  
  We present \mclttsub, a complete mechanized study of contravariant universe
  subtyping for a core Martin-L\"of type theory (MLTT), built as an extension of
  the verified McTT kernel in Rocq. %
  We re-establish the syntactic meta-theory and the soundness and completeness properties of the normalization-by-evaluation (NbE) algorithm. %
  To complete the verified pipeline for a type-checker, we give two complete and sound subtyping algorithms. %
  The first one \emph{renormalizes} output types in every $\Pi$ iteration. %
  Despite a simple completeness proof, this algorithm incurs additional computational cost. %
  The second algorithm hence circumvents this cost by avoiding the renormalization at all. %
  It simply recursively compares two already normalized types structurally. %
  Nevertheless, a completeness proof of this algorithm in a semantic approach requires to reason about normalization in sub-contexts, a capability beyond what the current semantic gadget offers. %
  Instead, we conduct a program equivalence proof through bisimulation relations. %
  Using this technique, we prove that the NbE algorithm returns equal normal forms in sub-contexts up to universe levels, and successfully conclude the completeness of the subtyping algorithm. %
\end{abstract}

\maketitle

\input{sections/introduction}
\input{sections/overview}
\input{sections/syntax}
\input{sections/nbe_per}
\input{sections/algo_typing}
\input{sections/algo_renorm}
\input{sections/algo_renorm_free}
\input{sections/discussion}
\input{sections/related}
\input{sections/conclusion}

\bibliographystyle{ACM-Reference-Format}
\bibliography{ref}

\appendix
\input{sections/appendix/rules}

\input{sections/appendix/mechanization}
\end{document}

%% file: include.tex
\PassOptionsToPackage{dvipsnames}{xcolor}

\usepackage[skins]{tcolorbox}
\usepackage{amsmath}

\usepackage{amssymb}
\usepackage{amsthm}
\usepackage{tikz-cd}
\usepackage{ifthen}
\usepackage{xspace}
\usepackage{xparse}
\usepackage{xstring}
\usepackage{mathpartir}
\usepackage{hyperref}
\usepackage{cleveref}
\usepackage{enumitem}
\usepackage{stmaryrd}
\usepackage{multicol}
\usepackage{tikz-cd}
\usepackage{mathtools}
\usepackage{stackengine}
\usepackage{scalerel}
\usepackage[normalem]{ulem}
\usepackage{wasysym}
\usepackage{fontawesome}
\usepackage{xcolor}
\usepackage{tcolorbox}
\usepackage{accents}
\usepackage{multirow}
\usepackage{array,longtable}
\usepackage{turnstile}
\usepackage{listings}
\usepackage{subcaption}
\usepackage{float}
\usepackage{needspace}
\usepackage{bbold}
\usepackage{amsmath}%
\usepackage{graphicx}%
\usepackage{contour}
\usepackage{pgf}%
\usepackage{fancyvrb}
\usepackage{lipsum}

\hypersetup{
  urlbordercolor = red,
  pdfborder={1 0 0},
}

\usetikzlibrary{matrix}
\usetikzlibrary{calc}
\usetikzlibrary{positioning}
\usetikzlibrary{decorations.pathreplacing}
\tikzstyle{block} = [draw, rectangle, minimum height=3em, minimum width=6em]
\usetikzlibrary{shapes.multipart}
\usetikzlibrary{fit}

\newcommand{\figrulesep}{\vspace{-3pt}}
\newcommand{\figtailspace}{\vspace{-5pt}}

\DeclareFontFamily{OT1}{pzc}{}
\DeclareFontShape{OT1}{pzc}{m}{it}{<-> s * [1.10] pzcmi7t}{}
\DeclareMathAlphabet{\mathpzc}{OT1}{pzc}{m}{it}

\crefname{figure}{fig.}{fig.}
\Crefname{figure}{Fig.}{Fig.}
\Crefname{section}{Sec.}{Sec.}

\lstdefinelanguage{ocaml}
{
language=ml,
keywords={match,let,rec,with},
morecomment=[s][\color{gray}]{(**}{*)},
literate={=}{{{\color{violet}=}}}1{|}{{\({\color{violet}|}\)}}1{->}{{{\color{violet}->}}}1
}

\definecolor{light-gray}{gray}{0.85}
\lstdefinelanguage{mints}%
{%
  keywords={box,unbox,letbox,let,in,quote,run,splice,zero,succ,match,with,nil,Ty,Some,None,if,then,else,
      fun,rec,end,return,forall},%
  sensitive,%
  moredelim=[is][\color{blue}]{bb<<}{>>bb},%
  moredelim=[is][\color{black}]{bl<<}{>>bl},%
  moredelim=*[is][\color{blue}]{b<<}{>>b},%
  moredelim=*[is][\color{teal}]{t<<}{>>t},%
  morecomment=[l][\color{gray}]{--},%
}
\newcolumntype{L}{>{$}l<{$}}
  \newcolumntype{R}{>{$}r<{$}}
\DefineVerbatimEnvironment{codeexample}{Verbatim}{fontsize=\small,commandchars=\\\{\}}

\contourlength{0.8pt}

\usepackage[nolist,nohyperlinks]{acronym}

\newenvironment{mathparshrink}
{\begin{@empty}%
    \begin{mathpar}}
      {\end{mathpar}\end{@empty}}

\tcbuselibrary{theorems}

\DeclareMathAlphabet{\mathpzc}{OT1}{pzc}{m}{it}

\DeclareUnicodeCharacter{2115}{$\mathbb{N}$}
\DeclareUnicodeCharacter{FFFD}{$\mathbb{U}$}
\DeclareUnicodeCharacter{21D2}{$\Rightarrow$}
\DeclareUnicodeCharacter{2113}{$\mathpzc{l}$}
\DeclareUnicodeCharacter{2248}{$\approx$}
\DeclareUnicodeCharacter{2200}{$\forall$}
\DeclareUnicodeCharacter{2218}{$\circ$}
\DeclareUnicodeCharacter{2294}{$\sqcup$}
\DeclareUnicodeCharacter{393}{$\Gamma$}
\DeclareUnicodeCharacter{3B1}{$\alpha$}
\DeclareUnicodeCharacter{3B2}{$\beta$}
\DeclareUnicodeCharacter{3B7}{$\eta$}
\DeclareUnicodeCharacter{3BB}{$\lambda$}
\DeclareUnicodeCharacter{3BC}{$\mu$}
\DeclareUnicodeCharacter{2080}{$_0$}
\DeclareUnicodeCharacter{2081}{$_1$}
\DeclareUnicodeCharacter{2082}{$_2$}
\DeclareUnicodeCharacter{2261}{$\equiv$}
\DeclareUnicodeCharacter{2E1}{$^l$}
\DeclareUnicodeCharacter{2B3}{$^r$}
\DeclareUnicodeCharacter{2032}{$'$}
\DeclareUnicodeCharacter{21D4}{$\Leftrightarrow$}
\DeclareUnicodeCharacter{224A}{$\approxeq$}
\DeclareUnicodeCharacter{27E8}{$\langle$}
\DeclareUnicodeCharacter{27E9}{$\rangle$}
\DeclareUnicodeCharacter{27F6}{$\longrightarrow$}
\DeclareUnicodeCharacter{220E}{$\blacksquare$}
\DeclareUnicodeCharacter{27FA}{$\Leftrightarrow$}
\DeclareUnicodeCharacter{22A4}{$\top$}
\DeclareUnicodeCharacter{22A5}{$\bot$}
\DeclareUnicodeCharacter{03A0}{$\Pi$}
\DeclareUnicodeCharacter{03A3}{$\Sigma$}
\DeclareUnicodeCharacter{03C0}{$\pi$}
\DeclareUnicodeCharacter{03C1}{$\rho$}
\DeclareUnicodeCharacter{21A6}{$\mapsto$}
\DeclareUnicodeCharacter{2199}{$\swarrow$}
\DeclareUnicodeCharacter{22A2}{$\vdash$}
\DeclareUnicodeCharacter{2236}{$\colon$}
\DeclareUnicodeCharacter{2208}{$\in$}
\DeclareUnicodeCharacter{27E6}{$\llbracket$}
\DeclareUnicodeCharacter{27E7}{$\rrbracket$}

%% file: macro.tex
\newcommand{\mycomment}[1]{}

\newcommand*\mynote[3]
{
  \ifvmode
    \textcolor{#2}{#1: #3}
  \else
    \textcolor{#2}{~#1: #3}
  \fi
}

\newcommand{\hlmath}[2][gray!40]{%
  \colorbox{#1}{$\displaystyle#2$}}

\newcommand{\ca}[1]{\textcolor{SteelBlue}{#1}}
\newcommand{\cb}[1]{\textcolor{FireBrick}{#1}}

\newcommand{\codelevel}{\textsc{c}\xspace}
\newcommand{\proglevel}{\textsc{d}\xspace}

\newcommand{\mcltt}{\textsc{McTT}\xspace}
\newcommand{\mclttsub}{\textsc{McTT}$_{\subseteq}$\xspace}

\newcommand{\coq}{\textsc{Rocq}\xspace}
\newcommand{\agda}{\textsc{Agda}\xspace}

\makeatletter
\newcommand{\inlineitem}[1][]{%
\ifnum\enit@type=\tw@
    {\descriptionlabel{#1}}
  \hspace{\labelsep}%
\else
  \ifnum\enit@type=\z@
       \refstepcounter{\@listctr}\fi
    \quad\@itemlabel\hspace{\labelsep}%
\fi}
\makeatother

\theoremstyle{remark}

\makeatletter
\def\namedlabel#1#2{\begingroup
  \def\@currentlabel{#2}%
  \label{#1}\endgroup
}
\makeatother

\makeatletter
\newcommand{\customlabel}[2]{%
  \protected@write \@auxout {}{\string \newlabel {#1}{{#2}{\thepage}{#2}{#1}{}} }%
  \hypertarget{#1}{}
}
\makeatother

\makeatletter
\DeclareFontFamily{U}{MnSymbolA}{}
\DeclareFontShape{U}{MnSymbolA}{m}{n}{
  <-6>  MnSymbolA5
  <6-7>  MnSymbolA6
  <7-8>  MnSymbolA7
  <8-9>  MnSymbolA8
  <9-10> MnSymbolA9
  <10-12> MnSymbolA10
  <12->   MnSymbolA12}{}
\DeclareFontShape{U}{MnSymbolA}{b}{n}{
  <-6>  MnSymbolA-Bold5
  <6-7>  MnSymbolA-Bold6
  <7-8>  MnSymbolA-Bold7
  <8-9>  MnSymbolA-Bold8
  <9-10> MnSymbolA-Bold9
  <10-12> MnSymbolA-Bold10
  <12->   MnSymbolA-Bold12}{}
\DeclareSymbolFont{MnSyA}{U}{MnSymbolA}{m}{n}
\SetSymbolFont{MnSyA}{bold}{U}{MnSymbolA}{b}{n}

\DeclareRobustCommand{\overleftharpoon}{\mathpalette{\overarrow@\leftharpoonfill@}}
\DeclareRobustCommand{\overrightharpoon}{\mathpalette{\overarrow@\rightharpoonfill@}}
\def\leftharpoonfill@{\arrowfill@\leftharpoondown\mn@relbar\mn@relbar}
\def\rightharpoonfill@{\arrowfill@\mn@relbar\mn@relbar\rightharpoonup}

\DeclareMathSymbol{\leftharpoondown}{\mathrel}{MnSyA}{'112}
\DeclareMathSymbol{\rightharpoonup}{\mathrel}{MnSyA}{'100}
\DeclareMathSymbol{\mn@relbar}{\mathrel}{MnSyA}{'320}
\makeatother

\DeclareFontFamily{U}{mathx}{}
\DeclareFontShape{U}{mathx}{m}{n}{<-> mathx10}{}
\DeclareSymbolFont{mathx}{U}{mathx}{m}{n}
\DeclareMathAccent{\widecheck}{0}{mathx}{"71}

\newcommand{\vect}[1]{\overrightarrow{#1}}

\newcommand{\vGamma}{\vect \Gamma}

\newcommand{\usynrec}{\texttt{rec}}

\newcommand{\usynlam}[3]{\lambda (#1:#2). #3}
\newcommand{\usynPi}[3]{\Pi(#1 : #2). #3}

\newcommand{\STo}{\Rightarrow}
\newcommand{\D}{\ensuremath{\mathcal{D}}}

\newcommand{\Nat}{\ensuremath{\texttt{Nat}}}
\newcommand{\Int}{\ensuremath{\texttt{Int}}}

\newcommand{\trec}{\usynrec}
\newcommand{\recn}[4]{\trec (#1)~#2~(#3)~#4}
\newcommand{\Nf}{\textsf{Nf}}
\newcommand{\Ne}{\textsf{Ne}}
\newcommand{\Nfd}{\textsl{Nf}}
\newcommand{\Ned}{\textsl{Ne}}
\newcommand{\Tyd}{\textsl{Ty}}

\newcommand{\Env}{\textsf{Env}}

\newcommand{\N}{\mathbb{N}}
\newcommand{\intp}[1]{\ensuremath{\llbracket #1 \rrbracket}}

\newcommand{\Rnf}{\textsf{R}^{\Nf}}
\newcommand{\Rne}{\textsf{R}^{\Ne}}
\newcommand{\Rty}{\textsf{R}^{\textsf{Ty}}}

\newcommand{\semsubtyp}[3]{\ensuremath{#1 <:_{#3} #2}}

\newcommand{\sep}{\;|\;}

\newcommand{\nbe}{\textsf{nbe}}

\newcommand{\Ty}[1]{\ensuremath{\Se@{#1}}}
\newcommand{\tEl}{\texttt{El}}

\newcommand\Uc{\mathcal{U}}
\newcommand\El{\textbf{El}}

\newcommand{\dapp}[2]{\ensuremath{#1 \cdot #2}}

\newcommand{\Dom}{\textsf{D}}
\newcommand{\DomNe}{\Dom^{\Ne}}
\newcommand{\DomNf}{\Dom^{\Nf}}

\newcommand{\at}[1]{\scalebox{0.75}{\color{gray}$\,@~#1$}}

\newcommand{\vertrule}[1][1ex]{\rule{.45pt}{#1}}

\newcommand{\tttstileA}{\ensuremath{\mathrel{\raisebox{0.01pt}{\vertrule[1.6ex]}\hspace{.15em}{\vertrule[1.6ex]}{\hspace{.1em}\vDash}}}}
\newcommand{\sttstileA}{\ensuremath{\mathrel{\raisebox{0.01pt}{\vertrule[1.6ex]}{\hspace{.1em}\vDash}}}}

\DeclareDocumentCommand{\genjudge}{ m m m } {
  \ensuremath{#1 #2 #3}
}

\DeclareDocumentCommand{\singlejudge}{ o m m } {
  \IfNoValueTF {#1}
  {\genjudge{\Gamma}{#2}{#3}}
  {\genjudge{#1}{#2}{#3}}
}

\DeclareDocumentCommand{\dualjudge}{o o m m m } {
  \IfNoValueTF {#1}
  {\IfNoValueTF {#2}
    {\genjudge{\Psi #3 \Gamma}{#4}{#5}}
    {\genjudge{\Psi #3 #2}{#4}{#5}}}
  {\IfNoValueTF {#2}
    {\genjudge{#1 #3 \Gamma}{#4}{#5}}
    {\genjudge{#1 #3 #2}{#4}{#5}}}
}

\DeclareDocumentCommand{\triplejudge}{o o o m m } {
  \IfNoValueTF {#1}
  {\IfNoValueTF {#2}
    {\IfNoValueTF {#3}
      {\genjudge{L \sep \Psi; \Gamma}{#4}{#5}}
      {\genjudge{#3 \sep \Psi; \Gamma}{#4}{#5}}}
    {\IfNoValueTF {#3}
      {\genjudge{L \sep \Psi; #2}{#4}{#5}}
      {\genjudge{#3 \sep \Psi; #2}{#4}{#5}}}}
  {\IfNoValueTF {#2}
    {\IfNoValueTF {#3}
      {\genjudge{L \sep #1; \Gamma}{#4}{#5}}
      {\genjudge{#3 \sep #1; \Gamma}{#4}{#5}}}
    {\IfNoValueTF {#3}
      {\genjudge{L \sep #1; #2}{#4}{#5}}
      {\genjudge{#3 \sep #1; #2}{#4}{#5}}}}
}

\DeclareDocumentCommand{\dcjudge}{o o m m } {
  \dualjudge[#1][#2]{;}{#3}{#4}
}

\DeclareDocumentCommand{\dSemjudgebf}{ o o m } {
  \dcjudge[#1][#2]{\sttstileA}{#3}
}

\DeclareDocumentCommand{\judge}{ o m } {
  \singlejudge[#1]{\vdash}{#2}
}

\DeclareDocumentCommand{\gequiv}{ o m m } {
  \IfNoValueTF {#1}
  {\judge[L]{#2 \approx #3}}
  {\judge[#1]{#2 \approx #3}}
}

\DeclareDocumentCommand{\cont}{ o m } {
  \square (\judge[#1]{#2})
}

\DeclareDocumentCommand{\rjudge}{ o m } {
  \IfNoValueTF {#1}
  {\singlejudge[\vGamma]{\vdash_r}{#2}}
  {\singlejudge[#1]{\vdash_r}{#2}}
}

\DeclareDocumentCommand{\semjudge}{ o m } {
  \singlejudge[#1]{\vDash}{#2}
}

\DeclareDocumentCommand{\Semjudge}{ o m } {
  \singlejudge[#1]{\Vdash}{#2}
}

\DeclareDocumentCommand{\mjudge}{ o m } {
  \IfNoValueTF {#1}
  {\singlejudge[\vGamma]{\vdash}{#2}}
  {\singlejudge[#1]{\vdash}{#2}}
}

\DeclareDocumentCommand{\djudge}{ o o m } {
  \dualjudge[#1][#2]{;}{\vdash}{#3}
}

\DeclareDocumentCommand{\ljudge}{ o m m } {
  \IfNoValueTF {#1}
  {\genjudge{\Psi}{\vdash_{#2}}{#3}}
  {\genjudge{#1}{\vdash_{#2}}{#3}}
}

\DeclareDocumentCommand{\ljudgel}{ o m m } {
  \IfNoValueTF {#1}
  {\genjudge{\Gamma}{\vdash_{#2}}{#3}}
  {\genjudge{#1}{\vdash_{#2}}{#3}}
}

\DeclareDocumentCommand{\DTyp}{o m m} {
  \ensuremath{(\ljudgel[#1]{#2}{\at{#3}})}
}

\DeclareDocumentCommand{\DTrm}{o m m m} {
  \ensuremath{(\ljudgel[#1]{#2}{#3\at{#4}})}
}

\DeclareDocumentCommand{\CTyp}{o m} {
  \ensuremath{\square \DTyp[#1]{\codelevel}{#2}}
}

\DeclareDocumentCommand{\CTrm}{o m m} {
  \ensuremath{\square \DTrm[#1]{\codelevel}{#2}{#3}}
}

\DeclareDocumentCommand{\TPI}{m o m m m}{
  \ensuremath{(#1 : \DTyp[#2]{\proglevel}{#3}) \STo^{#4} #5}
}

\DeclareDocumentCommand{\lpgenjudge}{ o o m m } {
  \IfNoValueTF {#1}
  {\IfNoValueTF {#2}
    {\genjudge{L \sep \Psi}{#3}{#4}}
    {\genjudge{#2 \sep \Psi}{#3}{#4}}}
  {\IfNoValueTF {#2}
    {\genjudge{L \sep #1}{#3}{#4}}
    {\genjudge{#2 \sep #1}{#3}{#4}}}
}

\DeclareDocumentCommand{\pjudge}{ o o m } {
  \lpgenjudge[#1][#2]{\vdash}{#3}
}

\DeclareDocumentCommand{\ptyping}{o o m m } {
  \pjudge[#1][#2]{#3 : #4}
}

\DeclareDocumentCommand{\ptyequiv}{o o m m m } {
  \pjudge[#1][#2]{#3 \approx #4 : #5}
}

\DeclareDocumentCommand{\lpjudge}{ o o m m } {
  \pjudge[#1][#2]{_{#3}#4}
}

\DeclareDocumentCommand{\lpequiv}{ o o m m m } {
  \lpjudge[#1][#2]{#3}{#4 \approx #5}
}

\DeclareDocumentCommand{\lpcobj}{ o o m m m } {
  \lpjudge[#1][#2]{#3}{#4 : #5}
}

\DeclareDocumentCommand{\tjudge}{ o o o m } {
  \triplejudge[#1][#2][#3]{\vdash}{#4}
}

\DeclareDocumentCommand{\ltjudge}{o o o m m} {
  \tjudge[#1][#2][#3]{_{#4} #5}
}

\DeclareDocumentCommand{\lttypwf}{o o o m m m} {
  \ltjudge[#1][#2][#3]{#4}{#5 \at{#6}}
}

\DeclareDocumentCommand{\lttyping}{o o o m m m m} {
  \ltjudge[#1][#2][#3]{#4}{#5 : #6 \at{#7}}
}

\DeclareDocumentCommand{\lttypingd}{o o o m m m m} {
  \ltjudge[#1][#2][#3]{#4}{#5 : \Ty{#6}\;(\at{#7})}
}

\DeclareDocumentCommand{\ltsubst}{o o o m m m} {
  \ltjudge[#1][#2][#3]{#4}{#5 : #6}
}

\DeclareDocumentCommand{\ltsubstv}{o o o m m m} {
  \ltjudge[#1][#2][#3]{#4}{^{\#} #5 : #6}
}

\DeclareDocumentCommand{\ltsubeq}{o o o m m m m} {
  \ltjudge[#1][#2][#3]{#4}{#5 \approx #6 : #7}
}

\DeclareDocumentCommand{\lttypeq}{o o o m m m m} {
  \ltjudge[#1][#2][#3]{#4}{#5 \approx #6 \at{#7}}
}

\DeclareDocumentCommand{\lttyequiv}{o o o m m m m m} {
  \ltjudge[#1][#2][#3]{#4}{#5 \approx #6 : #7 \at{#8}}
}

\DeclareDocumentCommand{\lttyequivv}{o o o m m m m m} {
  \ltjudge[#1][#2][#3]{#4}{^{\#} #5 \approx #6 : #7 \at{#8}}
}

\DeclareDocumentCommand{\lttyequivd}{o o o m m m m m} {
  \ltjudge[#1][#2][#3]{#4}{#5 \approx #6 : \Ty{#7}\;(\at{#8})}
}

\DeclareDocumentCommand{\lsemjudge}{ o m m } {
  \IfNoValueTF {#1}
  {\genjudge{\Psi}{\sttstileA^{#2}}{#3}}
  {\genjudge{#1}{\sttstileA^{#2}}{#3}}
}

\DeclareDocumentCommand{\lsemvjudge}{ o m m } {
  \IfNoValueTF {#1}
  {\genjudge{\Psi}{\Vdash^{#2}}{#3}}
  {\genjudge{#1}{\Vdash^{#2}}{#3}}
}

\DeclareDocumentCommand{\lmjudge}{ o o m m } {
  \djudge[#1][#2]{_{#3} #4}
}

\DeclareDocumentCommand{\dpjudge}{ o o m } {
  \lpgenjudge[#1][#2]{\sttstileA}{#3}
}

\DeclareDocumentCommand{\dtjudge}{ o o o m } {
  \triplejudge[#1][#2][#3]{\tttstileA}{#4}
}

\DeclareDocumentCommand{\ldpjudge}{ o o m m m } {
  \dpjudge[#1][#2]{_{#3}^{#4} #5}
}

\DeclareDocumentCommand{\ldjudge}{ o o o m m m } {
  \dtjudge[#1][#2][#3]{_{#4}^{#5} #6}
}

\DeclareDocumentCommand{\dpSjudge}{ o o m } {
  \lpgenjudge[#1][#2]{\sttstileA}{#3}
}

\DeclareDocumentCommand{\dSjudge}{ o o o m } {
  \triplejudge[#1][#2][#3]{\sttstileA}{#4}
}

\DeclareDocumentCommand{\dsjudge}{ o o o m } {
  \triplejudge[#1][#2][#3]{\Vdash}{#4}
}

\DeclareDocumentCommand{\ldpSjudge}{ o o m m m } {
  \dpSjudge[#1][#2]{_{#3}^{#4} #5}
}

\DeclareDocumentCommand{\ldSjudge}{ o o o m m m } {
  \dSjudge[#1][#2][#3]{_{#4}^{#5} #6}
}

\DeclareDocumentCommand{\dsemjudge}{ o o m } {
  \dualjudge[#1][#2]{;}{\sttstileA}{#3}
}

\DeclareDocumentCommand{\dSemjudge}{ o o m } {
  \dualjudge[#1][#2]{;}{\Vdash}{#3}
}

\DeclareDocumentCommand{\wmjudge}{ o m } {
  \IfNoValueTF {#1}
  {\vGamma \vdash_{\! w} #2}
  {#1 \vdash_{\! w} #2}
}

\DeclareDocumentCommand{\msemjudge}{ o m } {
  \IfNoValueTF {#1}
  {\vGamma \vDash #2}
  {#1 \vDash #2}
}

\DeclareDocumentCommand{\mSemjudge}{ o m } {
  \IfNoValueTF {#1}
  {\vGamma \Vdash #2}
  {#1 \Vdash #2}
}

\DeclareDocumentCommand{\typing}{ o m m } {
  \judge[#1]{#2 : #3}
}

\DeclareDocumentCommand{\subtyp}{ o m m } {
  \judge[#1]{#2 \subseteq #3}
}

\DeclareDocumentCommand{\rtyping}{ o m m } {
  \rjudge[#1]{#2 : #3}
}

\DeclareDocumentCommand{\wtyping}{ o m m } {
  \judge[#1]{_w #2 : #3}
}

\DeclareDocumentCommand{\semtyp}{ o m m } {
  \semjudge[#1]{#2 : #3}
}

\DeclareDocumentCommand{\semvtyp}{ o m m } {
  \IfNoValueTF {#1}
  {\semtyp[\Psi]{ #2}{#3}}
  {\semtyp[#1]{ #2}{#3}}
}

\DeclareDocumentCommand{\Semtyp}{ o m m } {
  \Semjudge[#1]{#2 : #3}
}

\DeclareDocumentCommand{\tyequiv}{ o m m m } {
  \judge[#1]{#2 \approx #3 : #4}
}

\DeclareDocumentCommand{\semtyeq}{ o m m m } {
  \semjudge[#1]{#2 \approx #3 : #4}
}

\DeclareDocumentCommand{\semvtyeq}{ o m m m } {
  \IfNoValueTF {#1}
  {\semjudge[\Psi]{ #2 \approx #3 : #4}}
  {\semjudge[#1]{ #2 \approx #3 : #4}}
}

\DeclareDocumentCommand{\atyping}{ o m m m } {
\judge[#1]{#2 :^{#3} #4}
}
\DeclareDocumentCommand{\atyequiv}{ o m m m m } {
\judge[#1]{#2 \approx #3 :^{#4} #5}
}

\DeclareDocumentCommand{\lSemtypPrime}{ o o m m m } {
  \dSemjudgebf[#1][#2]{_{#3} #4 : #5}
}

\DeclareDocumentCommand{\mtyping}{ o m m } {
  \mjudge[#1]{#2 : #3}
}

\DeclareDocumentCommand{\lmtyping}{ o m m m } {
  \mjudge[#1]{_{#2} #3 : #4}
}

\DeclareDocumentCommand{\dtyping}{ o o m m } {
  \djudge[#1][#2]{#3 : #4}
}

\DeclareDocumentCommand{\ltyping}{ o o m m m } {
  \djudge[#1][#2]{_{#3} #4 : #5}
}

\DeclareDocumentCommand{\wmtyping}{ o m m } {
  \wmjudge[#1]{#2 : #3}
}

\DeclareDocumentCommand{\mSemtyp}{ o m m } {
  \mSemjudge[#1]{#2 : #3}
}

\DeclareDocumentCommand{\lsemtyp}{ o o m m m } {
  \dsemjudge[#1][#2]{_{#3} #4 : #5}
}

\DeclareDocumentCommand{\lsemvtyp}{ o o m m m } {
  \dSemjudge[#1][#2]{_{#3} #4 : #5}
}

\DeclareDocumentCommand{\lSemtyp}{ o o m m m } {
  \dSemjudge[#1][#2]{_{#3} #4 : #5}
}

\DeclareDocumentCommand{\mrarr}{ o m m } {
  \mjudge[#1]{#2 > #3}
}

\DeclareDocumentCommand{\wmrarr}{ o m m } {
  \wmjudge[#1]{#2 > #3}
}

\DeclareDocumentCommand{\mtyequiv}{ o m m m } {
  \mjudge[#1]{#2 \approx #3 : #4}
}

\DeclareDocumentCommand{\dtyequiv}{ o o m m m } {
  \djudge[#1][#2]{#3 \approx #4 : #5}
}

\DeclareDocumentCommand{\ltyequiv}{ o o m m m m } {
  \djudge[#1][#2]{_{#3} #4 \approx #5 : #6}
}

\DeclareDocumentCommand{\ltygneeq}{ o o m m m } {
  \djudge[#1][#2]{_1 #3 \sim #4 : #5}
}

\DeclareDocumentCommand{\ltygteq}{ o o m m m } {
  \djudge[#1][#2]{_1 #3 \simeq #4 : #5}
}

\DeclareDocumentCommand{\ltyred}{ o o m m m } {
  \djudge[#1][#2]{_1 #3 \rightsquigarrow #4 : #5}
}

\DeclareDocumentCommand{\ltyreds}{ o o m m m } {
  \djudge[#1][#2]{_1 #3 \rightsquigarrow^\ast #4 : #5}
}

\DeclareDocumentCommand{\dtconv}{ o o m m m } {
  \djudge[#1][#2]{_1 #3 \ \hat{\Longleftrightarrow}\ #4 : #5}
}

\DeclareDocumentCommand{\dtconvnf}{ o o m m m } {
  \djudge[#1][#2]{_1 #3 \Longleftrightarrow #4 : #5}
}

\DeclareDocumentCommand{\dtconvne}{ o o m m m } {
  \djudge[#1][#2]{_1 #3 \longleftrightarrow #4 : #5}
}

\DeclareDocumentCommand{\ttypred}{ o o o m m m m } {
  \ltjudge[#1][#2][#3]{#4}{ #5 \rightsquigarrow #6 \at{#7}}
}

\DeclareDocumentCommand{\ttypreds}{ o o o m m m m } {
  \ltjudge[#1][#2][#3]{#4}{ #5 \rightsquigarrow^\ast #6 \at{#7}}
}

\DeclareDocumentCommand{\ttrmred}{ o o o m m m m m } {
  \ltjudge[#1][#2][#3]{#4}{ #5 \rightsquigarrow #6 : #7 \at{#8}}
}

\DeclareDocumentCommand{\ttrmreds}{ o o o m m m m m } {
  \ltjudge[#1][#2][#3]{#4}{ #5 \rightsquigarrow^\ast #6 : #7 \at{#8}}
}

\DeclareDocumentCommand{\tconvtyp}{ o o o m m m m } {
  \ltjudge[#1][#2][#3]{#4}{ #5 \ \hat{\Longleftrightarrow}\ #6 \at{#7}}
}

\DeclareDocumentCommand{\tconvtypnf}{ o o o m m m m } {
  \ltjudge[#1][#2][#3]{#4}{ #5 \Longleftrightarrow #6 \at{#7}}
}

\DeclareDocumentCommand{\tconvtypne}{ o o o m m m m } {
  \ltjudge[#1][#2][#3]{#4}{ #5 \longleftrightarrow #6 \at{#7}}
}

\DeclareDocumentCommand{\tconvctx}{o o m m m}{
  \lpjudge[#1][#2]{#3}{#4 \ \hat{\Longleftrightarrow}\ #5}
}

\DeclareDocumentCommand{\tconvtrm}{ o o o m m m m m } {
  \ltjudge[#1][#2][#3]{#4}{ #5 \ \hat{\Longleftrightarrow}\ #6 : #7 \at{#8}}
}

\DeclareDocumentCommand{\tconvtrmnf}{ o o o m m m m m } {
  \ltjudge[#1][#2][#3]{#4}{ #5 \Longleftrightarrow #6 : #7 \at{#8}}
}

\DeclareDocumentCommand{\tconvtrmne}{ o o o m m m m m } {
  \ltjudge[#1][#2][#3]{#4}{ #5 \ \hat{\longleftrightarrow}\ #6 : #7 \at{#8}}
}

\DeclareDocumentCommand{\tconvtrmnee}{ o o o m m m m m } {
  \ltjudge[#1][#2][#3]{#4}{ #5 \longleftrightarrow #6 : #7 \at{#8}}
}

\DeclareDocumentCommand{\tconvsub}{ o o o m m m m } {
  \ltjudge[#1][#2][#3]{#4}{ #5 \ \hat{\Longleftrightarrow}\ #6 : #7}
}

\DeclareDocumentCommand{\lttypgneeq}{ o o o m m m m } {
  \ltjudge[#1][#2][#3]{#4}{#5 \sim #6 \at{#7}}
}

\DeclareDocumentCommand{\lttypgeq}{ o o o m m m m } {
  \ltjudge[#1][#2][#3]{#4}{#5 \simeq #6 \at{#7}}
}

\DeclareDocumentCommand{\lttrmgneeq}{ o o o m m m m m } {
  \ltjudge[#1][#2][#3]{#4}{#5 \sim #6 : #7 \at{#8}}
}

\DeclareDocumentCommand{\lttrmgeq}{ o o o m m m m m } {
  \ltjudge[#1][#2][#3]{#4}{#5 \simeq #6 : #7 \at{#8}}
}

\DeclareDocumentCommand{\ltctxgeq}{o o m m m}{
  \lpjudge[#1][#2]{#3}{#4 \simeq #5}
}

\DeclareDocumentCommand{\ltsubgeq}{ o o o m m m m } {
  \ltjudge[#1][#2][#3]{#4}{ #5 \simeq #6 : #7}
}

\DeclareDocumentCommand{\lsemtyp}{ o o m m m } {
  \dsemjudge[#1][#2]{_{#3} #4 : #5}
}

\DeclareDocumentCommand{\dsemtyeq}{ o o m m m } {
  \dsemjudge[#1][#2]{#3 \approx #4 : #5}
}

\DeclareDocumentCommand{\ldtypwf}{ o o o m m m m } {
  \ldjudge[#1][#2][#3]{#4}{#5}{#6 \at{#7}}
}

\DeclareDocumentCommand{\ldtypeq}{ o o o m m m m m } {
  \ldjudge[#1][#2][#3]{#4}{#5}{#6 \approx #7 \at{#8}}
}

\DeclareDocumentCommand{\ldtyeqnat}{ o o o m m m } {
  \ldjudge[#1][#2][#3]{#4}{}{#5 \approx #6 : \Nat}
}

\DeclareDocumentCommand{\ldtynfeqnat}{ o o o m m m } {
  \ldjudge[#1][#2][#3]{#4}{}{#5 \simeq #6 : \Nat}
}

\DeclareDocumentCommand{\ldtynfeqm}{ o o o m m m } {
  \ldjudge[#1][#2][#3]{}{}{#4 \simeq #5 : #6}
}

\DeclareDocumentCommand{\ldtyping}{ o o o m m m m m } {
  \ldjudge[#1][#2][#3]{#4}{#5}{#6 : #7 \at{#8}}
}

\DeclareDocumentCommand{\ldtyequivt}{ o o o m m m m m m } {
  \ldjudge[#1][#2][#3]{#4}{#5}{#6 \approx #7 : #8 \at{#9}}
}

\DeclareDocumentCommand{\ldtyequiv}{ o o o m m m m m } {
  \ldjudge[#1][#2][#3]{#4}{#5}{#6 \approx #7 : \tEl(#8)}
}

\DeclareDocumentCommand{\ldctxwf}{ o o m m m } {
  \ldpjudge[#1][#2]{#3}{#4}{#5}
}

\DeclareDocumentCommand{\ldctxeq}{ o o m m m m } {
  \ldpjudge[#1][#2]{#3}{#4}{#5 \approx #6}
}

\DeclareDocumentCommand{\ldsubeq}{ o o o m m m m m } {
  \ldjudge[#1][#2][#3]{#4}{#5}{#6 \approx #7 : #8}
}

\DeclareDocumentCommand{\ldsubst}{ o o o m m m m } {
  \ldjudge[#1][#2][#3]{#4}{#5}{#6 : #7}
}

\DeclareDocumentCommand{\spjudge}{ o o m } {
  \lpgenjudge[#1][#2]{\Vdash}{#3}
}

\DeclareDocumentCommand{\stjudge}{ o o o m } {
  \triplejudge[#1][#2][#3]{\Vdash}{#4}
}

\DeclareDocumentCommand{\lspjudge}{ o o m m } {
  \spjudge[#1][#2]{_{#3} #4}
}

\DeclareDocumentCommand{\lsjudge}{ o o o m m } {
  \stjudge[#1][#2][#3]{_{#4} #5}
}

\DeclareDocumentCommand{\lsgctx}{ o m } {
  \IfNoValueTF {#1}
  {\genjudge{L}{\Vdash}{#2}}
  {\genjudge{#1}{\Vdash}{#2}}
}

\DeclareDocumentCommand{\lstypwf}{ o o o m m m } {
  \lsjudge[#1][#2][#3]{#4}{#5 \at{#6}}
}

\DeclareDocumentCommand{\lstypeq}{ o o o m m m m } {
  \lsjudge[#1][#2][#3]{#4}{#5 \approx #6 \at{#7}}
}

\DeclareDocumentCommand{\lstyping}{ o o o m m m m } {
  \lsjudge[#1][#2][#3]{#4}{#5 : #6 \at{#7}}
}

\DeclareDocumentCommand{\lstyequiv}{ o o o m m m m m } {
  \lsjudge[#1][#2][#3]{#4}{#5 \approx #6 : #7 \at{#8}}
}

\DeclareDocumentCommand{\lsgsubeq}{ o o m m m } {
  \lspjudge[#1][#2]{}{#3 \approx #4 : #5}
}

\DeclareDocumentCommand{\lsgsubst}{ o o m m } {
  \lspjudge[#1][#2]{}{#3 : #4}
}

\DeclareDocumentCommand{\lsctxwf}{ o o m m } {
  \lspjudge[#1][#2]{#3}{#4}
}

\DeclareDocumentCommand{\lsctxeq}{ o o m m m } {
  \lspjudge[#1][#2]{#3}{#4 \approx #5}
}

\DeclareDocumentCommand{\lssubeq}{ o o o m m m m } {
  \lsjudge[#1][#2][#3]{#4}{#5 \approx #6 : #7}
}

\DeclareDocumentCommand{\lssubst}{ o o o m m m } {
  \lsjudge[#1][#2][#3]{#4}{#5 : #6}
}

\DeclareDocumentCommand{\ldStypeq} { o o m m m m m m } {
  \ldSjudge[#1][#3][#2]{#4}{#5}{#6 \approx #7 \at{#8}}
}

\DeclareDocumentCommand{\ldStyequiv} { o o m m m m m m } {
  \ldSjudge[#1][#3][#2]{#4}{#5}{#6 \approx #7 : \tEl(#8)}
}

\DeclareDocumentCommand{\ldStypwfge} { o o o m m m m } {
  \ldjudge[#1][#2][#3]{\ge #4}{#5}{#6 \at{#7}}
}

\DeclareDocumentCommand{\ldStypingge} { o o o m m m m m } {
  \ldjudge[#1][#2][#3]{\ge #4}{#5}{#6 : #7 \at{#8}}
}

\DeclareDocumentCommand{\ldStypeqge} { o o o m m m m m } {
  \ldjudge[#1][#2][#3]{\ge #4}{#5}{#6 \approx #7 \at{#8}}
}

\DeclareDocumentCommand{\ldStyequivge} { o o o m m m m m m } {
  \ldjudge[#1][#2][#3]{\ge #4}{#5}{#6 \approx #7 : #8 \at{#9}}
}

\DeclareDocumentCommand{\ldSctxeqge} { o o m m m m } {
  \ldpSjudge[#1][#2]{\ge #3}{#4}{#5 \approx #6}
}

\DeclareDocumentCommand{\ldSctxwfge} { o o m m m } {
  \ldpSjudge[#1][#2]{\ge #3}{#4}{#5}
}

\DeclareDocumentCommand{\ldSsubeqge} { o o o m m m m m } {
  \ldjudge[#1][#2][#3]{\ge #4}{#5}{#6 \approx #7 : #8}
}

\DeclareDocumentCommand{\ldSsubstge} { o o o m m m m } {
  \ldjudge[#1][#2][#3]{\ge #4}{#5}{#6 : #7}
}

\DeclareDocumentCommand{\ldSctxeq} { o o m m m m } {
  \ldpSjudge[#1][#2]{#3}{#4}{#5 \approx #6}
}

\DeclareDocumentCommand{\ltSjudge}{ o o o m m m } {
  \triplejudge[#1][#2][#3]{\vDash_{#4}^{#5} #6}
}

\DeclareDocumentCommand{\ltStypwf} { o o m m m m m } {
  \ltSjudge[#1][#3][#2]{#4}{#5}{#6 \at{#7}}
}

\DeclareDocumentCommand{\ltStypeq} { o o m m m m m m } {
  \ltSjudge[#1][#3][#2]{#4}{#5}{#6 \approx #7 \at{#8}}
}

\DeclareDocumentCommand{\ltSctxwf} { o m m } {
  \lpgenjudge[#2][#1]{\vDash_{p}^{p}}{#3}
}

\DeclareDocumentCommand{\ltSctxeq} { o m m m } {
  \lpgenjudge[#2][#1]{\vDash_{p}^{p}}{#3 \approx #4}
}

\DeclareDocumentCommand{\ldSsubeq} { o o m m m m m m } {
  \ldSjudge[#1][#3][#2]{#4}{#5}{#6 \approx #7 : #8}
}

\DeclareDocumentCommand{\ldSgctx}{ o m } {
  \IfNoValueTF {#1}
  {\genjudge{L}{\sttstileA}{#2}}
  {\genjudge{#1}{\sttstileA}{#2}}
}

\DeclareDocumentCommand{\ldSgctxeq}{ o m m } {
  \IfNoValueTF {#1}
  {\genjudge{L}{\vDash}{#2 \approx #3}}
  {\genjudge{#1}{\vDash}{#2 \approx #3}}
}

\DeclareDocumentCommand{\ldSgsubst}{ o o m m } {
  \lpgenjudge[#1][#2]{\sttstileA}{#3 : #4}
}

\DeclareDocumentCommand{\ldSgsubeq}{ o o m m m } {
  \lpgenjudge[#1][#2]{\sttstileA}{#3 \approx #4 : #5}
}

\DeclareDocumentCommand{\ldssubst} { o o o m m m  } {
  \dsjudge[#1][#2][#3]{_{#4} #5 : #6}
}

\DeclareDocumentCommand{\ldstypwf} { o o o m m m  } {
  \dsjudge[#1][#2][#3]{_{#4} #5 \at{#6}}
}

\DeclareDocumentCommand{\ldstyping} { o o o m m m m } {
  \dsjudge[#1][#2][#3]{_{#4} #5 : #6 \at{#7}}
}

\DeclareDocumentCommand{\ltsemjudge} { o o o m m m } {
  \triplejudge[#1][#2][#3]{\sttstile{}{}_{#4}^{#5}}{#6}
}

\DeclareDocumentCommand{\ltsemtypwf} { o o o m m m m } {
  \ltsemjudge[#1][#2][#3]{#4}{#5}{#6 \at{#7}}
}

\DeclareDocumentCommand{\ltsemtyping} { o o o m m m m m } {
  \ltsemjudge[#1][#2][#3]{#4}{#5}{#6 : #7 \at{#8}}
}

\DeclareDocumentCommand{\ltsemsubst} { o o o m m m m } {
  \ltsemjudge[#1][#2][#3]{#4}{#5}{#6 : #7}
}

\DeclareDocumentCommand{\lsemtyeqp}{ o o m m m m m } {
  \dsemjudge[#1][#2]{_{#3}^{#4} {#5} \approx #6 : #7}
}

\DeclareDocumentCommand{\lsemtyeq}{ o o m m m m } {
\dsemjudge[#1][#2]{_1^{#3} {#4} \approx #5 : #6}
}

\DeclareDocumentCommand{\lsemtypingc}{ o o m m } {
\dsemjudge[#1][#2]{_0^0 {#3} : #4}
}

\DeclareDocumentCommand{\lsemtypingp}{ o o m m } {
  \dualjudge[#1][#2]{;}{\vDash_1^0}{#3 : #4}
}

\DeclareDocumentCommand{\lsemvtyeq}{ o o m m m m } {
  \dSemjudge[#1][#2]{_{#3} #4 \approx #5 : #6}
}

\DeclareDocumentCommand{\msemtyeq}{ o m m m } {
  \msemjudge[#1]{#2 \approx #3 : #4}
}
\DeclareDocumentCommand{\msemtyp}{ o m m } {
  \msemjudge[#1]{#2 : #3}
}

\DeclareDocumentCommand{\mgluty}{ o m o m } {
  \IfNoValueTF {#3}
  {\mjudge[#1]{#2 \; \circledR \; #4}}
  {\mjudge[#1]{#2 \; \circledR_{#3} \; #4}}
}

\DeclareDocumentCommand{\mglutm}{ o m m m o m } {
  \IfNoValueTF {#5}
  {\mjudge[#1]{#2 : #3 \; \circledR \; #4 \in \El(#6)}}
  {\mjudge[#1]{#2 : #3 \; \circledR_{#5} \; #4 \in \El_{#5}(#6)}}
}

\DeclareDocumentCommand{\mglunat}{ o m m } {
  \mjudge[#1]{#2 : \Nat \; \circledR \; #3 \in Nat}
}

\DeclareDocumentCommand{\mglutms}{ o m m m } {
  \mjudge[#1]{#2 : #3 \; \circledR \; #4}
}

\DeclareDocumentCommand{\mglutyu}{ o m o m } {
  \IfNoValueTF {#3}
  {\mjudge[#1]{#2 \; \overline{\circledR} \; #4}}
  {\mjudge[#1]{#2 \; \overline{\circledR}_{#3} \; #4}}
}

\DeclareDocumentCommand{\mglutmu}{ o m m m o m } {
  \IfNoValueTF {#5}
  {\mjudge[#1]{#2 : #3 \; \overline{\circledR} \; #4 \in \El(#6)}}
  {\mjudge[#1]{#2 : #3 \; \overline{\circledR}_{#5} \; #4 \in \El_{#5}(#6)}}
}

\DeclareDocumentCommand{\mglutmd}{ o m m m o m } {
  \IfNoValueTF {#5}
  {\mjudge[#1]{#2 : #3 \; \underline{\circledR} \; #4 \in \El(#6)}}
  {\mjudge[#1]{#2 : #3 \; \underline{\circledR}_{#5} \; #4 \in \El_{#5}(#6)}}
}

\DeclareDocumentCommand{\glunat}{ o m o m } {
  \IfNoValueTF {#3}
  {\judge[#1]{#2 \; \circledR \; #4 \in Nat}}
  {\judge[#1]{#2 \; \circledR_{#3} \; #4 \in Nat}}
}

\DeclareDocumentCommand{\gluty}{ o m o m } {
  \IfNoValueTF {#3}
  {\judge[#1]{#2 \; \circledR \; #4}}
  {\judge[#1]{#2 \; \circledR_{#3} \; #4}}
}

\DeclareDocumentCommand{\glutm}{ o m m m o m } {
  \IfNoValueTF {#5}
  {\judge[#1]{#3 : #2 \; \circledR \; #4 \in #6}}
  {\judge[#1]{#3 : #2 \; \circledR_{#5} \; #4 \in #6}}
}

\DeclareDocumentCommand{\asubtyp}{ o m m } {
  \judge[#1]{_r #2 \subseteq #3}
}
\DeclareDocumentCommand{\agsub}{ o m m } {
  \judge[#1]{_A #2 \subseteq #3}
}

\DeclareDocumentCommand{\ainfer}{ o m m } {
  \judge[#1]{_A #2 \Rightarrow #3}
}
\DeclareDocumentCommand{\acheck}{ o m m } {
  \judge[#1]{_A #2 \Leftarrow #3}
}

\newcommand{\uname}{~{::}~}

\tcbuselibrary{skins}

\newtcbox{\mybox}
{hbox, on line, colback=white, enhanced, frame hidden, boxrule=0pt, top=-1.5pt, bottom=-1.5pt, right=-2pt, left=-2pt, sharp corners, borderline={0.5pt}{0pt}{black,dashed}}

\newtcbox{\mysbox}
{hbox, on line, colback=white, enhanced, frame hidden, baseline=1.6mm, valign=center, height from=0.55cm to 0.65cm, boxrule=0pt, top=-1.2pt, bottom=-0.7pt, right=-1pt, left=-1pt, sharp corners, borderline={0.5pt}{0pt}{black}}

\newtcbox\tcbtp{hbox, on line, colback=light-gray, enhanced, frame hidden, boxrule=0pt, top=-2pt, bottom=-2pt, right=-2.5pt, left=-2.5pt, sharp corners}

\DeclareDocumentCommand{\ulvlhelper}{o m}{
  \IfNoValueTF{#1}
  {
    ^{{\scriptsize\medmuskip=2mu\tcbtp{\ensuremath{#2}}}
      }}
  {\ifnum#1=1
      ^{#2}
    \else{}
    \fi}
}

\DeclareDocumentCommand{\ucolor}{o m}{
  \IfNoValueTF{#1}
  {#2}
  {\ifnum#1=1
      \ca{#2}
    \else{\cb{#2}}
    \fi}
}

\DeclareDocumentCommand{\ulvl}{o o}{
  \IfNoValueTF{#1}
  {\ulvlhelper[#2]{i}}
  {\ulvlhelper[#2]{#1}}
}

\DeclareDocumentCommand{\uwfctx}{o o} {
  \IfNoValueTF{#1}
  {\ensuremath{\ucolor[#2]{\vdash \Gamma}}}
  {\ensuremath{\vdash #1}}
}
\DeclareDocumentCommand{\ueqctx}{o o} {
  \IfNoValueTF{#1}
  {\IfNoValueTF{#2}
    {\ensuremath{\vdash \Gamma \equiv \Delta}}
    {\ensuremath{\vdash \Gamma \equiv #2}}
  }
  {\IfNoValueTF{#2}
    {\ensuremath{\vdash #1 \equiv \Delta}}
    {\ensuremath{\vdash #1 \equiv #2}}}
}
\DeclareDocumentCommand{\utyping}{ o m o m o o} {
  {\ucolor[#6]{\judge[#1]{#2 :\ulvl[#3][#5] #4}}}
}
\DeclareDocumentCommand{\uutyping}{ o m m } {
  {\judge[#1]{#2 : #3}}
}
\DeclareDocumentCommand{\usubst}{ o m m } {
  {\judge[#1]{#2 : #3}}
}
\DeclareDocumentCommand{\usubstwk}{ o m m } {
  \IfNoValueTF {#1}
  {\Gamma \vdash #2 : #3}
  {#1 \vdash #2 : #3}
}
\DeclareDocumentCommand{\ueqtyp}{ o m m o m o o} {
  \ucolor[#7]{\judge[#1]{#2 \equiv #3 :\ulvl[#4][#6] #5}}
}
\DeclareDocumentCommand{\uueqtyp}{ o m m m } {
  \judge[#1]{#2 \equiv #3 : #4}
}

\DeclareDocumentCommand{\utypingsound}{ m m o m o} {
  {#1 \Vdash {#2 :\ulvl[#3][#5] #4}}
}

\DeclareDocumentCommand{\utypingcomp}{ m m o m} {
  {#1 \vDash {#2 :\ulvl[#3][1] #4}}
}
\DeclareDocumentCommand{\ueqtypcomp}{ m m m o m o} {
  {#1 \vDash {#2 \equiv #3 :\ulvl[#4][#6] #5}}
}

\DeclareDocumentCommand{\ueqeractx}{o o} {
  \IfNoValueTF{#1}
  {\IfNoValueTF{#2}
    {\ensuremath{\ca{\Gamma} \lfloor\equiv\rfloor \cb{\Gamma'}}}
    {\ensuremath{\ca{Gamma} \lfloor\equiv\rfloor #2}}
  }
  {\IfNoValueTF{#2}
    {\ensuremath{\ca{#1} \lfloor\equiv\rfloor \cb{#1'}}}
    {\ensuremath{\ca{#1} \lfloor\equiv\rfloor \cb{#2}}}}
}
\DeclareDocumentCommand{\ueqsubst}{ o m m m } {
  \judge[#1]{#2 \equiv #3 : #4}
}
\DeclareDocumentCommand{\ueqeratyp}{ o m m } {
  {\judge[#1]{\ca{#2} \lfloor \equiv \rfloor \cb{#3}}}
}
\DeclareDocumentCommand{\ueqerasubst}{ o m m } {
  {\judge[#1]{\ca{#2} \lfloor \equiv \rfloor \cb{#3}}}
}

\DeclareDocumentCommand{\ueralvl}{m o}{
  \IfNoValueTF{#2}
  {\lfloor \ca{#1} \rfloor = \cb{#1'}}
  {\lfloor #1 \rfloor = #2}
}

\newcommand{\shortsearrow}{%
  \mathrel{\vcenter{\hbox{\scalebox{0.7}{$\searrow$}}}}%
}

\DeclareDocumentCommand{\uset}{o}{
  \IfNoValueTF{#1}
  {\texttt{Set}_i}
  {\texttt{Set}_{#1}}
}

\newcommand{\ulogrelt}[3]{}

\newcommand{\synid}{\textsf{id}}
\newcommand{\synwk}{\textsf{wk}}
\newcommand{\synze}{\texttt{zero}}
\newcommand{\synsu}[1]{\texttt{succ}~#1}

\newcommand{\wfctx}[1]{\ensuremath{\vdash #1}}
\newcommand{\wfctxsub}[2]{\ensuremath{\vdash #1 \subseteq #2}}
\newcommand{\wfctxeq}[2]{\ensuremath{\vdash #1 \approx #2}}
\DeclareDocumentCommand{\substyping}{ o m m } {
  \judge[#1]{#2 : #3}
}
\DeclareDocumentCommand{\subeq}{ o m m m } {
  \judge[#1]{#2 \approx #3 : #4}
}

\newcommand{\dpi}[3]{\ensuremath{\Pi(#1, #2, #3)}}
\newcommand{\dlam}[2]{\ensuremath{\Lambda(#1, #2)}}
\newcommand{\dU}[1]{\ensuremath{\textsf{U}@#1}}
\newcommand{\dN}{\ensuremath{\textsf{N}}}
\newcommand{\dze}{\ensuremath{\textsf{ze}}}
\newcommand{\dsu}[1]{\ensuremath{\textsf{su}(#1)}}
\newcommand{\dneut}[2]{\ensuremath{\uparrow^{#1}\!(#2)}}
\newcommand{\dreif}[2]{\ensuremath{\downarrow^{#1}\!(#2)}}
\newcommand{\dvar}[1]{\ensuremath{l_{#1}}}
\newcommand{\dnatrec}[5]{\ensuremath{\textsf{rec}(#1, #2, #3, #4, #5)}}
\newcommand{\denvext}[2]{\ensuremath{\textsf{ext}(#1, #2)}}
\newcommand{\denvdrop}[1]{\ensuremath{\textsf{drop}(#1)}}

\newcommand{\evald}[3]{\ensuremath{\intp{#1}(#2) \shortsearrow #3}}
\newcommand{\evalf}[2]{\ensuremath{\intp{#1}(#2)}}
\newcommand{\evalsd}[3]{\ensuremath{\intp{#1}_{\mathsf{s}}(#2) \shortsearrow #3}}
\newcommand{\rnf}[3]{\ensuremath{\textsf{R}^{\textsf{Nf}}_{#1}(#2) \shortsearrow #3}}
\newcommand{\rne}[3]{\ensuremath{\textsf{R}^{\textsf{Ne}}_{#1}(#2) \shortsearrow #3}}
\newcommand{\rtyd}[3]{\ensuremath{\textsf{R}^{\textsf{Ty}}_{#1}(#2) \shortsearrow #3}}
\newcommand{\initenv}[1]{\ensuremath{\uparrow^{#1}}}

\newcommand{\nbed}[4]{\ensuremath{\nbe^{#2}_{#1}(#3) \shortsearrow #4}}
\newcommand{\nbetyd}[3]{\ensuremath{\nbe^{\mathsf{ty}}_{#1}(#2) \shortsearrow #3}}
\newcommand{\nbetyf}[2]{\ensuremath{\nbe^{\mathsf{ty}}_{#1}(#2)}}

\newcommand{\inper}[3]{\ensuremath{#1 \approx #2 \in #3}}
\newcommand{\perm}[4]{\ensuremath{#2 \approx #3 \in \Uc_{#1} \shortsearrow #4}}
\newcommand{\peruni}[3]{\ensuremath{#2 \approx #3 \in \Uc_{#1}}}
\newcommand{\Neud}{\ensuremath{\textsl{Neu}}}
\newcommand{\Natd}{\ensuremath{\textsl{Nat}}}
\newcommand{\perctxd}[3]{\ensuremath{#1 \approx #2 \shortsearrow #3}}
\newcommand{\perctxsub}[2]{\ensuremath{#1 <: #2}}
\newcommand{\releq}{\ensuremath{\Leftrightarrow}}

\DeclareDocumentCommand{\semsub}{ o m m } {
  \semjudge[#1]{#2 \subseteq #3}
}
\newcommand{\semctx}[1]{\ensuremath{\vDash #1}}
\newcommand{\semctxeq}[2]{\ensuremath{\vDash #1 \approx #2}}
\newcommand{\semctxsub}[2]{\ensuremath{\vDash #1 \subseteq #2}}

\DeclareDocumentCommand{\sndtyping}{ o m m } {
  \Semjudge[#1]{#2 : #3}
}
\DeclareDocumentCommand{\sndsubtyping}{ o m m } {
  \Semjudge[#1]{#2 : #3}
}
\newcommand{\sndctx}[1]{\ensuremath{\Vdash #1}}
\newcommand{\glumodels}[4]{\ensuremath{#4 \in \Uc^{*}_{#1} \shortsearrow #2 \shortsearrow #3}}
\DeclareDocumentCommand{\gluwk}{ o m m } {
  \judge[#1]{_{w} #2 : #3}
}

\DeclareDocumentCommand{\anfsub}{ o m m } {
  \judge[#1]{^{\Nf}_{r} #2 \subseteq #3}
}
\newcommand{\rfsub}[2]{\ensuremath{\vdash^{\Nf}_{rf} #1 \subseteq #2}}
\DeclareDocumentCommand{\arfsub}{ o m m } {
  \judge[#1]{_{rf} #2 \subseteq #3}
}
\DeclareDocumentCommand{\arfinfer}{ o m m } {
  \judge[#1]{_{rf} #2 \Rightarrow #3}
}
\DeclareDocumentCommand{\arfcheck}{ o m m } {
  \judge[#1]{_{rf} #2 \Leftarrow #3}
}

\newcommand{\nfaeq}[2]{\ensuremath{#1 \simeq #2}}
\newcommand{\compat}[2]{\ensuremath{#1 \mathbin{\dot{\sim}} #2}}
\newcommand{\dcompat}[2]{\ensuremath{#1 \mathbin{\dot{\sim}} #2}}
\newcommand{\daeq}[2]{\ensuremath{#1 \mathbin{\dot{\simeq}} #2}}
\newcommand{\nfsize}[1]{\ensuremath{|#1|}}
\newcommand{\dcompatrel}[3]{\ensuremath{#1 \mathbin{\dot{\sim}} #2 \shortsearrow #3}}

\newcommand{\qsub}[1]{\ensuremath{\mathsf{q}(#1)}}

\newcommand{\drecd}[6]{\ensuremath{\textsf{rec}~#1~\llbracket #2 \mid #3 \mid #4 \rrbracket_{#5} \shortsearrow #6}}

\newcommand{\gluctx}[2]{\ensuremath{#1 \in \mathcal{G} \shortsearrow #2}}

\newenvironment{rulepar}
  {\par\addvspace{1.5pt}\small\begin{mathparshrink}}
  {\end{mathparshrink}\par\addvspace{1.5pt}}
\DeclareDocumentCommand{\sndsub}{ o m m m }{\judge[#1]{#2 \; \circledR \; #3 \in #4}}

\newcommand{\khlmath}[1]{\hlmath{#1}}
\newcommand{\phlmath}[1]{{\color{gray!60}\ensuremath{#1}}}

\newcommand{\mechx}[2]{#1}
\newcommand{\nbesize}[1]{\ensuremath{|#1|}}

\setlist[itemize]{leftmargin=*}
\setlist[enumerate]{leftmargin=*}

\usetikzlibrary{arrows.meta,calc}

%% file: sections/introduction.tex
\section{Introduction}\label{sec:intro}

Type theory is the theoretical foundation behind several popular proof assistants, e.g. Rocq~\citep{rocq}, Agda~\citep{agda},
Lean~\citep{DBLP:conf/cade/MouraKADR15,DBLP:conf/cade/Moura021}, and
Idris~\citep{DBLP:journals/jfp/Brady13}. %
In a type theory, the programming language and the proof language are unified as one, coined as the principle of Curry-Howard Isomorphism. %
The principle enables the reading of a type as a logical proposition, and therefore, a proof is simply a term inhabiting that type. %
This principle is powerful: we can use one algorithm for checking the well-typedness of programs and the validity of proofs. %

Expressing programs and proofs in one language, however, does come at a cost. %
Unlike classical higher-order logic, where quantifiers can quantify over any higher-order predicate, in type theory, we are forced to organize types in a stratified universe hierarchy due to its computational nature, so that a $\Pi$ type (the type-theoretic counterpart for a universal quantifier) can only quantify types at lower universe levels. %
Typically, universes are organized in two ways, cumulative and non-cumulative. %
In a cumulative universe hierarchy, e.g. in Rocq and Idris, a type, say $\Nat : \Ty{0}$, automatically resides at all higher universe levels, i.e. $\Nat : \Ty{n}$ for any $n$. %
Meanwhile, in a non-cumulative setting, e.g. in Agda and Lean, every type is locked at one universe level, so the flexibility in universe levels is exchanged for the type uniqueness property, stating that a well-typed term has one unique type up to syntactic equivalence. %
Nevertheless, cumulativity is convenient and matches the intuition for many in a set-theoretic reading, where universes are organized as an infinite tower of inclusions. 

In reflection of a series of inclusions of universes, the syntactic theory of a type theory often incorporates \emph{subtyping}~\citep{mctt,DBLP:journals/corr/abs-1710-03912,DBLP:phd/hal/Laurent25,luo_coercive_1999,laurent_functoriality_2024}. %
The subtyping judgment relaxes universe levels to arbitrarily higher ones and allows this relaxation to propagate under $\Pi$ types. %
Nevertheless, in an actual implementation of a dependent type theory with universe subtyping, e.g. in Rocq, subtyping is defined rather conservatively. %
The $\Pi$ subtyping rule is typically defined \emph{covariantly}, only propagating the subtyping relation in the output types:
\begingroup
\setlength{\abovedisplayskip}{2pt plus 1pt minus 1pt}
\setlength{\belowdisplayskip}{2pt plus 1pt minus 1pt}
\[
    \inferrule
    {\tyequiv{A}{A'}{\Ty i} \\ \subtyp[\Gamma, x : A']{B}{B'}}
    {\subtyp{\usynPi{x}{A}{B}}{\usynPi{x}{A'}{B'}}}
\]
\endgroup
In this rule, the input types $A$ and $A'$ are equivalent, and only $B$ is required to be a subtype of $B'$ in an extended context with $A'$. %
Semantically, this rule is coherent with the set-theoretic interpretation~\citep{DBLP:journals/corr/abs-1710-03912} of $\Pi$ types as sets of tuples of inputs and outputs. %
This subtyping rule effectively relaxes the second component types of the tuples, and therefore is naturally correct. %
Whereas in the contravariant case, the rule requires increasing care:
\begingroup
\setlength{\abovedisplayskip}{2pt plus 1pt minus 1pt}
\setlength{\belowdisplayskip}{2pt plus 1pt minus 1pt}
\[
    \inferrule
    {\khlmath{\subtyp{A'}{A}} \\ \typing[\Gamma, x: A]{B}{\Ty i} \\ \subtyp[\Gamma, x : A']{B}{B'}}
    {\subtyp{\usynPi{x}{A}{B}}{\usynPi{x}{A'}{B'}}}
\]
\endgroup
The subtyping judgment is now contravariant, because we also compare the input types. %
The input types $A'$ is a subtype of $A$, even though it comes from the right-hand side in the conclusion. %
We are no longer sure about the well-formedness of $B$ judgmentally, so we must add that as the second premise. %
Finally, $B$ and $B'$ are still compared in the same way, but since $A' \subseteq A$ now holds, this comparison in fact is conducted in a \emph{narrower} or sub-context for $B$. %
The shrink of the context of $B$ and $B'$ causes a major challenge in our technical development. %
Contravariant subtyping also has unexpected effects on the syntactic equivalence relation and $\beta\eta$ normal forms. %
Consider a function $f : \usynPi{x}{\Ty 1}{\Ty 1}$. %
Normalizing this function gives us an $\eta$-expanded form: $\lambda (x: \Ty 1). f~x$. %
This normal form stays the same with covariant subtyping: it has type $\usynPi{x}{\Ty 1}{\Ty n}$ for any $n$. %
With contravariant subtyping, the normal form must take input types into account. %
If we explicitly annotate $f : \usynPi{x}{\Ty{\khlmath{0}}}{\Ty 2}$, then the correct normal form becomes $\lambda (x: {\Ty{\khlmath{0}}}). f~x$. %
In addition, the $\eta$ expansion rule allows us to show that the equivalence between two normal forms: $\lambda (x: \Ty 1). f~x \approx f \approx \lambda (x: \Ty 0). f~x:  \usynPi{x}{\Ty 0}{\Ty 2}$.  %
This newly introduced phenomenon shows that a term can be equivalent to two unequal ($\eta$-long) normal forms; contravariant subtyping breaks the unique representation of normal forms. %

In this paper, we give a complete and mechanized account for contravariant universe subtyping. %
We develop \mclttsub, based on \mcltt~\citep{mctt}, a full verification pipeline for a correct implementation of a proof assistant in Rocq. %
One immediate benefit of extending \mcltt is a verified and efficient executable for our extension. %
The changes to the syntactic system, the normalization-by-evaluation (NbE) algorithm, and the semantic definitions are moderate. %
We adapt the semantic subtyping to also consider subtyping of input types. %
The challenge arrives when we have to fix the $\Pi$ case in the subtyping algorithm. %
In \mcltt, the algorithm proceeds by first normalizing both types, and then structurally comparing the results based on their shapes.  %
In the $\Pi$ case, since the input types are required to be equivalent, the output types can be safely compared recursively. %
This algorithm has been proven sound and complete w.r.t. to the covariant subtyping judgment. %
However, in \mclttsub, the completeness of this algorithm is no longer clear. %
Consider two types in normal forms, ${\usynPi{x}{A}{B}}$ and ${\usynPi{x}{A'}{B'}}$. %
Contravariance requires to first compare $A'$ with $A$. %
Then the algorithm would compare $B$ and $B'$, but $B$ comes from a normalization in an extended context with $A$, while the judgment requires a comparison in a context with the subtype $A'$. %
It is therefore not by an immediate argument why it suffices to compare $B$'s normal type obtained in a super-context. %
This observation thus motivates the first subtyping algorithm: we just re-normalize $B$ again in a context extended with $A'$. %
This algorithm effectively forces completeness onto the $\Pi$ case, but iterative renormalizations induce an increased computational cost. %

Revising the syntactic judgments and the normalization algorithm, we notice that (1) universe subtyping only alters the universe levels; in particular, the shapes of types are unchanged, and (2) the normalization process is almost oblivious to universe levels. %
Hence, we postulate that normalizing a term in a sub-context always produces a normal form that only differs in $\lambda$'s type annotation. %
Nonetheless, proving this property using a semantic method is of great difficulty, because the subtyping algorithm effectively requires to compare two results of normalization in a sub-context and in a super-context.  %
Whereas, the existing semantic approaches are only capable of reasoning about normalizations in equivalent ones. %
Luckily, we further realize that different normalization passes in sub-contexts share \emph{the same execution traces}. %
Therefore, we can \emph{bisimulate} the program traces, and prove the desired property via an elementary program equivalence proof in a pleasantly unexpected place. %
This approach leads us to simple completeness and soundness proofs of the subtyping algorithm, eliminating the needs to do renormalization. %

In summary, this paper makes the following contributions:
\begin{enumerate}[nosep,leftmargin=*]
    \item We develop \mclttsub by extending \mcltt to support contravariant subtyping for $\Pi$ types. %
    We adapt the syntactic theory, the normalization-by-evaluation (NbE) algorithm, and its completeness and soundness proof to this setting. %
    Leveraging \mcltt's verification pipeline, we extract a verified type-checker against the declarative syntactic judgments, which allows us to inspect contravariant subtyping in real action. 
    \item We develop a subtyping algorithm with \emph{renormalization}. %
    Renormalization occurs to the output types in the $\Pi$ case, and ensures the completeness of the algorithm. %
    However, since the NbE algorithm already produces $\beta\eta$ normal forms,  we observe that this algorithm is not optimal in computational cost. %
    \item We develop another subtyping algorithm which removes the renormalization step. %
    The algorithm simply recurses on the structure of normal forms. %
    It is a moderate extension of the original subtyping algorithm from \mcltt, hence computationally efficient. %
    We prove that this algorithm is complete via a bisimulation proof by keeping track of program traces of NbE, which successfully circumvents the difficulties of designing a semantic model. %
    We believe that program equivalence is a technique rarely used in type theory, hence providing a fresh perspective in this area. 
\end{enumerate}

The rest of the paper proceeds as follows. %
\Cref{sec:overview} gives a high-level overview of the ideas in this paper. %
\Cref{sec:syntax} presents \mclttsub, our extension of MLTT with contravariant universe subtyping, and its syntactic properties. %
\Cref{sec:nbe-per} recalls the NbE algorithm together with the PER model, and adapts the semantic subtyping to the contravariant setting. %
\Cref{sec:algo} explains how NbE yields a type-checking procedure modulo a correct subtyping algorithm. %
\Cref{sec:renorm,sec:renorm-free} develop two subtyping algorithms, one with renormalization and one without, and prove their correctness. %
\Cref{sec:discussion,sec:related} discuss our findings and related work, and \Cref{sec:conclusion} concludes.
Our mechanization in Rocq, developed with the assistance of Claude Opus, can be found in the accompanying artifact. %

%% file: sections/overview.tex
\section{Bird's-eye View of \mclttsub}\label{sec:overview}

Before diving into the technical details, this section introduces \mclttsub and the key ideas behind.

\subsection{\mclttsub by Examples}

\begingroup

One immediate advantage that we can take by extending \mcltt is a verified executable. %
We can feed our previous example in \Cref{sec:intro} to \mclttsub to see contravariant subtyping in real action: 
\begin{codeexample}
Parsed:
fun (f : forall (x : Type@1) -> Type@1) -> f
  : forall (f : forall (x : Type@1) -> Type@1) (x : Type@0) -> Type@2
Normalized Result:
fun (x1 : forall (A1 : Type@1) -> Type@1) \colorbox{yellow!30}{(A2 : Type@0)} -> x1 A2
  : forall (x1 : forall (A1 : Type@1) -> Type@1) (A2 : Type@0) -> Type@2
\end{codeexample}
In spite of the name shuffling by the pretty printer, we can indeed see that the function \texttt{f} (renamed to \texttt{x1}) is $\eta$-expanded with an input argument \texttt{A2 : Type@0}, aligned with our previous analysis. %
The following example shows that contravariant subtyping is successfully triggered in a function application \texttt{x2 x1}. %
The argument $x1$ is $\eta$-expanded with an argument \texttt{A3 : Type@0}: 
\begin{codeexample}
Normalized Result:
  fun (x1 : forall (A1 : Type@1) -> Type@1)
      (x2 : forall (x3 : forall (A2 : Type@0) -> Type@2) -> Type@3)
    -> x2 \colorbox{yellow!30}{(fun (A3 : Type@0) -> x1 A3)}
  : forall (x1 : forall (A1 : Type@1) -> Type@1)
      (x2 : forall (x3 : forall (A2 : Type@0) -> Type@2) -> Type@3) -> Type@3
\end{codeexample}
As the final example, we inspect the interactions between the recursor and $\eta$ expansion. %
The example program introduces a bogus \texttt{zero} test to interfere with the expansion process. %
Morally, it encodes \texttt{if x == zero then fun z -> z else fun z -> z}. %
A legal \mclttsub expression requires an explicit motive and the use of the recursor:
\begin{codeexample}
Parsed:
  fun (x : Nat)
    -> rec x return y . forall (z : Type@1) -> Type@2
       | zero => fun (z : Type@2) -> z
       | succ n, r => fun (z : Type@1) -> z end
  : forall (x : Nat) (z : Type@0) -> Type@3
\end{codeexample}
We invoke the recursor \texttt{rec} on \texttt{x} to simulate the \texttt{if} expression, and specify the motive to be \texttt{y . forall (z : Type@1) -> Type@2} where the scrutinee \texttt{y} is unused.  %
In the \texttt{zero} branch, we provide an identity function where the input type is at level \texttt{2} to trigger contravariance. %
In the \texttt{succ} branch, the identity function has the input type at level \texttt{1}. %
Finally, we ask the result of the elimination to have type \texttt{forall (z : Type@0) -> Type@3}. %
\mclttsub normalizes this term to
\begin{codeexample}
Normalized Result:
  fun (x1 : Nat) \colorbox{yellow!30}{(A1 : Type@0)}
    -> (rec x1 return x2 . forall (A2 : Type@1) -> Type@2
        | zero => fun \colorbox{yellow!30}{(A3 : Type@1)} -> A3
        | succ x3, x4 => fun (A4 : Type@1) -> A4
        end) A1
  : forall (x1 : Nat) (A1 : Type@0) -> Type@3
\end{codeexample}
The normalizer rearranges both identity functions to take \texttt{Type@1}, and the overall recursion is $\eta$-expanded to take an input of \texttt{Type@0} as required by the final type. %
Note that with contravariance, the motive \emph{is not} altered; expressions in recursor arms are checked against the motive rather than the overall type. 

\endgroup

\subsection{A Journey to Contravariance in $\Pi$ types}\label{sec:overview:journey}

The original \mcltt only support \emph{covariant} subtyping\footnote{To be precise, the subtyping is invariant on input types and covariant on output types.} for $\Pi$ types, where the input types are compared \emph{equivalent}. %
This exact formalization is also employed in Rocq. %
Covariant subtyping leads to a subtyping algorithm that is easy to justify: a simple recursive comparison after normalization. %
The comparison between the output types is clearly complete, because they both live in equivalent contexts. %
A question naturally arises: why stop at covariance? %
According to the set-theoretic semantics of the Calculus of Inductive Constructions (CIC) outlined by~\citet{DBLP:journals/corr/abs-1710-03912},   subtyping is interpreted as set inclusions. %
Therefore, assigning the semantics of $\Pi$ types to sets of input and output pairs is coherent with this interpretation. %
Unfortunately, contravariance is incompatible with this semantics, and thus does not have a justifying semantic model. %

In this paper, we give a fully mechanized account for contravariance. %
Extending \mcltt to \mclttsub first requires changes in the syntactic theory, the semantic partial equivalence relation (PER) model, and the correctness proof of the NbE algorithm. %
These changes are moderate. %
The change to the syntactic theory only involves updating the subtyping rule for $\Pi$ types to compare the input types contravariantly. %
This update is also reflected in the semantic subtyping predicate. %
This change complicates several properties related to semantic subtyping, e.g. transitivity. %
Since input and output types are now compared in different directions, transitivity requires a proof technique akin to that of transitivity in System F$_{<:}$~\citep{DBLP:conf/tacs/CardelliMMS91}, where we must do induction on a measure predicate on the middle type. %
Fixing semantic judgment paves a smooth way to establishing the completeness and soundness proof of the NbE algorithm. %

Nevertheless, it is not straightforward to adjust the type-checker, central to which is the subtyping algorithm. %
In \mcltt, the subtyping algorithm is organized as follows:
\begingroup
\setlength{\abovedisplayskip}{2pt plus 1pt minus 1pt}
\setlength{\belowdisplayskip}{2pt plus 1pt minus 1pt}
\[
    \inferrule
    {\vdash^\Nf_A{\nbetyf \Gamma A} \subseteq {\nbetyf \Gamma B}}
    {\agsub{A}{B}}
    \qquad
    \inferrule
    {W_1 = W_1' \\ \vdash^\Nf_A{W_2} \subseteq {W_2'}}
    {\vdash^\Nf_A{\usynPi{x}{W_1}{W_2}} \subseteq {\usynPi{x}{W_1'}{W_2'}}}
\]
\endgroup
where $\nbetyf \Gamma A$ normalizes $A$ as a type in context $\Gamma$. %
Our initial attempt is to replace the premise $W_1 = W_1'$ with $\vdash^\Nf_A{W_1'} \subseteq {W_1}$. %
However, we immediately bump into a problem with the completeness theorem,  which states that all syntactic subtypes are admitted by the algorithm. %
In the $\Pi$ case, we are not able to apply the induction hypothesis, because $W_2$ is a normal type obtained in $\Gamma, x : W_1$, while the IH requires a type in $\Gamma, x : W_1'$, which is now a sub-context. %
This mismatch means that with contravariance, we need to reason about normal forms obtained from \emph{unequivalent} contexts, which is beyond what the current semantic toolkit is capable of. 

\subsection{Algorithm 1: Renormalization}

The key insight comes from analyzing why the IH got stuck: if the IH requires a normal type from a sub-context, we simply give it by running normalization again. %
Thus the first working algorithm that we propose is to renormalize the output types in the subtyping algorithm. %
Specifically,
\begingroup
\setlength{\abovedisplayskip}{2pt plus 1pt minus 1pt}
\setlength{\belowdisplayskip}{2pt plus 1pt minus 1pt}
\[
    \inferrule
    {\anfsub{W_1'}{W_1} \\ \anfsub[\Gamma, x : W_1']{\nbetyf{\Gamma, x : W_1'}{W_2}}{W_2'}}
    {\anfsub{\usynPi{x}{W_1}{W_2}}{\usynPi{x}{W_1'}{W_2'}}}
\]
\endgroup
Instead of comparing $W_2$ with $W_2'$ directly, we renormalize $W_2$ in the desired context $\Gamma, x : W_1'$ before the comparison. The IH then applies, and completeness follows. %
This renormalization strategy, however, is better suited for \emph{weak head} normalization: in that setting, the output types are not yet fully normalized, so it is natural to resume normalization in the right context and compare the results, exactly as done by \citet{abel_normalization_2017}. %
In our setting, by contrast, the NbE algorithm already returns $\beta\eta$ normal forms, so renormalizing an already normalized term incurs unwanted computational cost. %
Nonetheless, this algorithm still serves as a baseline verified checker that handles contravariant subtyping. Running it 
on several examples also provides us with insights to develop a renormalization-free algorithm discussed in the next subsection.

\subsection{Algorithm 2: Renormalization-free}

Our initial attempt in \Cref{sec:overview:journey} is a more suitable algorithm in many ways: it has a simpler implementation, and is more efficient. %
We would like to show that normalization in a sub-context results in a normal form only differs in universe levels. %
Yet we need to reason about normalization passes in different contexts. %
A semantic model for this property must simultaneously subsume semantic subtyping and generalize the PER model to pair semantic types at different universe levels. %
Giving a definition that satisfies both requirements is already challenging; a fully semantic proof in addition requires the whole proof arch to define semantic judgments, and to prove fundamental theorems, after which finally can we conclude our desired property. %
The effort is quite laborious. 

Light is shed when we think out of the box. %
We would like to take advantage of (1) that ultimately subtyping only compares universe levels, and (2) that cumulativity allows normalization to neglect universe levels most of the time. %
Therefore, instead of insisting on a typical semantic approach, we take an unexpected route: a program equivalence proof via \emph{bisimulation} relations. %
The bisimulation relations are defined in two layers: in the first layer, we relate type annotations that may differ in universe levels, and in the second, we relate semantic values that must have equal universe levels. %
Then we prove that bisimulated, type-annotated values have equal normal forms up to $\lambda$'s type annotations. %
Essentially, the bisimulation relations capture the program traces of the NbE algorithm. %
We are left to show that normalization in sub-contexts is captured by the bisimulation, using which we conclude that the desired subtyping algorithm is indeed complete. %

%% file: sections/syntax.tex
\section{MLTT with Contravariant Universe Subtyping}\label{sec:syntax}

In this section, we define the syntax and the declarative judgments of our
type theory: a core MLTT with natural numbers, $\Pi$ types, explicit
substitutions, a full cumulative universe hierarchy, and (the focus of this
paper) a \emph{contravariant} subtyping judgment for $\Pi$ types. The system
extends \mcltt~\cite{mctt}, and the syntax mostly follows it.

\subsection{Syntax}

\begingroup
\setlength{\abovedisplayskip}{2pt plus 1pt minus 1pt}
\setlength{\belowdisplayskip}{2pt plus 1pt minus 1pt}
\[
  \begin{array}{l@{\quad}l@{~}c@{~}l}
    \multicolumn{4}{l}{\text{Variables}\quad x, y, r \qquad \text{De Bruijn indices}~ \N \ni d \qquad \text{Universe levels}~ \N \ni i, j, k} \\
    \text{Contexts}         & \Gamma , \Delta  & ::= &
    \quad \cdot \mid \Gamma , x : A
    \\
    \text{Expressions}      & A, B, C,      & ::= & \quad
    x_d \mid \Nat \mid \Ty{i} \mid \usynPi{x}{A}{B}
    \\ & M, N && \quad \mid \synze \mid \synsu M \mid \recn{x.A}{M_{\mathsf{z}}}{y,r.M_{\mathsf{s}}}{N}
    \\ &&& \quad \mid \usynlam{x}{A}{M} \mid M~N \mid M[\sigma]
    \\
    \text{Substitutions}    & \sigma , \delta & ::= & \quad \synid \mid \synwk \mid \sigma\circ \delta \mid \sigma , M/x_0
    \\
    \text{Normal forms}     & V, W             & ::= & \quad  U \mid \Nat \mid \Ty{i} \mid \usynPi{x}{W_1}{W_2} \mid \synze \mid \synsu W \mid \usynlam{x}{W_1}{W_2}
    \\
    \text{Neutral forms}    & U                      & ::= & \quad x_d \mid U~W \mid \recn{x.W}{W_{\mathsf{z}}}{y,r.W_{\mathsf{s}}}{U}
  \end{array}
\]
\endgroup

In the mechanization, variables are de Bruijn indices; in the text we attach
abstract names for readability and write the index as a subscript $x_d$ only
when it matters. We use $M$ and $N$ for expressions that are terms, and $A$,
$B$, and $C$ for those that are types; as usual in dependent type theories,
both inhabit one syntactic sort. The universe hierarchy $\Ty{i}$ is indexed by
meta-level natural numbers. The recursor
$\recn{x.A}{M_{\mathsf{z}}}{y,r.M_{\mathsf{s}}}{N}$ eliminates the natural
number $N$ with motive $x.A$, base case $M_{\mathsf{z}}$, and step case
$y,r.M_{\mathsf{s}}$ where $y$ binds the predecessor and $r$ the recursive
result. Functions are Church-style: $\lambda$-abstractions carry their input-type
annotation. Substitutions are explicit, following
\citet{DBLP:journals/jfp/AbadiCCL91}: identity $\synid$, weakening $\synwk$,
composition $\sigma \circ \delta$, and extension $\sigma, M/x_0$, which sends
the topmost variable to $M$ and behaves as $\sigma$ on the rest.

Normal and neutral forms are mutually defined as the usual restriction of
expressions: neutral forms are variables and eliminations blocked by a neutral
scrutinee; normal forms are types and introduction forms with normal
components, together with neutral forms.

\subsection{Judgments}

Our system is given by seven \emph{mutually defined} judgments: context
well-formedness $\wfctx{\Gamma}$ and context subtyping
$\wfctxsub{\Gamma}{\Delta}$ ($\Gamma$ is a sub-context of $\Delta$); typing
$\typing{M}{A}$ and substitution typing $\substyping{\sigma}{\Delta}$ ($\sigma$
is a well-formed substitution from $\Delta$ to $\Gamma$); the two corresponding
equivalences $\tyequiv{M}{M'}{A}$ and $\subeq{\sigma}{\sigma'}{\Delta}$; and
subtyping $\subtyp A B$ ($A$ is a subtype of $B$ in $\Gamma$). All are taken in
an ambient context $\Gamma$.
The typing and equivalence rules for terms and substitutions are standard and
identical to \mcltt~\cite{mctt}; \Cref{app:declarative} collects them in full,
so we do not repeat them here. \Cref{fig:subtyping}
presents the rules that involve subtyping, with two highlighting
conventions used throughout the paper. Premises in
\phlmath{\text{light gray}} are redundant in the presence of the
presupposition lemma (\Cref{thm:syntax:presup}); they are included so that the
metatheory developed by induction over syntactic judgments can have additional induction hypotheses given by them. 
Premises on a \khlmath{\text{gray background}} (\emph{highlighted}) mark the substance
of this work: the contravariant changes relative to \mcltt.

\begin{figure}
  \makeatletter\setlength{\mpr@andskip}{0.25em plus 0.3fil minus 0.2em}\makeatother
  \raggedright
  \mysbox{$\typing{M}{A}$} \hfill $M$ has type $A$ in $\Gamma$
  \begin{mathparshrink}
    \inferrule
    {\wfctx \Gamma}
    {\typing{\Ty i}{\Ty{(1 + i)}}}

    \inferrule
    {\typing M A \\
     \phlmath{\typing {A'} {\Ty i}} \\
     \subtyp A {A'}}
    {\typing M {A'}}

  \end{mathparshrink}
  \figrulesep

  \mysbox{$\substyping{\sigma}{\Delta}$} \hfill $\sigma$ is a substitution from $\Delta$ to $\Gamma$
  \begin{mathparshrink}
    \inferrule
    {\substyping \sigma \Delta \\
     \phlmath{\wfctx{\Delta'}} \\
     \wfctxsub{\Delta}{\Delta'}}
    {\substyping{\sigma}{\Delta'}}
  \end{mathparshrink}
  \figrulesep

  \mysbox{$\subtyp A B$} \hfill $A$ is a subtype of $B$ in $\Gamma$
  \begin{mathparshrink}
    \inferrule
    {\phlmath{\typing{A'}{\Ty i}} \\
     \tyequiv{A}{A'}{\Ty i}}
    {\subtyp{A}{A'}}

    \inferrule
    {\subtyp{A}{A'} \\ \subtyp{A'}{A''}}
    {\subtyp{A}{A''}}

    \inferrule
    {\wfctx \Gamma \\ i < j}
    {\subtyp{\Ty i}{\Ty j}}

    \inferrule
    {\phlmath{\typing{A}{\Ty i}} \\
     \phlmath{\typing{A'}{\Ty i}} \\
     \khlmath{\subtyp{A'}{A}} \\\\
     \typing[\Gamma, x : A]{B}{\Ty i} \\
     \phlmath{\typing[\Gamma, x : A']{B'}{\Ty i}} \\
     \khlmath{\subtyp[\Gamma, x : A']{B}{B'}}}
    {\subtyp{\usynPi{x}{A}{B}}{\usynPi{x}{A'}{B'}}}
  \end{mathparshrink}
  \figrulesep

  \mysbox{$\wfctxsub{\Gamma}{\Delta}$} \hfill $\Gamma$ is a sub-context of $\Delta$
  \begin{mathparshrink}
    \inferrule
    {~}
    {\wfctxsub{\cdot}{\cdot}}

    \inferrule
    {\wfctxsub{\Gamma}{\Delta} \\
     \typing{A}{\Ty i} \\
     \typing[\Delta]{A'}{\Ty i} \\
     \khlmath{\subtyp{A}{A'}}}
    {\wfctxsub{\Gamma, x : A}{\Delta, x : A'}}
  \end{mathparshrink}
  \vspace{-9pt}%
  \caption{Subsumption, subtyping, and context subtyping rules (selected rules).}
  \vspace{-5pt}%
  \figtailspace
  \label{fig:subtyping}
\end{figure}

\vspace{-5pt}
\paragraph{Subtyping}
The reflexivity, transitivity, and universe rules are standard: reflexivity
embeds definitional equivalence into subtyping (which
also subsumes the usual type-conversion rule), transitivity composes, and
$\subtyp{\Ty i}{\Ty j}$ for $i < j$ delivers cumulativity. The sole
departure is the $\Pi$ rule, which replaces the equivalent-input-type premise by
the contravariant comparison $\subtyp{A'}{A}$ motivated in
\Cref{sec:intro,sec:overview}, and then compares output types in the
narrowed context $\Gamma, x : A'$. Thus contravariance makes context subtyping
central: the premise $\subtyp{A'}{A}$ induces the genuine sub-context
$\wfctxsub{(\Gamma, x : A')}{(\Gamma, x : A)}$. Once narrowing is established
(\Cref{thm:syntax:narrowing}), the version that compares output types in the
wider context, $\subtyp[\Gamma, x : A]{B}{B'}$, is admissible.

\vspace{-0.5em}
\subsection{Syntactic Properties}\label{sec:syntax:prop}

We now collect the syntactic properties of the declarative system used
throughout the paper. They are proved by structural induction on derivations.
The main structural property is narrowing: all judgments are preserved when the
context shrinks along context subtyping.

\Needspace{3\baselineskip}
\vspace{-3pt}%
\begin{theorem}[\mechx{Narrowing}{ctxsub\_*}]\label{thm:syntax:narrowing}
  Suppose $\wfctxsub{\Delta}{\Gamma}$.
  \setlength{\multicolsep}{2pt}
  \begin{multicols}{2}
  \begin{itemize}[nosep,leftmargin=*]
  \item If $\typing M A$, then $\typing[\Delta] M A$;
  \item If $\tyequiv{M}{M'}{A}$, then $\tyequiv[\Delta]{M}{M'}{A}$;
  \item If $\substyping{\sigma}{\Gamma'}$, then $\substyping[\Delta]{\sigma}{\Gamma'}$;
  \item If $\subeq{\sigma}{\sigma'}{\Gamma'}$, then
    $\subeq[\Delta]{\sigma}{\sigma'}{\Gamma'}$;
  \item If $\subtyp A B$, then $\subtyp[\Delta] A B$.
  \end{itemize}
  \end{multicols}
\end{theorem}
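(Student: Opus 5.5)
We prove all five clauses simultaneously, by mutual induction on the derivation of the premise judgment ($\typing M A$, $\tyequiv{M}{M'}{A}$, $\substyping{\sigma}{\Gamma'}$, $\subeq{\sigma}{\sigma'}{\Gamma'}$, or $\subtyp A B$). The induction is generalized over the sub-context: for every derivation over $\Gamma$ and every $\Delta$ with $\wfctxsub{\Delta}{\Gamma}$, the same judgment holds over $\Delta$. This generality is needed because rules with binders ($\Pi$ formation, $\lambda$, the recursor motive and step case, and the $\Pi$ subtyping rule) have premises in an extended context $\Gamma, x : A$. To apply the induction hypothesis there, we extend the sub-context derivation to $\wfctxsub{\Delta, x : A}{\Gamma, x : A}$ using the second context-subtyping rule. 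This requires $\typing[\Delta]{A}{\Ty i}$, which we obtain from the induction hypothesis on the premise $\typing{A}{\Ty i}$ (supplied by the light-gray premises or the formation premise), together with $\subtyp[\Delta]{A}{A}$ by reflexivity. For the contravariant $\Pi$ rule, the output comparison already lives in $\Gamma, x : A'$, so we extend with $A'$ in exactly the same way.

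Most cases are routine: we apply the induction hypotheses to every premise and reapply the same rule. Rules whose only premise is $\wfctx\Gamma$ (universes, $\Nat$, $\synze$, the universe subtyping rule, $\synid$, $\synwk$) need $\wfctx\Delta$. We establish this as a preliminary lemma, by induction on $\wfctxsub{\Delta}{\Gamma}$, stating that its left side is well formed. This is immediate from the premises $\typing{A}{\Ty i}$ stored in the extension rule. The weakening substitution $\synwk$ from $\Gamma, x : A$ to $\Gamma$ is handled similarly: over $\Delta, x : A'$ we obtain $\synwk : \Delta$, and then use the substitution-subsumption rule with $\wfctxsub{\Delta}{\Gamma}$ to conclude $\synwk : \Gamma$. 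The same subsumption rule handles the substitution extension and identity cases. For $\synid$, we derive $\synid : \Delta$ and then subsume.

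The main obstacle is the variable rule. Suppose $x : A \in \Gamma$, so that $\typing{x}{A}$. In $\Delta$ the same position holds some $A_0$ with $A_0 \subseteq A$, but $A_0$ is only known to be a subtype of $A$ relative to the prefix of $\Delta$, whereas we need $\subtyp[\Delta]{A_0[\synwk^{k}]}{A[\synwk^{k}]}$. We therefore prove an auxiliary lookup lemma by induction on $\wfctxsub{\Delta}{\Gamma}$: if $x_d : A \in \Gamma$, then there is $A_0$ with $x_d : A_0 \in \Delta$, $\typing[\Delta]{x_d}{A_0}$, and $\subtyp[\Delta]{A_0}{A}$. In the step case, this lemma needs subtyping to be stable under weakening, which in turn relies on the substitution lemma for the typing judgments in the paper's explicit-substitution style. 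To avoid circularity, we prove weakening stability separately beforehand. With the lookup lemma in hand, the variable case follows by the subsumption rule. The well-formedness $\typing[\Delta]{A}{\Ty i}$ required by that rule's gray premise comes from presupposition on $\Gamma$ together with the induction hypothesis, or from weakening of the stored premise $\typing[\Delta]{A'}{\Ty i}$ in the context-subtyping derivation. Some rules, such as context conversion in substitution equivalences, may also require transitivity of context subtyping. We prove this beforehand from transitivity of type subtyping and the present narrowing, or, if necessary, simultaneously by induction on context length.
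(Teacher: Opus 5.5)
Your proposal is correct and follows essentially the same route as the paper: a mutual structural induction generalized over the sub-context, extending to $\vdash \Delta, x : A \subseteq \Gamma, x : A$ under binders, with the variable cases handled by a lookup lemma that rests on stability of subtyping under weakening. Two small corrections: first, the gray premise $\Delta \vdash A : \Ty{j}$ in the variable case is best obtained by right-presupposition of $\Delta \vdash A_0 \subseteq A$, which is a trivial induction thanks to the gray premises, whereas the stored premise of the context-subtyping rule only types the supertype in the \emph{super}-context; second, transitivity of context subtyping is never needed, because the only rules mentioning context subtyping constrain the target context of a substitution, which narrowing leaves untouched.
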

\vspace{-3pt}%

Context conversion is the special case of \Cref{thm:syntax:narrowing} for
equivalent contexts. Cumulativity is subtyping; subsumption recovers level
lifting:

\vspace{-3pt}%
\begin{lemma}[\mechx{Cumulativity}{wf\_cumu, lift\_exp\_ge}]\label{lem:syntax:cumu}
  If $\typing{A}{\Ty i}$ and $i \le j$, then $\typing{A}{\Ty j}$, and
  similarly for the term equivalence judgment.
\end{lemma}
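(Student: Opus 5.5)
The plan is to derive both parts from subsumption together with the universe subtyping rule, rather than by a fresh induction on derivations. Suppose first that $\typing{A}{\Ty i}$ and $i \le j$. If $i = j$ there is nothing to prove. Otherwise $i < j$. Presupposition (\Cref{thm:syntax:presup}) applied to $\typing{A}{\Ty i}$ gives $\wfctx\Gamma$. The universe rule then yields $\subtyp{\Ty i}{\Ty j}$. To apply the subsumption rule with target type $\Ty j$ we also need its light-gray premise $\typing{\Ty j}{\Ty k}$ for some $k$; the universe formation rule supplies $\typing{\Ty j}{\Ty{(1+j)}}$ from $\wfctx\Gamma$. Subsumption now concludes $\typing{A}{\Ty j}$. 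This part is therefore a short, direct derivation.

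The equivalence part asks for: if $\tyequiv{A}{A'}{\Ty i}$ and $i \le j$, then $\tyequiv{A}{A'}{\Ty j}$. The key question is whether the equivalence judgment has its own subsumption rule along subtyping. The appendix rules, inherited from \mcltt, are expected to contain such a rule, namely from $\tyequiv{M}{M'}{A}$ and $\subtyp{A}{A''}$ infer $\tyequiv{M}{M'}{A''}$. If it is present, the argument is identical to the typing case. The only extra step is that presupposition is now applied to the equivalence, which again supplies $\wfctx\Gamma$.

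If no such rule is available, I would fall back to a mutual induction on the derivations of typing and equivalence, in the style of \mcltt's \emph{lift\_exp\_ge}. The statement to prove is that every judgment $\typing{A}{\Ty i}$ or $\tyequiv{A}{A'}{\Ty i}$ lifts to every $j \ge i$. The cases would go as follows.
\begin{itemize}
\item The $\Pi$ formation and congruence rules, and the $\Nat$ and $\Ty{}$ rules, are handled by the induction hypotheses, with the extended context left unchanged.
\item The substitution rules are handled by the induction hypotheses, since $\Ty i[\sigma] \approx \Ty i$.
\item The conversion and subsumption cases are handled by composing subtypings with transitivity.
\end{itemize}

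The main obstacle I anticipate lies in this fallback: cases where the universe type $\Ty i$ is reached only through subsumption from some other type $C$ with $\subtyp{C}{\Ty i}$. Here one must show $\subtyp{C}{\Ty j}$, which follows by transitivity with $\subtyp{\Ty i}{\Ty j}$. Beyond that, I expect every step to be routine bookkeeping.
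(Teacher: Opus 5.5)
Your argument is correct and is essentially the paper's: cumulativity is just the subsumption rule applied along the universe subtyping $\subtyp{\Ty i}{\Ty j}$, with $\typing{\Ty j}{\Ty{(1+j)}}$ as the gray premise. The equivalence judgment does carry the analogous subsumption rule (it is among the conversion rules in the appendix), so your fallback induction is never needed.

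One point of care: take $\wfctx\Gamma$ from the elementary context part of presupposition, not from the full \Cref{thm:syntax:presup}. The common-level subtyping clause of that theorem itself relies on cumulativity, for example in the case $\subtyp{\Ty i}{\Ty j}$, so citing it whole would be circular.
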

\vspace{-3pt}%

Presupposition holds in the usual form; for subtyping it places both sides at a
common level.

\vspace{-3pt}%
\begin{theorem}[\mechx{Presupposition}{presup\_*}]\label{thm:syntax:presup}
  \phantom{x}
  \begin{itemize}[nosep,leftmargin=*]
  \item If $\typing M A$, then $\wfctx \Gamma$ and $\typing{A}{\Ty i}$ for some $i$;
  \item If $\tyequiv{M}{M'}{A}$, then $\wfctx \Gamma$, $\typing M A$,
    $\typing{M'}{A}$, and $\typing{A}{\Ty i}$ for some $i$;
  \item If $\subtyp A B$, then $\wfctx \Gamma$ and $\typing{A}{\Ty i}$ and
    $\typing{B}{\Ty i}$ for some common $i$;
  \item If $\wfctxsub{\Gamma}{\Delta}$, then $\wfctx{\Gamma}$ and $\wfctx{\Delta}$.
  \end{itemize}
\end{theorem}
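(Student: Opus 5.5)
We prove all four clauses simultaneously by mutual induction on the seven mutually defined judgments, with the statement extended to the substitution judgments: if $\substyping{\sigma}{\Delta}$ then $\wfctx\Gamma$ and $\wfctx\Delta$, and similarly for $\subeq{\sigma}{\sigma'}{\Delta}$. We also prove the typing-only presupposition, that $\typing M A$ gives $\typing A {\Ty i}$, in tandem with the equivalence clause. The light-gray premises in \Cref{fig:subtyping} exist to make this induction go through. Wherever a rule has a premise about the very type we need, we read it off directly instead of reconstructing it. For example, for subsumption the premise $\typing{A'}{\Ty i}$ is exactly the required conclusion. Context well-formedness is propagated from any premise in the same context, or from $\wfctx{(\Gamma, x : A)}$ by inversion.

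The typing clause is routine for most rules. For variables we look the type up in $\wfctx\Gamma$ and weaken it. For $\Pi$ and $\lambda$ we use the premises. Applications $M~N : B[\synid, N/x_0]$ and the recursor need a substitution lemma stating that $\typing[\Delta]{B}{\Ty i}$ and $\substyping{\sigma}{\Delta}$ give $\typing{B[\sigma]}{\Ty i}$. In the explicit-substitution calculus this is just the substitution typing rule, applied with $\substyping{(\synid, N/x_0)}{(\Gamma, x : A)}$, which we build from $\typing N A$ and $\substyping{\synid}{\Gamma}$. For the universe rule $\typing{\Ty i}{\Ty{(1+i)}}$ we have $\typing{\Ty{(1+i)}}{\Ty{(2+i)}}$. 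Context subtyping is an immediate induction, since each rule already carries $\typing A {\Ty i}$ and $\typing[\Delta]{A'}{\Ty i}$.

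For equivalence, each side's typing comes from the corresponding typing rule applied to the IH's components. The difficult cases are the congruence and $\beta$ rules where the right-hand side lives under a different binder annotation or a substitution. Examples are $(\lambda x{:}A.\,M)~N \approx M[\synid,N/x_0]$, the congruence for $\lambda$ with equivalent annotations $A \approx A'$, and the $\Pi$ congruence. For the latter, showing $\typing{\usynPi{x}{A'}{B'}}{\Ty i}$ requires moving $B'$ from context $\Gamma, x : A$ to $\Gamma, x : A'$. Here we invoke \Cref{thm:syntax:narrowing}, since equivalent types yield $\wfctxsub{(\Gamma,x:A')}{(\Gamma,x:A)}$ by reflexive subtyping. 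Typing the right-hand side of a $\beta$ rule also needs subsumption along $B[\synid,N/x_0]$ versus its variant $B[\synid,N'/x_0]$. To get this we first prove a functoriality lemma, that equivalent substitutions preserve type equivalence, and then apply the reflexivity rule of subtyping. Symmetric and transitive cases are direct.

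The subtyping clause is where the common level matters, and \Cref{lem:syntax:cumu} is the key tool. In the reflexivity rule both sides are at $\Ty i$. In the universe rule $\Ty i \subseteq \Ty j$ with $i<j$, both sides are typed at $\Ty{(1+j)}$, lifting the left one via \Cref{lem:syntax:cumu}. In the $\Pi$ rule the gray premises give $\typing{A}{\Ty i}$, $\typing{A'}{\Ty i}$, $\typing[\Gamma,x:A]{B}{\Ty i}$ and $\typing[\Gamma,x:A']{B'}{\Ty i}$, and the $\Pi$ formation rule concludes both at $\Ty i$. Transitivity is the main obstacle. The IH yields levels $i_1$ for $A, A'$ and $i_2$ for $A', A''$, and we take $\max(i_1,i_2)$ and lift both via \Cref{lem:syntax:cumu}. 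Because \Cref{lem:syntax:cumu} is stated earlier, we may assume it. If it in fact depends on presupposition, we prove it first by its own induction on typing, which needs only well-formedness of $\Ty j$ and the subsumption rule.
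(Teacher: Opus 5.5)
Your proposal is correct and follows essentially the paper's route: presupposition is proved by mutual structural induction on derivations, with the light-gray premises supplying exactly the facts you read off directly, and with the previously established narrowing (\Cref{thm:syntax:narrowing}) and cumulativity (\Cref{lem:syntax:cumu}) handling the binder-congruence and transitivity cases. One small correction: the conversion between $B[\synid,N/x_0]$ and $B[\synid,N'/x_0]$ arises in the application congruence rule, not in a $\beta$ rule, and it is just an instance of the primitive rule deriving $\tyequiv{M[\sigma]}{M'[\sigma']}{A[\sigma]}$; the case of that same rule is also why your substitution-equivalence clause must record that both $\sigma$ and $\sigma'$ are well-typed.
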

\vspace{-3pt}%

%% file: sections/nbe_per.tex
\section{NbE and its Soundness and Completeness}\label{sec:nbe-per}

This section establishes the NbE infrastructure used by the algorithms: the
untyped domain, evaluation, and readback (\Cref{sec:nbe-per:nbe}), the PER model
and semantic subtyping relation (\Cref{sec:nbe-per:per}), and the completeness
and soundness theorems for NbE
(\Cref{sec:nbe-per:completeness,sec:nbe-per:soundness}). %
The NbE algorithm and the surrounding PER and gluing models follow the
covariant case. %
Contravariance mainly changes the semantic subtyping relation: its $\Pi$ rule
uses a contravariant premise for the input type and interprets the output type
over the narrower input relation. %
In the completeness proof, the influence is mostly local to the $\Pi$ subtyping
case of the fundamental theorem.
In the soundness proof, the same asymmetry forces one induction to show both
that semantic subtyping reflects to syntactic subtyping and that glued terms can
be transported along semantic subtyping.

\subsection{Untyped Domain and NbE}\label{sec:nbe-per:nbe}

\begingroup
\setlength{\abovedisplayskip}{2pt plus 1pt minus 1pt}
\setlength{\belowdisplayskip}{2pt plus 1pt minus 1pt}
\[
  \begin{array}{l@{\quad}l@{~}c@{~}l}
    \text{De Bruijn levels} & \N \ni z \\
    \text{Neutral domain values} & \DomNe \ni v & ::= & \quad \dvar{z} \mid \dnatrec{A}{b}{M_{\mathsf{s}}}{\rho}{v} \mid v~w \\
    \text{Domain values} & \Dom \ni a, b, m, n & ::= & \quad \dneut{a}{v} \mid \dN \mid \dU{i} \mid \dpi{a}{B}{\rho} \mid \dze \mid \dsu{a} \mid \dlam{M}{\rho} \\
    \text{Normal domain values} & \DomNf \ni w & ::= & \quad \dreif{a}{m} \\
    \text{Evaluation environments} & \Env \ni \rho
  \end{array}
\]
\endgroup

Well-typed expressions are evaluated into an untyped domain whose variables
are de Bruijn \emph{levels} $\dvar z$. Unlike de Bruijn indices, which count
inward from the top of ambient context, levels name variables by their absolute position
in the context; extending a context therefore leaves existing
levels unchanged. An environment is a map from de Bruijn indices to domain
values, $\rho : \N \to \Dom$, equipped with two operations: extension and
dropping. The extension $\denvext{\rho}{n}$ maps $0$ to $n$ and $k+1$ to
$\rho(k)$, while dropping removes the first entry, $\denvdrop{\rho}(k)=\rho(k+1)$.
Functions and $\Pi$ types are
defunctionalized into closures $\dlam{M}{\rho}$ and $\dpi{a}{B}{\rho}$
capturing their respective ambient environments. Reflected neutral values $\dneut{a}{v}$ and
reified values $\dreif{a}{m}$ carry a domain type $a$ as a type \emph{tag};
the tag later drives the $\eta$-expansion performed by readback (discussed shortly). We reserve the word
\emph{annotation} for the syntactic input-type annotations on
$\lambda$-abstractions. We omit the routine NbE rules and proof cases for the
natural-number constructors $\dN$, $\dze$, and $\dsu{a}$.

Evaluation $\evald{M}{\rho}{m}$ and its companions, substitution evaluation
and domain application $\dapp{m}{n} \shortsearrow b$, are defined as functional
relations in the standard way. %
The goal of evaluation is to eliminate $\beta$-redexes.
Explicit substitutions are evaluated to environments, which are then used to
evaluate the substituted term.
\Cref{fig:eval} shows a selection of the rules; the full definition is collected in \Cref{app:nbe}. Two rules are worth noting.
First, evaluation of a $\lambda$-abstraction $\intp{\usynlam{x}{A}{M}}(\rho)$ \emph{discards} its input-type
annotation: the closure $\dlam{M}{\rho}$ records only the body. Type
annotations on $\lambda$s are thus never consulted during evaluation; they
reappear only when readback re-derives them from classifying $\Pi$ types.
Second, applying a reflected \emph{neutral} function $\dapp{\dneut{\dpi{a}{B}{\rho}}{v}}{n}$ computes a fresh type
tag for the result by evaluating the output type $B$ in an environment extended with the argument value $n$. Thus tags are not just inert labels: they can trigger computation during
tag-directed readback.

\begin{figure}
  \raggedright
  \mysbox{$\evald{M}{\rho}{m}$\quad$\dapp{m}{n} \shortsearrow b$\quad$\evalsd{\sigma}{\rho}{\rho'}$} \hfill
  \parbox[t]{0.42\linewidth}{\raggedleft evaluation, domain application, and substitution evaluation (selected rules)}
  \begin{mathparshrink}
    \inferrule
    {~}
    {\evald{x_d}{\rho}{\rho(d)}}

    \inferrule
    {\evald{A}{\rho}{a}}
    {\evald{\usynPi{x}{A}{B}}{\rho}{\dpi{a}{B}{\rho}}}

    \inferrule
    {~}
    {\evald{\usynlam{x}{A}{M}}{\rho}{\dlam{M}{\rho}}}

    \inferrule
    {\evald{M}{\rho}{m} \\ \evald{N}{\rho}{n} \\ \dapp{m}{n} \shortsearrow b}
    {\evald{M~N}{\rho}{b}}

    \inferrule
    {\evalsd{\sigma}{\rho}{\rho'} \\ \evald{M}{\rho'}{m}}
    {\evald{M[\sigma]}{\rho}{m}}

    \inferrule
    {\evald{M}{\denvext{\rho}{n}}{m}}
    {\dapp{\dlam{M}{\rho}}{n} \shortsearrow m}

    \inferrule
    {\evald{B}{\denvext{\rho}{n}}{b}}
    {\dapp{\dneut{\dpi{a}{B}{\rho}}{v}}{n} \shortsearrow
     \dneut{b}{v~\dreif{a}{n}}}

    \inferrule{~}{\evalsd{\synid}{\rho}{\rho}}

    \inferrule{~}{\evalsd{\synwk}{\rho}{\denvdrop{\rho}}}

    \inferrule
    {\evalsd{\sigma}{\rho}{\rho'} \\ \evald{M}{\rho}{m}}
    {\evalsd{(\sigma, M/x)}{\rho}{\denvext{\rho'}{m}}}

    \inferrule
    {\evalsd{\tau}{\rho}{\rho_\tau} \\ \evalsd{\sigma}{\rho_\tau}{\rho'}}
    {\evalsd{\sigma \circ \tau}{\rho}{\rho'}}
  \end{mathparshrink}
  \figrulesep

  \mysbox{$\rnf{z}{w}{W}$\quad$\rne{z}{v}{U}$\quad$\rtyd{z}{a}{V}$} \hfill normal, neutral, and type readback (selected rules)
  \begin{mathparshrink}
    \inferrule
    {\rtyd{z}{a}{W_1} \\
     \dapp{m}{\dneut{a}{\dvar z}} \shortsearrow m' \\
     \evald{B}{\denvext{\rho}{\dneut{a}{\dvar z}}}{b} \\
     \rnf{(1+z)}{\dreif{b}{m'}}{W_2}}
    {\rnf{z}{\dreif{\dpi{a}{B}{\rho}}{m}}{\usynlam{x}{W_1}{W_2}}}

    \inferrule
    {\rtyd{z}{a}{W_1} \\
     \evald{B}{\denvext{\rho}{\dneut{a}{\dvar z}}}{b} \\
     \rtyd{(1+z)}{b}{W_2}}
    {\rtyd{z}{\dpi{a}{B}{\rho}}{\usynPi{x}{W_1}{W_2}}}

    \inferrule{\rtyd{z}{a}{A}}{\rnf{z}{\dreif{\dU i}{a}}{A}}

    \inferrule{~}{\rtyd{z}{\dU i}{\Ty i}}
    \quad
    \inferrule{\rne{z}{v}{U}}{\rtyd{z}{\dneut{a}{v}}{U}}
    \quad
    \inferrule{~}{\rne{z}{\dvar{x}}{x_{(z - x - 1)}}}
    \quad
    \inferrule{\rne{z}{v}{U} \\ \rnf{z}{w}{W}}{\rne{z}{v~w}{U~W}}
  \end{mathparshrink}
  \vspace{-9pt}%
  \caption{Evaluation, domain application, substitution evaluation, and
    readback (selected rules).}
  \vspace{-18pt}%
  \figtailspace
  \label{fig:eval}
\end{figure}

Readback maps domain values back to syntactic normal forms and performs type-directed
$\eta$-expansion, threading type tags as it goes. It is three mutually
defined functional relations (shown in \Cref{fig:eval}): $\rnf{z}{w}{W}$ on normal
values, $\rne{z}{v}{U}$ on neutrals, and $\rtyd{z}{a}{V}$ on types, where $z$
counts the enclosing binders so that a level $z'$ reads back as the index
$x_{z-z'-1}$. A universe-tagged value is read back by immediately switching to
type readback, i.e. $\rnf{z}{\dreif{\dU i}{a}}{A}$ if $\rtyd{z}{a}{A}$. %
Thus
readback of types is not a separate post-processing phase but the type case of
normal readback. 
Also note that the universe level $i$ is entirely neglected; due to cumulativity, universe levels are only significant when a universe is read back. %
This observation is critical in our subsequent development of subtyping algorithm, c.f. \Cref{sec:renorm-free}. %
When reading back a value tagged with $\Pi(a, B, \rho)$, the function value $m$ is applied to $\uparrow^a(l_z)$, a fresh de Bruijn level tagged with the input domain type $a$. %
The result of this application is read back together with a tag given by the evaluation of $B$ in an extended environment. %
Note that the $\lambda$-annotation in the result is determined by $a$, the input domain type from the overall tag, not by the
function value itself.

Composing the pieces yields the NbE algorithm. The initial environment
$\initenv{\Gamma}$ evaluates a context by reflecting, for each entry, the
corresponding de Bruijn level at the value of its type.

\vspace{-3pt}%
\begin{definition}[NbE]\label{def:nbe}
  For $\typing M A$, \;
  $\nbed{\Gamma}{A}{M}{W} \coloneqq
  \rnf{\nbesize{\Gamma}}{\dreif{\evalf{A}{\initenv\Gamma}}{\evalf{M}{\initenv\Gamma}}}{W}$.
  For $\typing A {\Ty i}$, \;
  $\nbetyd{\Gamma}{A}{V} \coloneqq \rtyd{\nbesize{\Gamma}}{\evalf{A}{\initenv\Gamma}}{V}$.
\end{definition}
\vspace{-3pt}%

Evaluation, readback, and the composed NbE relations are functional:
outputs are uniquely determined by the input arguments.
Hence, for convenience, we sometimes also use
functional notation, e.g. $\nbetyf{\Gamma}{A}$, to mean $\exists V, \nbetyd{\Gamma}{A}{V}$.

\subsection{PER Model and Model of Subtyping}\label{sec:nbe-per:per}

To prove completeness of NbE, we need a semantic account of when two evaluated
domain values represent the same syntax. The partial equivalence relation (PER) model provides this account: related semantic types
determine the PER relation for their elements, and
related values are read back to the same normal form. We first define three
basic PERs directly from readback: two values are related iff they read back to a
common normal form (neutral form, resp.) at every depth.
\begingroup
\setlength{\abovedisplayskip}{2pt plus 1pt minus 1pt}
\setlength{\belowdisplayskip}{2pt plus 1pt minus 1pt}
\begin{mathparshrink}
  \inferrule
  {\forall z.\ \Rnf_z(w) = \Rnf_z(w')}
  {\inper{w}{w'}{\Nfd}}

  \inferrule
  {\forall z.\ \Rne_z(v) = \Rne_z(v')}
  {\inper{v}{v'}{\Ned}}

  \inferrule
  {\forall z.\ \Rty_z(a) = \Rty_z(a')}
  {\inper{a}{a'}{\Tyd}}
\end{mathparshrink}
\endgroup
These basic readback PERs are viewed as the extrema of the PER model that we are about to define. %
Given their definitions, the central relation is the universe-level-indexed
PER $\perm{i}{a}{a'}{R}$, which says that domain types $a$ and $a'$ are related
at universe level $i$ and that their elements are related by $R$. %
In a typical type-theoretic setup, semantics of types would have been given by an inductive-recursive relation~\citep{DBLP:journals/apal/DybjerS03},
but since \coq does not support induction-recursion, 
following \mcltt~\cite{mctt}, the PER model is encoded in
\coq's impredicative \texttt{Prop}. %
The relation $R$ should be viewed as a result of $a$ and $a'$. %
The PER model is weakly functional, in the sense that the relation $R$ is unique up to logical equivalence, written as
$\releq$. %
All use of $R$ respects logical equivalence.
The neutral, universe, and $\Pi$ rules are shown in \Cref{fig:per-model}; other
cases are deferred to \Cref{app:semantic}. %
The definition proceeds by first well-founded recursion on universe level $i$, and then an inductive definition.
The well-founded recursion is to handle 
the $\dU j$ case at level
$i$, which refers to the PER at a smaller
universe level $j < i$.  The model is unchanged from
the one used in the covariant setting in \mcltt: it is symmetric, transitive, and
cumulative. Its realizability property, stated in
\Cref{thm:per:realizability}, says that the model is ``sandwiched'' by the
readback PERs.

\begin{figure}
  \begin{mathparshrink}
    \inferrule
    {\inper{v}{v'}{\Ned}}
    {\inper{\dneut{a}{v}}{\dneut{a'}{v'}}{\Neud}}
    \qquad

    \inferrule
    {\inper{b}{b'}{\Ned} \\ R \releq \Neud}
    {\perm{i}{\dneut{a}{b}}{\dneut{a'}{b'}}{R}}
    \qquad
    \inferrule
    {j < i \\ R \releq \Uc_j}
    {\perm{i}{\dU j}{\dU j}{R}}

    \inferrule
    {\perm{i}{a}{a'}{R_I} \\
     \phlmath{R_I \text{ is a PER}} \\\\
     \forall \D \uname \inper{n}{n'}{R_I}.\
       \perm{i}{\evalf{B}{\denvext{\rho}{n}}}{\evalf{B'}{\denvext{\rho'}{n'}}}{R_O(\D)} \\\\
     R \releq \{(m, m') \mid
       \forall \D \uname \inper{n}{n'}{R_I}.\
       \inper{\dapp{m}{n}}{\dapp{m'}{n'}}{R_O(\D)}\}}
    {\perm{i}{\dpi{a}{B}{\rho}}{\dpi{a'}{B'}{\rho'}}{R}}
  \end{mathparshrink}
  \vspace{-9pt}%
  \caption{The neutral-value PER $\Neud$ and the
    universe-indexed PER model $\perm{i}{a}{a'}{R}$ (selected rules).}
  \vspace{-10pt}%
  \figtailspace
  \label{fig:per-model}
\end{figure}

\vspace{-3pt}%
\begin{theorem}[\mechx{Realizability}{realize\_per\_univ\_elem\_gen}]\label{thm:per:realizability}
  Suppose $\perm{i}{a}{a'}{R}$. Then
  \begin{itemize}[nosep,leftmargin=*]
  \item $\inper{a}{a'}{\Tyd}$;
  \item if $\inper{v}{v'}{\Ned}$, then $\inper{\dneut{a}{v}}{\dneut{a'}{v'}}{R}$;
  \item if $\inper{m}{m'}{R}$, then $\inper{\dreif{a}{m}}{\dreif{a'}{m'}}{\Nfd}$.
  \end{itemize}
\end{theorem}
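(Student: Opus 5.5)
We prove the three clauses simultaneously, following the structure of the model: an outer well-founded induction on the universe level $i$, and inside it an induction on the derivation of $\perm{i}{a}{a'}{R}$. Because the definition is by well-founded recursion on $i$, the outer induction hypothesis is exactly what the $\dU j$ case needs for $j<i$. Every use of $R$ is only up to $\releq$, so in each case we may replace $R$ by the concrete relation given in the rule. Throughout, all three clauses are quantified over every depth $z$, as in the definitions of $\Nfd$, $\Ned$ and $\Tyd$.

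\emph{Neutral case.} Here $a=\dneut{c}{b}$ and $a'=\dneut{c'}{b'}$ with $\inper{b}{b'}{\Ned}$.
\begin{itemize}
\item Clause one: type readback of a reflected neutral is neutral readback, so $\inper{b}{b'}{\Ned}$ gives $\inper{a}{a'}{\Tyd}$ directly.
\item Clause two: this is the $\Neud$ rule itself.
\item Clause three: a reified value at a neutral type tag with a $\Neud$-related element reads back through the neutral readback of the element. This uses the readback rules for $\dreif{\dneut{\cdot}{\cdot}}{\dneut{\cdot}{\cdot}}$ from the appendix.
\end{itemize}

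\emph{Universe case.} Here $a=a'=\dU j$ with $j<i$ and $R\releq\Uc_j$.
\begin{itemize}
\item Clause one: both sides read back to $\Ty j$.
\item Clause two: a reflected neutral $\dneut{\dU j}{v}$ is itself related in $\Uc_j$ by the neutral rule at level $j$.
\item Clause three: given $\inper{m}{m'}{\Uc_j}$, i.e.\ $\perm{j}{m}{m'}{R'}$ for some $R'$, the outer induction hypothesis at $j$ (clause one) gives $\inper{m}{m'}{\Tyd}$. Normal readback at $\dU j$ switches to type readback, which yields the claim.
\end{itemize}

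\emph{$\Pi$ case.} This is the main case. Fix a depth $z$. First use the inner induction hypothesis for $a,a'$ (clause two) on the fresh variables $\dvar z$, which are trivially $\Ned$-related. This gives $\inper{\dneut{a}{\dvar z}}{\dneut{a'}{\dvar z}}{R_I}$. Call this related pair $\D$. Instantiate the premise of the rule at $\D$ to obtain the codomain relation $\perm{i}{b}{b'}{R_O(\D)}$, where $b$ and $b'$ are the evaluations of $B$ and $B'$ under the extended environments. Since this derivation is a subderivation, the inner induction hypotheses apply to $b,b'$. With this in hand, each clause reduces to the codomain:
\begin{itemize}
\item Clause one: combine clause one for $a,a'$ at depth $z$ with clause one for $b,b'$ at depth $1+z$.
\item Clause three: take $\inper{m}{m'}{R}$. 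The relation $R$ gives $\inper{\dapp{m}{\dneut{a}{\dvar z}}}{\dapp{m'}{\dneut{a'}{\dvar z}}}{R_O(\D)}$. Clause three for $b,b'$ then makes the bodies agree at depth $1+z$, and clause one for $a,a'$ makes the $\lambda$-annotations agree.
\item Clause two: take arbitrary $\inper{n}{n'}{R_I}$. The applications $\dapp{\dneut{a}{v}}{n}$ and $\dapp{\dneut{a'}{v'}}{n'}$ compute to $\dneut{b_n}{v~\dreif{a}{n}}$ and $\dneut{b'_{n'}}{v'~\dreif{a'}{n'}}$, where $b_n$ and $b'_{n'}$ are the codomain types evaluated at $n$ and $n'$. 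By clause three for $a,a'$, the reified arguments are $\Nfd$-related, so the new neutrals are $\Ned$-related. Clause two for the codomain, instantiated at $(n,n')$, then finishes.
\end{itemize}

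The main obstacle is bookkeeping in the $\Pi$ case. The premise quantifies over all $R_I$-related pairs, so we must ensure that the induction principle supplies hypotheses for every instance $R_O(\D)$, not only for the fresh-variable one. The statement should be proved over the inductive family in \texttt{Prop}, carrying the induction hypothesis inside the universally quantified premise, as in \mcltt. Since this proof never touches subtyping, it transfers from the covariant setting essentially unchanged.
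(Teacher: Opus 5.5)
Your proof is correct and is essentially the paper's argument, which is inherited unchanged from \mcltt{} as \texttt{realize\_per\_univ\_elem\_gen}: you prove the three clauses simultaneously by induction on $\perm{i}{a}{a'}{R}$, the well-founded recursion on levels supplies the hypothesis in the $\dU j$ case, and in the $\Pi$ case clause two for the domain produces the fresh-variable witness $\D$ used to unfold the codomain. The $\Nat$ case you omit is routine (clause three needs only a small induction on $\Natd$), which matches the paper's convention of eliding natural-number cases.
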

\vspace{-3pt}%

\begin{figure}
  \begingroup
  \makeatletter\def\mpr@sep{0.75em}\makeatother
  \begin{mathparshrink}
    \inferrule
    {\inper{v}{v'}{\Ned}}
    {\semsubtyp{\dneut{a}{v}}{\dneut{a'}{v'}}{i}}

    \inferrule
    {i \le j \\ j < k}
    {\semsubtyp{\dU i}{\dU j}{k}}

    \inferrule
    {\khlmath{\semsubtyp{a'}{a}{i}} \\
     \perm{i}{a'}{a'}{R_I} \\
     \peruni{i}{\dpi{a}{B}{\rho}}{\dpi{a}{B}{\rho}} \\
     \peruni{i}{\dpi{a'}{B'}{\rho'}}{\dpi{a'}{B'}{\rho'}} \\
     \forall \inper{n}{n'}{\khlmath{R_I}}.\
        \semsubtyp{\intp{B}(\denvext{\rho}{n})}{\intp{B'}(\denvext{\rho'}{n'})}{i}}
    {\semsubtyp{\dpi{a}{B}{\rho}}{\dpi{a'}{B'}{\rho'}}{i}}
  \end{mathparshrink}
  \endgroup
  \vspace{-9pt}%
  \caption{The model of subtyping $\semsubtyp{a}{b}{i}$ (selected rules).}
  \vspace{-18pt}%
  \figtailspace
  \label{fig:per-subtyp}
\end{figure}

\paragraph{Semantic subtyping}
\Cref{fig:per-subtyp} defines semantic subtyping $\semsubtyp{a}{b}{i}$
between domain types $a$ and $b$ at universe level $i$, mirroring the syntactic subtyping rules.
Unlike the PER model, semantic subtyping does not return another relation to relate domain values; it relates
types only. Neutral types are related when their underlying neutrals are
related in $\Ned$, and $\dU i$ is below $\dU j$ when $i \le j$. The $\Pi$ rule
is the clause that changes for contravariance. Its input types are related
\emph{contravariantly} by the recursive premise
$\semsubtyp{a'}{a}{i}$, whereas in the covariant case this premise is just
PER-relatedness, $\peruni{i}{a}{a'}$. %
Next, the premise $\perm{i}{a'}{a'}{R_I}$ extracts a relation $R_I$ from $a'$, which is used latter for subtyping $B$ and $B'$. %
The
following two premises record that both $\Pi$ domain types are self-related in the PER
model. 
In other words, they are semantically well-formed types. %
The last output-type premise then quantifies
over values related in $R_I$, the relation returned by the PER of the narrower input type $a'$. %
Semantic subtyping has several expected structural properties like transitivity, cumulativity and subsumption, as listed in the following lemma. 

\vspace{-3pt}%
\begin{lemma}[Properties of semantic subtyping]\label{lem:per:subtyp-props}\label{lem:per:elem-subtyping}
  \phantom{x}
  \begin{itemize}[nosep,leftmargin=*]
  \item \mechx{(reflexivity)}{per\_subtyp\_refl} If $\perm{i}{a}{b}{R}$, then
    $\semsubtyp{a}{b}{i}$ and $\semsubtyp{b}{a}{i}$;
  \item \mechx{(transitivity)}{per\_subtyp\_trans} If $\semsubtyp{a_1}{a_2}{i}$ and
    $\semsubtyp{a_2}{a_3}{i}$, then $\semsubtyp{a_1}{a_3}{i}$;
  \item \mechx{(cumulativity)}{per\_subtyp\_cumu} If $\semsubtyp{a}{b}{i}$ and $i \le j$,
    then $\semsubtyp{a}{b}{j}$.
  \item \mechx{(subsumption)}{per\_elem\_subtyping} If
    $\semsubtyp{a}{b}{i}$, $\perm{i}{a}{a'}{R}$, $\perm{i}{b}{b'}{R'}$, and
    $\inper{m}{m'}{R}$, then $\inper{m}{m'}{R'}$.
  \end{itemize}
\end{lemma}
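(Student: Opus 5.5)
We prove the four items in a deliberate order: cumulativity first, then subsumption, then reflexivity, and transitivity last. This is because transitivity needs subsumption, while subsumption needs none of the others.

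\emph{Cumulativity.} We proceed by induction on the derivation of $\semsubtyp{a}{b}{i}$. The neutral case is immediate. The universe case follows from $k \le j'$ whenever $k \le i \le j'$. In the $\Pi$ case we apply the induction hypothesis to the input and output premises. For the PER premises we use cumulativity of the PER model. We also use weak functionality to replace the relation $R_I$ extracted at level $i$ by the one extracted at level $j$; the two are logically equivalent, and all uses of $R_I$ respect $\releq$.

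\emph{Subsumption.} We proceed by induction on the derivation of $\semsubtyp{a}{b}{i}$, inverting the two PER hypotheses in each case.
\begin{itemize}[nosep,leftmargin=*]
\item Neutral case: both relations are $\Neud$, so there is nothing to do.
\item Universe case: $R \releq \Uc_i$ and $R' \releq \Uc_j$ with $i \le j$, and PER cumulativity closes it.
\item $\Pi$ case: take $m \approx m'$ in the $\Pi$ relation of $a$, and $n \approx n'$ in the input relation of $b = \dpi{a'}{B'}{\rho'}$. Using the input premise $\semsubtyp{a'}{a}{i}$ and the induction hypothesis (subsumption at the smaller input type), we transport $n \approx n'$ into the input relation of $a$. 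This gives $\dapp{m}{n} \approx \dapp{m'}{n'}$ in the output relation for $B$. The output premise, applied to $n \approx n'$, then yields $\semsubtyp{\intp{B}(\denvext{\rho}{n})}{\intp{B'}(\denvext{\rho'}{n'})}{i}$.
\end{itemize}
In that last step the output relation of $b$ may have been extracted at $\intp{B'}(\denvext{\rho'}{n'})$ with a different partner. We fix this with symmetry, transitivity and weak functionality of the PER model, which make the relation extracted from a self-related pair agree with the given one.

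\emph{Reflexivity.} We proceed by induction on $\perm{i}{a}{b}{R}$: well-founded on $i$, then on the inductive definition. The neutral and universe cases are direct. For $\Pi$, the induction hypothesis gives input subtyping in both directions. The self-relatedness premises come from symmetry and transitivity of the PER model. To get the relation $R_I$ of $a'$ with itself, we again use the fact that $R_I$ is logically equivalent to the relation recorded for the pair $a, a'$. For $n \approx n'$ in that relation, the PER premise gives $\intp{B}(\denvext{\rho}{n}) \approx \intp{B'}(\denvext{\rho'}{n'})$, and the induction hypothesis yields output subtyping in both directions. For the reverse direction we also need to flip the pair $(n, n')$, which we get from symmetry.

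\emph{Transitivity.} This is the main obstacle, because contravariance makes structural induction on either derivation fail. Following System F$_{<:}$, we induct on a measure of the middle type $a_2$. The natural choice is its self-relatedness derivation $\perm{i}{a_2}{a_2}{R_2}$, which the $\Pi$ rule records explicitly. We then do case analysis on both subtyping derivations.

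In the $\Pi$ case, the input types satisfy $\semsubtyp{a_3'}{a_2'}{i}$ and $\semsubtyp{a_2'}{a_1'}{i}$. Their middle type $a_2'$ is a strict sub-derivation of the measure of $a_2$, so the induction hypothesis gives $\semsubtyp{a_3'}{a_1'}{i}$.

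For the output types, take $n \approx n'$ in $R_I(a_3')$.
\begin{itemize}[nosep,leftmargin=*]
\item By subsumption along $\semsubtyp{a_3'}{a_2'}{i}$, we obtain $n \approx n'$ in $R_I(a_2')$, and hence $\semsubtyp{B_1(n)}{B_2(n')}{i}$.
\item By the PER laws, $n' \approx n'$ in $R_I(a_3')$, which gives $\semsubtyp{B_2(n')}{B_3(n')}{i}$.
\item The middle type $B_2(n')$ is measured by a sub-derivation of the $\Pi$ PER derivation for $a_2$, since $n' \approx n'$ in $R_I(a_2')$. The induction hypothesis therefore concludes $\semsubtyp{B_1(n)}{B_3(n')}{i}$.
\end{itemize}

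The delicate points are exactly these two: choosing the measure so that $B_2$ instantiated at an arbitrary related value counts as smaller, and keeping the relations $R_I$ aligned up to $\releq$.
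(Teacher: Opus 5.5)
Your proposal is correct and matches the paper's approach. The only substantive step is transitivity, which you prove F$_{<:}$-style by induction on a measure of the middle type, using subsumption along $\semsubtyp{a_3'}{a_2'}{i}$ to move arguments from the input PER of $a_3'$ into that of $a_2'$, exactly the mismatch the paper points out. The one deviation is the measure: you reuse the PER self-relatedness derivation of $\dpi{a_2'}{B_2}{\rho_2}$, recorded by the $\Pi$ rule and easily obtained for non-$\Pi$ middle types, whereas the mechanization defines a dedicated inductive measure over $a_2$ and the evaluations of $B_2$. Your choice works because $n'$ is self-related in that input PER, so you still get the induction hypothesis at $\intp{B_2}(\denvext{\rho_2}{n'})$.
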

\vspace{-3pt}%

Among these properties, transitivity is the proof most affected by the
contravariant $\Pi$ rule.
In the $\Pi$ case, chaining
$\semsubtyp{\dpi{a_1}{B_1}{\rho_1}}{\dpi{a_2}{B_2}{\rho_2}}{i}$ with
$\semsubtyp{\dpi{a_2}{B_2}{\rho_2}}{\dpi{a_3}{B_3}{\rho_3}}{i}$ is given the
input-type obligations in the flipped direction ($\semsubtyp{a_2}{a_1}{i}$ and $\semsubtyp{a_3}{a_2}{i}$ ), and also the output-type induction hypotheses are stated
over different input PERs ($a_3$ and $a_2$), so a plain induction on either derivation fails.
The mechanization proves
transitivity by induction on an inductively defined well-foundedness
measure on the middle type. %
The measure captures $a_2$ and also all possible evaluations of $B_2$. %
This is a standard technique inspired by the transitivity proof of $F_{<:}$~\citep{DBLP:conf/tacs/CardelliMMS91}.

Finally, the PER model and semantic subtyping are extended to contexts and
environments pointwise: $\perctxd{\Gamma}{\Gamma'}{R}$ relates two contexts and returns
a PER $R$ on environments, and semantic context subtyping
$\perctxsub{\Gamma}{\Gamma'}$ generalizes semantic subtyping similarly. We
show the extension rule of the latter; the
context PER, and the remaining rules are collected in
\Cref{app:semantic}:
\begingroup
\setlength{\abovedisplayskip}{2pt plus 1pt minus 1pt}
\setlength{\belowdisplayskip}{2pt plus 1pt minus 1pt}
\begin{mathparshrink}
  \inferrule
  {\perctxsub{\Gamma}{\Gamma'} \\
   \perctxd{\Gamma}{\Gamma}{R_{T}} \\\\
   \forall \inper{\rho}{\rho'}{R_{T}}.\
     \semsubtyp{\evalf{A}{\rho}}{\evalf{A'}{\rho'}}{i} \\
   \perctxd{\Gamma, x{:}A}{\Gamma, x{:}A}{R_1} \\
   \perctxd{\Gamma', x{:}A'}{\Gamma', x{:}A'}{R_2}}
  {\perctxsub{\Gamma, x{:}A}{\Gamma', x{:}A'}}
\end{mathparshrink}
\endgroup

As in the $\Pi$ case of $\semsubtyp{}{}{}$, the rule packages
self-relatedness witnesses for both extended contexts, and the
quantification ranges over environments related at the \emph{smaller}
context. Semantic context subtyping enjoys the analogue of
\Cref{lem:per:elem-subtyping}: it includes the context PER of the
sub-context into that of the super-context
as well as reflexivity and transitivity.

\subsection{Completeness}\label{sec:nbe-per:completeness}

Completeness goes through the standard PER-based logical relation: every
syntactic judgment has a \emph{semantic} counterpart asserting the
corresponding PER-relatedness in all related environments, a fundamental
theorem sends each derivation to its counterpart, and realizability of the
PER model then extracts equal normal forms. For the present discussion, the
important clauses are semantic equivalence and semantic subtyping. Semantic
equivalence says that evaluating the two terms in any pair of related
environments gives related domain values at the evaluated type:
\begingroup
\setlength{\abovedisplayskip}{2pt plus 1pt minus 1pt}
\setlength{\belowdisplayskip}{2pt plus 1pt minus 1pt}
\[
  \begin{array}{r@{\ }l}
  \semtyeq{M}{M'}{A} \coloneqq{}
  & \exists R.\ \perctxd{\Gamma}{\Gamma}{R} \text{ and } \exists i.\
    \forall \inper{\rho}{\rho'}{R}.\ \exists R'. \\
  & \quad \perm{i}{\evalf{A}{\rho}}{\evalf{A}{\rho'}}{R'}
    \text{ and } \inper{\evalf{M}{\rho}}{\evalf{M'}{\rho'}}{R'}
  \end{array}
\]
\endgroup
The new semantic subtyping
judgment requires both endpoint types to be self-related, and then relates
their evaluated values by semantic typing in every pair of related
environments:
\begingroup
\setlength{\abovedisplayskip}{2pt plus 1pt minus 1pt}
\setlength{\belowdisplayskip}{2pt plus 1pt minus 1pt}
\[
  \begin{array}{r@{\ }l}
  \semsub{A}{A'} \coloneqq{}
  & \exists R.\ \perctxd{\Gamma}{\Gamma}{R} \text{ and } \exists i.\
    \forall \inper{\rho}{\rho'}{R}. \\
  & \quad \peruni{i}{\evalf{A}{\rho}}{\evalf{A}{\rho'}}
    \text{ and } \peruni{i}{\evalf{A'}{\rho}}{\evalf{A'}{\rho'}}
    \text{ and } \semsubtyp{\evalf{A}{\rho}}{\evalf{A'}{\rho'}}{i}.
  \end{array}
\]
\endgroup
Semantic typing $\semtyp{M}{A}$ and the substitution judgments follow the
same template, while the semantic context judgments $\semctx{\Gamma}$ and
$\semctxeq{\Gamma}{\Gamma'}$ are thin wrappers asserting that the context PER
is defined. The fundamental theorem now carries one clause per
syntactic judgment, including the subtyping judgments:
\vspace{-3pt}%
\begin{theorem}[\mechx{Fundamental theorem for completeness}{completeness\_fundamental}]\label{thm:completeness:fundamental}
$ $
  \setlength{\multicolsep}{2pt}
  \begin{multicols}{2}
  \begin{itemize}[nosep,leftmargin=*]
  \item If $\wfctx\Gamma$, then $\semctx{\Gamma}$;
  \item If $\wfctxsub{\Gamma}{\Gamma'}$, then $\semctxsub{\Gamma}{\Gamma'}$;
  \item If $\typing M A$, then $\semtyp M A$;
  \item If $\substyping{\sigma}{\Delta}$, then $\semjudge{\sigma : \Delta}$;
  \item If $\tyequiv{M}{M'}{A}$, then $\semtyeq{M}{M'}{A}$;
  \item If $\subeq{\sigma}{\sigma'}{\Delta}$, then $\semjudge{\sigma \approx \sigma' : \Delta}$;
  \item If $\subtyp A {A'}$, then $\semsub A {A'}$.
  \end{itemize}
  \end{multicols}
\end{theorem}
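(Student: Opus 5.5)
The proof is by simultaneous induction on the seven mutually defined syntactic derivations. Each judgment is sent to its semantic counterpart, and cases are discharged using closure lemmas for the semantic judgments proved beforehand. All rules already present in \mcltt (variables, $\Pi$ formation and introduction, application, the recursor, explicit substitutions, and the $\beta\eta$ equivalence rules) are handled exactly as in the covariant development. The PER model of \Cref{fig:per-model} is unchanged, and these cases only use its symmetry, transitivity, cumulativity, and the realizability property (\Cref{thm:per:realizability}). The real work is therefore confined to the rules of \Cref{fig:subtyping}: subsumption for terms, substitution weakening along context subtyping, the four subtyping rules, and the two context subtyping rules.

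For the subtyping rules, proceed as follows.
\begin{itemize}
\item \emph{Reflexivity.} Take the IH $\semtyeq{A}{A'}{\Ty i}$ and unfold the universe PER at the evaluated level to obtain $\peruni{i}{\evalf{A}{\rho}}{\evalf{A'}{\rho'}}$. Then apply reflexivity of semantic subtyping (\Cref{lem:per:subtyp-props}). Self-relatedness of both endpoints follows by symmetry and transitivity of the PER.
\item \emph{Transitivity.} Unify the two context PERs via weak functionality, lift both IHs to a common level by cumulativity, and chain them with transitivity from \Cref{lem:per:subtyp-props}.
\item \emph{Universe rule.} This is immediate, choosing the ambient level $1+j$.
\end{itemize}

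For subsumption, given $\inper{\evalf{M}{\rho}}{\evalf{M}{\rho'}}{R}$ at $\evalf{A}{\rho}$, obtain the PER for $\evalf{A'}{\rho}$ from the premise on $A'$. Then transport membership with the subsumption clause of \Cref{lem:per:elem-subtyping}, after lifting everything to a common level. For the substitution rule along $\wfctxsub{\Delta}{\Delta'}$, use the corresponding inclusion of the context PER of the sub-context into that of the super-context.

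The main obstacle is the contravariant $\Pi$ subtyping case. Fix $\inper{\rho}{\rho'}{R_\Gamma}$.
\begin{enumerate}
\item The IH for $\subtyp{A'}{A}$ gives $\semsubtyp{\evalf{A'}{\rho}}{\evalf{A}{\rho'}}{i}$. The premise $\typing{A'}{\Ty i}$ gives $\perm{i}{\evalf{A'}{\rho}}{\evalf{A'}{\rho'}}{R_I}$, and hence $R_I$ on $\evalf{A'}{\rho}$ itself by symmetry and transitivity.
\item Self-relatedness of both $\Pi$ types comes from the semantic typing of $\Pi$ formation. We rebuild it from the IHs on $A$, $A'$, $B$, $B'$, using the premise $\typing[\Gamma, x : A]{B}{\Ty i}$ that the rule deliberately includes.
\item For the output premise, take $\inper{n}{n'}{R_I}$. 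The IH for $\subtyp[\Gamma, x : A']{B}{B'}$ is stated over environments related by the context PER of $\Gamma, x : A'$. We therefore show that $\denvext{\rho}{n}$ and $\denvext{\rho'}{n'}$ are related there, which needs exactly $\inper{n}{n'}{R_I}$ up to the $\releq$ coherence of $R_I$ with the extended context PER.
\end{enumerate}
Care is needed because the semantic rule uses the PER of the \emph{narrower} input $a'$, while $B$ was typed over $A$. The semantic subtyping judgment for $B \subseteq B'$, however, only requires $B$ to be self-related over $\Gamma, x : A'$. That comes from the IH at $\Gamma, x : A'$ itself, so no narrowing of $B$'s semantic typing should be needed. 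If it were, we would fall back on \Cref{thm:syntax:narrowing} before applying the IH. Finally, universe levels from different IHs are aligned with cumulativity.

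Context subtyping is the last case. The empty case is trivial. For the extension rule, combine the IH for $\wfctxsub{\Gamma}{\Delta}$ with the subtyping IH. Instantiate the latter on environments related at the smaller context $\Gamma$, and obtain the self-relatedness witnesses of both extended contexts from the IHs on the type premises.
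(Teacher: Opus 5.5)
Your proposal is correct and follows the paper's route: a simultaneous induction on the seven judgments. The \mcltt{} cases carry over unchanged, and the substance is the contravariant $\Pi$ subtyping case, where you put $\denvext{\rho}{n}$ and $\denvext{\rho'}{n'}$ into the context PER of $\Gamma, x : A'$ using $\inper{n}{n'}{R_I}$ and align levels by cumulativity. One small orientation fix: the target is $\semsubtyp{\evalf{\usynPi{x}{A}{B}}{\rho}}{\evalf{\usynPi{x}{A'}{B'}}{\rho'}}{i}$, so the $\Pi$ rule needs $\semsubtyp{\evalf{A'}{\rho'}}{\evalf{A}{\rho}}{i}$ and $R_I$ taken from $\evalf{A'}{\rho'}$; instantiate the input-type IH at $(\rho', \rho)$ by symmetry of the context PER (the choice of $R_I$ is harmless up to $\releq$).
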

\vspace{-3pt}%

Completeness is extracted by instantiating the fundamental theorem at the initial environment and realizability of the PER model. 
\vspace{-3pt}%
\begin{theorem}[\mechx{Completeness of NbE}{completeness}]\label{thm:completeness:nbe}
  If $\tyequiv{M}{M'}{A}$, then there is a $W$ such that
  $\nbed{\Gamma}{A}{M}{W}$ and $\nbed{\Gamma}{A}{M'}{W}$.
\end{theorem}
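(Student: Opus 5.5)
We obtain the statement from the fundamental theorem (\Cref{thm:completeness:fundamental}) instantiated at the initial environment, followed by realizability of the PER model (\Cref{thm:per:realizability}).

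\emph{Step 1: the semantic judgment.} From $\tyequiv{M}{M'}{A}$, the equivalence clause of \Cref{thm:completeness:fundamental} gives $\semtyeq{M}{M'}{A}$. Unfolding the definition, we get a relation $R$ with $\perctxd{\Gamma}{\Gamma}{R}$ and a level $i$. For every $\inper{\rho}{\rho'}{R}$ there is then an $R'$ with $\perm{i}{\evalf{A}{\rho}}{\evalf{A}{\rho'}}{R'}$ and $\inper{\evalf{M}{\rho}}{\evalf{M'}{\rho'}}{R'}$. Implicit in this is that all these evaluations are defined, so $\evalf{A}{\initenv\Gamma}$, $\evalf{M}{\initenv\Gamma}$ and $\evalf{M'}{\initenv\Gamma}$ exist.

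\emph{Step 2: the initial environment is self-related.} We need $\inper{\initenv\Gamma}{\initenv\Gamma}{R}$. We prove this by induction on the context PER $\perctxd{\Gamma}{\Gamma}{R}$, or equivalently on $\Gamma$.
\begin{itemize}
\item For $\Gamma = \cdot$ it is trivial.
\item For $\Gamma, x{:}B$, the tail $\denvdrop{\initenv{\Gamma,x:B}} = \initenv\Gamma$ is related by the induction hypothesis. The head is $\dneut{\evalf{B}{\initenv\Gamma}}{\dvar{\nbesize\Gamma}}$. The variable $\dvar{\nbesize\Gamma}$ is trivially related to itself in $\Ned$, since a variable reads back to the same index at every depth. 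The context PER supplies $\perm{k}{\evalf{B}{\initenv\Gamma}}{\evalf{B}{\initenv\Gamma}}{R_B}$ for some $k$ and $R_B$. The second clause of \Cref{thm:per:realizability} then puts the reflected variable in $R_B$.
\end{itemize}
The main obstacle is bookkeeping rather than substance. The head relation $R_B$ must agree, up to $\releq$, with whatever relation the context PER records for the last entry; this follows from weak functionality of the PER model. One must also check that the de Bruijn level chosen by $\initenv\Gamma$ is exactly $\nbesize\Gamma$.

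\emph{Step 3: extract the normal form.} Apply Step 1 with $\rho = \rho' = \initenv\Gamma$. This yields $R'$ with $\perm{i}{a}{a}{R'}$, where $a = \evalf{A}{\initenv\Gamma}$, and $\inper{m}{m'}{R'}$ for $m = \evalf{M}{\initenv\Gamma}$ and $m' = \evalf{M'}{\initenv\Gamma}$. The third clause of \Cref{thm:per:realizability} gives $\inper{\dreif{a}{m}}{\dreif{a}{m'}}{\Nfd}$. So at every depth both values read back to a common normal form. At depth $z = \nbesize\Gamma$ this common form $W$ satisfies $\rnf{\nbesize\Gamma}{\dreif{a}{m}}{W}$ and $\rnf{\nbesize\Gamma}{\dreif{a}{m'}}{W}$. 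By \Cref{def:nbe} these are exactly $\nbed{\Gamma}{A}{M}{W}$ and $\nbed{\Gamma}{A}{M'}{W}$, which proves the statement.

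Contravariance plays no role in this argument. It affects only the proof of the fundamental theorem, which we are allowed to assume.
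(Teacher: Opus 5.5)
Your proposal is correct and follows the paper's route: instantiate the fundamental theorem (\Cref{thm:completeness:fundamental}) at the initial environment $\initenv\Gamma$, then extract the common normal form at depth $\nbesize\Gamma$ from the third clause of realizability (\Cref{thm:per:realizability}). Your Step~2 spells out the standard fact that $\initenv\Gamma$ is self-related in the context PER, which the paper leaves implicit; the appeal to weak functionality there is unnecessary, because the head relation is obtained directly by instantiating the context PER at the related tail.
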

\vspace{-3pt}%

\subsection{Soundness}\label{sec:nbe-per:soundness}

Establishing soundness requires a gluing model that connects semantic domain
values back to syntax. We use the conventional Kripke gluing
construction. We defer the full definitions in
\Cref{app:gluing}; here we only present its signature and a high-level discussion. For a domain type $a$,
the Kripke gluing model $\glumodels{i}{P}{El}{a}$ assigns two predicates to
$a$: the type predicate $P$ classifies syntactic types representing $a$, and the term predicate $El$ relates syntactic terms and
terms and domain values as elements of $a$. We write $\gluty{A}[i]{P}$ when the
syntactic type $A$ is in the type predicate, and
$\glutm{A}{M}{m}[i]{El}$ when the term $M$ is related to the domain value $m$.
At a $\Pi$ type, this term predicate relate functions extensionally by applying
them to glued arguments. The gluing predicates quantify over weakening substitutions, so glued types
and terms is stable under any extended contexts, hence the Kripkeness of the model. The
weakening relation incorporates context subtyping, so the Kripke model is automatically compatible with subtyping.  Context
gluing then generalizes the gluing model for domain types pointwise and relates substitutions to environments. Thus
$\sndtyping{M}{A}$ means that, for every glued
substitution--environment pair $(\sigma,\rho)$, the syntactic term
$M[\sigma]$ is glued to the evaluated value $\evalf{M}{\rho}$ at the type
$A[\sigma]$.

The standard gluing facts are also unchanged. Realizability asserts that readbacks of domain values are definitionally equivalent to their related syntactic terms. Cumulativity raises glued types and terms
to higher universes; for terms, the $\Pi$ case also requires a less intuitive lowering lemma,
because the element predicate occurs contravariantly. These facts let the
subtyping transport lemma below align the universe level of semantic subtyping
with the universe levels of the two gluing derivations.

Thus contravariance does not change the gluing model itself; it only changes how
semantic subtyping is transported through gluing. Subtyping enters soundness
through the typing and substitution subsumption rules, whose semantic cases
transport glued terms along semantic subtyping:

\vspace{-3pt}%
\begin{lemma}[\mechx{Subsumption for gluing}{glu\_rel\_exp\_subtyp}]\label{lem:soundness:subtyp}
  If $\sndtyping M A$, $\sndtyping {A'} {\Ty i}$, and $\subtyp A {A'}$, then
  $\sndtyping M {A'}$; similarly for substitutions along
  $\wfctxsub{\Delta}{\Delta'}$.
\end{lemma}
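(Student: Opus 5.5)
The proof reduces to a transport lemma about the Kripke gluing model, proved by induction on semantic subtyping. We then instantiate it pointwise at glued substitution--environment pairs.

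\emph{Reduction to domain types.} First unfold $\sndtyping M {A'}$. Fix a glued pair $(\sigma,\rho)$ in context gluing for $\Gamma$ over a target context $\Delta$. From $\sndtyping M A$ and $\sndtyping{A'}{\Ty i}$ we obtain two things:
\begin{itemize}
\item type predicates with $\gluty[\Delta]{A[\sigma]}[k]{P}$ for $\evalf{A}{\rho}$ and $\gluty[\Delta]{A'[\sigma]}[k']{P'}$ for $\evalf{A'}{\rho}$;
\item $\glutm[\Delta]{A[\sigma]}{M[\sigma]}{\evalf{M}{\rho}}[k]{El}$.
\end{itemize}
Applying \Cref{thm:completeness:fundamental} to $\subtyp A {A'}$ gives $\semsub A {A'}$. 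To use it at $\rho$, I need that a glued environment is self-related in the context PER $R$ of $\Gamma$. This should follow from the standard fact that context gluing entails PER-relatedness, as in \mcltt. We then get $\semsubtyp{\evalf{A}{\rho}}{\evalf{A'}{\rho}}{l}$ for some $l$. Using cumulativity of gluing, the lowering lemma, and cumulativity of semantic subtyping (\Cref{lem:per:subtyp-props}), I align $k$, $k'$ and $l$ to a single level $j$.

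\emph{Core lemma.} I prove the following by induction on $\semsubtyp{a}{b}{j}$. If $\semsubtyp{a}{b}{j}$, $\glumodels{j}{P}{El}{a}$ and $\glumodels{j}{P'}{El'}{b}$, then for all $\Delta$, $T$, $T'$ with $\gluty[\Delta]{T}[j]{P}$ and $\gluty[\Delta]{T'}[j]{P'}$:
\begin{enumerate}
\item[(i)] $\subtyp[\Delta]{T}{T'}$;
\item[(ii)] $\glutm[\Delta]{T}{N}{n}[j]{El}$ implies $\glutm[\Delta]{T'}{N}{n}[j]{El'}$.
\end{enumerate}
For (ii) at $T'$ we must first convert $N$'s type from $T$ to $T'$ via subsumption, so (ii) depends on (i) at the same node. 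This is why both claims are proved in one induction.

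The base cases are routine.
\begin{itemize}
\item \emph{Neutral case.} By realizability of gluing, $T$ and $T'$ are each equivalent to the readback of their neutrals. These readbacks agree because the neutrals are related in $\Ned$, which gives (i). Part (ii) follows because both term predicates are defined via the same neutral readbacks.
\item \emph{Universe case.} (i) is the $\Ty i \subseteq \Ty{i'}$ rule combined with conversion. (ii) is cumulativity of glued types.
\end{itemize}

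\emph{$\Pi$ case (main obstacle).} Inversion of the gluing predicates gives $T \approx \usynPi{x}{T_1}{T_2}$ and $T' \approx \usynPi{x}{T_1'}{T_2'}$, with $T_1$, $T_1'$ glued to $a$, $a'$.
\begin{itemize}
\item For the input types, the IH on $\semsubtyp{a'}{a}{j}$ yields $\subtyp[\Delta]{T_1'}{T_1}$ and transport of glued arguments from $a'$ to $a$.
\item For the output types, I work in $\Delta, x:T_1'$. The variable $x$ is glued to $\dneut{a'}{\dvar{|\Delta|}}$ by reflection; it is self-related in $R_I$ by \Cref{thm:per:realizability}. Transport by the input IH makes it an argument glued at $a$, so the gluing of $T$ supplies $T_2$ glued to $\evalf{B}{\denvext{\rho}{n}}$. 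The output premise of the $\Pi$ rule, instantiated at related values in $R_I$, together with the IH, gives $\subtyp[\Delta,x:T_1']{T_2}{T_2'}$. The $\Pi$ subtyping rule then yields (i), with its well-formedness premises coming from presupposition.
\item For (ii), take a weakening $\tau$ into $\Delta'$ and an argument $N'$ glued at $a'$. I transport $N'$ to $a$ via the input IH, apply the glued function, and transport the result via the output IH.
\end{itemize}
The delicate points are twofold. First, the weakenings must include context subtyping, so that the Kripke quantification lets us move between $\Delta, x:T_1$ and $\Delta,x:T_1'$. Second, the IH must be available on all evaluations of $B$, so I would do the induction on the well-foundedness measure used for transitivity rather than on the raw derivation.

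\emph{Substitutions.} The substitution variant follows by induction on $\wfctxsub{\Delta}{\Delta'}$. At each extension step we apply the core lemma's part (ii) to the last component of the glued substitution.
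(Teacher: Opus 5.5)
Your proposal is correct and follows the paper's route. You obtain semantic subtyping at the glued environment via the completeness fundamental theorem, align levels using gluing cumulativity, and prove escape and transfer by one simultaneous induction on $\semsubtyp{a}{a'}{i}$, whose $\Pi$ case transfers the reflected variable from $a'$ to $a$ before escaping the output types; it is this call of transfer inside escape, not merely (ii) using (i) at the same node, that forces the joint induction. Induction on the raw derivation already suffices, because the output premise quantifies over all $R_I$-related arguments, so you do not need the transitivity measure.
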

\vspace{-3pt}%

The key point behind \Cref{lem:soundness:subtyp} is the interaction between
semantic subtyping and gluing. Two facts are needed. The first, escape, turns
semantic subtyping between glued domain types into syntactic subtyping. The
second, transfer, moves a glued term across semantic subtyping. In the
covariant case these facts can be proved sequentially: escape first, then
transfer. This relies on the covariant $\Pi$ rule, where input types are related by PER.  When escape descends under a $\Pi$ to compare output types,
the bound variable is already glued at the input relation expected by the
output predicate.

Contravariance breaks this sequence. The bound variable is glued on the
wrong side of the input subtyping and must be transferred before escape can
continue under the binder. But transfer for $\Pi$ types uses escape, so the
two statements are proved by one induction:

\vspace{-3pt}%
\begin{lemma}[\mechx{Escape and transfer}{glu\_univ\_elem\_per\_subtyp\_typ\_escape\_trm\_if}]\label{lem:soundness:escape-transfer}
  If $\semsubtyp{a}{a'}{i}$, $\glumodels{i}{P}{El}{a}$, and
  $\glumodels{i}{P'}{El'}{a'}$, then
  \begin{itemize}[nosep,leftmargin=*]
  \item (escape) if $\gluty{A}[i]{P}$ and $\gluty{A'}[i]{P'}$, then $\subtyp A {A'}$;
  \item (transfer) if $\gluty{A'}[i]{P'}$, $\subtyp{A}{A'}$, and
    $\glutm{A}{M}{m}[i]{El}$, then $\glutm{A'}{M}{m}[i]{El'}$.
  \end{itemize}
\end{lemma}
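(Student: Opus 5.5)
We prove both clauses simultaneously by induction on the derivation of $\semsubtyp{a}{a'}{i}$; for the $\Pi$ case a measure on $a$ and all evaluations of its codomain may be needed, as in the transitivity proof of \Cref{lem:per:subtyp-props}. In each case we invert the two gluing derivations $\glumodels{i}{P}{El}{a}$ and $\glumodels{i}{P'}{El'}{a'}$; these are determined by the shapes of $a$ and $a'$, which match in every rule of \Cref{fig:per-subtyp}. The statement must be quantified over all contexts $\Gamma$, since the $\Pi$ case uses the induction hypothesis in extended contexts.

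\emph{Neutral case.} Here $a = \dneut{b}{v}$, $a' = \dneut{b'}{v'}$ and $\inper{v}{v'}{\Ned}$. The neutral gluing predicates say that $A$ (resp.\ $A'$) is equivalent to the readback of $v$ (resp.\ $v'$) under every weakening. Taking the identity weakening and using that $v$ and $v'$ read back to the same neutral form, we get $\tyequiv{A}{A'}{\Ty i}$, and escape follows by the reflexivity rule of subtyping. For transfer, the neutral term predicate relates $M$ to the readback of the underlying neutral of $m$ at type $A$. We convert its type to $A'$ by subsumption and use that the readback is unchanged.

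\emph{Universe case.} Here $a = \dU j$, $a' = \dU{j'}$ with $j \le j' < i$. The predicates give $\tyequiv{A}{\Ty j}{\Ty i}$ and $\tyequiv{A'}{\Ty{j'}}{\Ty i}$. Escape combines these equivalences with the universe rule, or with reflexivity when $j = j'$, using transitivity. Transfer lifts the glued type $M$ from level $j$ to level $j'$ by gluing cumulativity.

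\emph{$\Pi$ case.} This is the main obstacle. Here $a = \dpi{a_1}{B}{\rho}$ and $a' = \dpi{a_1'}{B'}{\rho'}$, with $\semsubtyp{a_1'}{a_1}{i}$ and output subtyping over $R_I$ of $a_1'$. Inverting the gluing predicates gives $\tyequiv{A}{\usynPi{x}{A_1}{B_0}}{\Ty i}$ and $\tyequiv{A'}{\usynPi{x}{A_1'}{B_0'}}{\Ty i}$, with glued input types and codomain predicates indexed by glued arguments.

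For escape, the IH for escape on the input pair gives $\subtyp{A_1'}{A_1}$. We then work in $\Gamma, x : A_1'$, where the variable $x$ is glued at $El_1'$ by realizability of gluing. To open the codomain of $A$ we need $x$ glued at $El_1$ instead. We obtain it from the \emph{transfer} IH on the input pair, applied to $x$ along $\subtyp{A_1'}{A_1}$; this is exactly why the two clauses must be proved together. This yields glued codomains $B_0$ and $B_0'$ at the evaluated output types. The PER relatedness of the neutral variable value in $R_I$ comes from \Cref{thm:per:realizability}, and it lets the IH for escape on the outputs give $\subtyp[\Gamma,x:A_1']{B_0}{B_0'}$. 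The $\Pi$ subtyping rule then concludes, using the narrowing \Cref{thm:syntax:narrowing} and presupposition for its side premises.

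For transfer, let a weakening $\Delta \vdash_w \tau : \Gamma$ and an argument $N$ glued to $n$ at $El_1'$ be given. The transfer IH on the input pair moves $N$ to $El_1$. The hypothesis on $M$ then makes $M[\tau]~N$ glued to $\dapp{m}{n}$ at the output predicate of $a$. Finally, the output transfer IH, fed by the just-established escape for the output types, moves it to the output predicate of $a'$. Along the way, level mismatches between the gluing derivations and $i$ are handled by the gluing cumulativity and lowering lemmas. Kripke stability under $\tau$ follows because weakening absorbs context subtyping.
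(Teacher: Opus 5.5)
Your proposal is correct and follows the paper's proof. Like the paper, you run a single simultaneous induction on $\semsubtyp{a}{a'}{i}$, with the only mutual call in the $\Pi$ case: input-pair transfer moves the bound variable of $\Gamma, x : A_1'$ into the input predicate of $a$, so that output escape can proceed in the narrower context. The well-founded measure you hedge on is unnecessary: the $\Pi$ rule of semantic subtyping already contains the reversed input derivation and all output derivations as direct sub-derivations, so plain induction on the derivation suffices, unlike the transitivity proof.
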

\vspace{-3pt}%

\vspace{-3pt}%
\begin{proof}
  By a single induction on the derivation of $\semsubtyp{a}{a'}{i}$, proving
  both statements simultaneously. The mutual calls described above fall
  entirely in the $\Pi$ case, where escape compares the output types of
  $\subtyp{\usynPi{x}{A_1}{B_1}}{\usynPi{x}{A_1'}{B_1'}}$ in the narrower
  context $\Gamma, x : A_1'$; the remaining cases follow componentwise.
\end{proof}
\vspace{-3pt}%

\vspace{-3pt}%
\begin{theorem}[\mechx{Fundamental theorem and soundness of NbE}{soundness}]\label{thm:soundness:nbe}
  If $\typing M A$, then $\sndtyping M A$, and consequently there is a $W$
  with $\nbed{\Gamma}{A}{M}{W}$ and $\tyequiv{M}{W}{A}$.
\end{theorem}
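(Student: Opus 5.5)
The proof follows the standard gluing template. First I prove the fundamental theorem by mutual induction on the seven syntactic judgments, sending each derivation to its semantic counterpart. For $\typing M A$ this gives $\sndtyping M A$; the other judgments give their glued analogues (substitutions, equivalences, and, for context subtyping, a semantic version stating that glued substitution--environment pairs at the sub-context are also glued at the super-context). Most cases are unchanged from \mcltt. Each typing rule has a semantic lemma: $\Pi$ formation, $\lambda$ introduction, application, the recursor, explicit substitutions, and so on. These are proved using the Kripke structure, realizability of the gluing model, and cumulativity (including the lowering lemma for the contravariant element predicate).

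The new content lies in the cases touching subtyping.
\begin{itemize}[nosep,leftmargin=*]
\item \emph{Typing subsumption.} The induction hypotheses give $\sndtyping M A$ and $\sndtyping{A'}{\Ty i}$, and the premise gives $\subtyp A {A'}$. I conclude by \Cref{lem:soundness:subtyp}.
\item \emph{Substitution subsumption} along $\wfctxsub{\Delta}{\Delta'}$. This is handled by the substitution half of the same lemma.
\end{itemize}
Both rely on semantic subtyping. I obtain $\semsubtyp{\evalf{A}{\rho}}{\evalf{A'}{\rho}}{i}$ from the completeness fundamental theorem (\Cref{thm:completeness:fundamental}) applied to $\subtyp A{A'}$. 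The environment $\rho$ must lie in the context PER, which I get by realizability of context gluing, since a glued environment is PER-related to itself. I then use \Cref{lem:soundness:escape-transfer} (transfer) to move the glued term across. The subtyping judgment itself needs no separate gluing clause: wherever it is used, it is consumed through this route.

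The case I expect to be the main obstacle is aligning universe levels and contexts in the transport step. There are three separate levels to reconcile: the level from the semantic subtyping witness, the level at which $A$ is glued, and the level at which $A'$ is glued. I would raise everything to their maximum using \Cref{lem:per:subtyp-props} (cumulativity) and gluing cumulativity, and lower back where needed. A second issue is that the weakening relation in the Kripke model must already absorb context subtyping. This is what lets the $\Pi$ case of transfer re-glue the bound variable in the narrower context $\Gamma, x : A_1'$, and it is already built into \Cref{lem:soundness:escape-transfer}.

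Finally, I extract soundness of NbE. I instantiate $\sndtyping M A$ at the identity substitution $\synid$ glued to the initial environment $\initenv\Gamma$. This gluing holds by realizability: reflected variables $\dneut{a}{\dvar z}$ are glued to the syntactic variables. That yields $\glutm{A}{M}{\evalf{M}{\initenv\Gamma}}[i]{El}$. Realizability of the gluing model then says that the readback $W$ of $\dreif{\evalf{A}{\initenv\Gamma}}{\evalf{M}{\initenv\Gamma}}$ at depth $\nbesize\Gamma$ exists and satisfies $\tyequiv{M}{W}{A}$. Here I use $M[\synid]\approx M$ and $A[\synid]\approx A$, plus reflexivity to undo any level raising. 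By \Cref{def:nbe}, this readback is exactly $\nbed{\Gamma}{A}{M}{W}$.
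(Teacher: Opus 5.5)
Your proposal is correct and follows the paper's route. The Kripke gluing fundamental theorem is unchanged except for the typing and substitution subsumption cases, which go through \Cref{lem:soundness:subtyp} via the mutually proved escape/transfer lemma, with the semantic subtyping witness supplied by \Cref{thm:completeness:fundamental} on a self-related glued environment and levels aligned by cumulativity and lowering; soundness of NbE is then read off by gluing $\synid$ to $\initenv\Gamma$ and applying realizability, exactly as you describe.

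One small trim: you do not need glued analogues of the equivalence judgments, since the gluing model is unary. Equivalence only enters through the reflexivity rule of subtyping, which is consumed via the completeness theorem, so those induction motives can simply be trivial.
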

\vspace{-3pt}%

Soundness and completeness yield the following properties: NbE is
idempotent on its own outputs, and subtyping admits an inversion principle
characterizing derivable subtypings.

\vspace{-3pt}%
\begin{lemma}[\mechx{Idempotence of NbE}{nbe\_ty\_idempotent\_on\_nbe\_results}]\label{lem:nbe:idempotent}
  If $\typing{A}{\Ty i}$ and $\nbetyd{\Gamma}{A}{W}$, then $\nbetyd{\Gamma}{W}{W}$.
\end{lemma}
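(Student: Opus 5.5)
The argument combines soundness (\Cref{thm:soundness:nbe}) with completeness (\Cref{thm:completeness:nbe}). The type-level version of NbE, $\nbetyd{\Gamma}{A}{V}$, agrees with term NbE at a universe: by \Cref{def:nbe}, $\nbed{\Gamma}{\Ty i}{A}{V}$ unfolds to $\rnf{\nbesize{\Gamma}}{\dreif{\dU i}{\evalf{A}{\initenv\Gamma}}}{V}$. The universe readback rule reduces this to $\rtyd{\nbesize{\Gamma}}{\evalf{A}{\initenv\Gamma}}{V}$, because $\evalf{\Ty i}{\initenv\Gamma} = \dU i$. So for types, the two notions of NbE coincide, and I can work with whichever is convenient.

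The steps are as follows.

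\begin{enumerate}
\item Start from $\typing{A}{\Ty i}$ and apply \Cref{thm:soundness:nbe} at type $\Ty i$. This yields some $W'$ with $\nbed{\Gamma}{\Ty i}{A}{W'}$ and $\tyequiv{A}{W'}{\Ty i}$.
\item By the coincidence above and functionality of readback, $W' = W$. Hence $\tyequiv{A}{W}{\Ty i}$.
\item Apply \Cref{thm:completeness:nbe} to this equivalence. It gives a common $W''$ with $\nbed{\Gamma}{\Ty i}{A}{W''}$ and $\nbed{\Gamma}{\Ty i}{W}{W''}$.
\item By functionality, $W'' = W$, so $\nbed{\Gamma}{\Ty i}{W}{W}$.
\item Translating back through the coincidence gives $\nbetyd{\Gamma}{W}{W}$.
\end{enumerate}

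I expect the main obstacle to be the bookkeeping that relates the two NbE relations. This means checking that the universe readback rule is the only applicable rule for a $\dU i$-tagged value, so the coincidence holds in both directions, and that functionality of evaluation and readback is available as stated. Both points are routine given the remarks after \Cref{def:nbe}.

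Contravariance causes no trouble in this argument. Equal normal forms are only claimed for the same type $\Ty i$ in the same context $\Gamma$, so the non-uniqueness of $\eta$-long forms across subtypes never arises.
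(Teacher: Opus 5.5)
Your proposal is correct and follows the paper's route. The paper derives this lemma directly from soundness and completeness of NbE: soundness (\Cref{thm:soundness:nbe}) gives $\tyequiv{A}{W}{\Ty i}$, completeness (\Cref{thm:completeness:nbe}) gives a common normal form for $A$ and $W$, and functionality identifies that form with $W$. Your check that type-level NbE coincides with term-level NbE at $\Ty i$ (only the universe readback rule applies to a $\dU i$-tagged value) is exactly the bookkeeping the paper leaves implicit.
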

\vspace{-3pt}%

\vspace{-3pt}%
\begin{lemma}[\mechx{Subtyping inversion}{subtyp\_spec}]\label{lem:subtyp-spec}
  If $\subtyp A B$, then one of the following holds, up to $\approx$:
  \begin{itemize}
  \item $\tyequiv{A}{B}{\Ty i}$ for some $i$; or
  \item $A \approx \Ty i$ and $B \approx \Ty j$ with $i \le j$; or
  \item $A \approx \usynPi{x}{A_1}{B_1}$ and $B \approx \usynPi{x}{A_2}{B_2}$
    with $\subtyp{A_2}{A_1}$ and $\subtyp[\Gamma, x : A_2]{B_1}{B_2}$.
  \end{itemize}
\end{lemma}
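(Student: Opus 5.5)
The plan is to go through the semantic model rather than by induction on the syntactic derivation. A direct induction on $\subtyp A B$ gets stuck at the transitivity rule, since the middle type can have any shape. Completeness and soundness let us read off a canonical shape instead.

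The first step applies the fundamental theorem for completeness (\Cref{thm:completeness:fundamental}) to $\subtyp A B$. This gives $\semsub A B$. We then instantiate it at the initial environment $\initenv\Gamma$, which is self-related by the context PER (realizability, \Cref{thm:per:realizability}, supplies reflected variables). We obtain $a = \evalf{A}{\initenv\Gamma}$ and $b = \evalf{B}{\initenv\Gamma}$ with $\semsubtyp{a}{b}{i}$, both self-related in $\Uc_i$.

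The second step is a case analysis on the last rule of $\semsubtyp{a}{b}{i}$; the cases correspond to the three disjuncts.
\begin{itemize}
\item \textbf{Neutral case.} Both values have the form $\dneut{\cdot}{v}$ with $v \approx v'$ in $\Ned$, so $A$ and $B$ read back to the same neutral type. By soundness of NbE (\Cref{thm:soundness:nbe}, type version via the universe) both are equivalent to that normal form. Hence $\tyequiv{A}{B}{\Ty i}$.
\item \textbf{$\Nat$ case.} This case is similar and also yields the first disjunct.
\item \textbf{Universe case.} Here $a = \dU{i'}$ and $b = \dU{j'}$ with $i' \le j'$. Soundness gives $A \approx \Ty{i'}$ and $B \approx \Ty{j'}$.
\item \textbf{$\Pi$ case.} Here $a = \dpi{a_1}{B_1}{\rho}$ and $b = \dpi{a_2}{B_2}{\rho'}$. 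By soundness, $A \approx \usynPi{x}{W_1}{W_2}$ and $B \approx \usynPi{x}{W_1'}{W_2'}$, where these are the normal forms produced by readback. Inverting the equivalence of $\Pi$ types in $\Ty i$ yields well-formedness of the components.
\end{itemize}

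The main obstacle is the $\Pi$ case: from $\semsubtyp{a_1}{a_2}{i}$ information we must recover the syntactic judgments $\subtyp{W_1'}{W_1}$ and $\subtyp[\Gamma, x : W_1']{W_2}{W_2'}$. My first attempt is to use escape (\Cref{lem:soundness:escape-transfer}). The glued type predicates for $\usynPi{x}{W_1}{W_2}$ and $\usynPi{x}{W_1'}{W_2'}$ come from the gluing fundamental theorem applied to $A$ and $B$ (\Cref{thm:soundness:nbe}), transported along $\approx$. Escape on the premise $\semsubtyp{a_2}{a_1}{i}$ then gives $\subtyp{W_1'}{W_1}$ directly.

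For the outputs, we work in the extended context $\Gamma, x : W_1'$. The fresh variable, reflected at $a_2$, is glued at the gluing model of $a_2$; this is realizability of gluing. It is also $R_I$-related to itself, so the last premise of the semantic $\Pi$ rule yields output subtyping between the evaluations of $B_1$ and $B_2$ at that variable. Escape then gives $\subtyp[\Gamma, x : W_1']{W_2}{W_2'}$. One subtlety is that $W_2$ is glued only under $\Gamma, x : W_1$. It must be moved to the narrower context, either by the Kripke weakening that incorporates context subtyping or by narrowing (\Cref{thm:syntax:narrowing}). The variable also has to be transferred along $\semsubtyp{a_2}{a_1}{i}$, exactly as in the proof of \Cref{lem:soundness:escape-transfer}.

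If extracting these gluing witnesses directly proves too fiddly, I would prove a strengthened escape lemma instead, by the same induction as \Cref{lem:soundness:escape-transfer}. That lemma would return the structural decomposition together with the syntactic subtyping. All level bookkeeping would be handled by cumulativity (\Cref{lem:syntax:cumu}) and the cumulativity of gluing.
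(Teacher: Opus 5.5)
Your argument works in outline, but it takes a much heavier route than the paper. The paper simply lists this lemma among the consequences of NbE soundness and completeness, used as black boxes.

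Your reason for leaving the syntax is not accurate. Induction on the derivation of $\subtyp A B$ does not get stuck at transitivity. Both induction hypotheses describe the middle type $A'$ up to $\approx$, and every clash between them is settled by NbE. Completeness gives $\approx$-related types equal normal forms, so:
\begin{itemize}
\item $\Ty{j} \approx \Ty{j'}$ forces $j = j'$;
\item no universe is $\approx$ to a $\Pi$;
\item $\usynPi{x}{A_2}{B_2} \approx \usynPi{x}{A_2'}{B_2'}$ yields $A_2 \approx A_2'$ and $B_2 \approx B_2'$ in $\Gamma, x : A_2$. For the second, the output part of the $\Pi$ readback at $\initenv\Gamma$ is exactly $\nbetyf{\Gamma, x:A_2}{B_2}$, so soundness plus context conversion suffice.
\end{itemize}
The $\Pi$/$\Pi$ transitivity case then closes as follows. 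One hypothesis gives $A \approx \usynPi{x}{A_1}{B_1}$ and $A' \approx \usynPi{x}{A_2}{B_2}$; the other gives $A' \approx \usynPi{x}{A_2'}{B_2'}$ and $A'' \approx \usynPi{x}{A_3}{B_3}$. Chain $A_3 \subseteq A_2' \approx A_2 \subseteq A_1$ on the inputs. Use narrowing (\Cref{thm:syntax:narrowing}) to move $B_1 \subseteq B_2$ and $B_2 \approx B_2'$ into $\Gamma, x : A_3$, then chain with $B_2' \subseteq B_3$. This proof never opens the gluing model.

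Your route instead reads the decomposition off the canonical semantic derivation at $\initenv\Gamma$. That spares you the case split and $\Pi$ injectivity. In exchange, you re-run the $\Pi$ case of \Cref{lem:soundness:escape-transfer} by hand. You need:
\begin{itemize}
\item gluing of the identity substitution to $\initenv\Gamma$;
\item realizability of gluing for the fresh variable;
\item transfer of that variable along $\semsubtyp{a_2}{a_1}{i}$;
\item level alignment between the gluing and the semantic subtyping.
\end{itemize}
You also leave one step implicit. The $\Pi$ type predicate only glues the components $I, O$ of some $\usynPi{x}{I}{O} \approx A$, not your $W_1, W_2$ directly. You must either identify $I$ with $W_1$ (via realizability of gluing, or injectivity), or simply take $A_1 := I$ and $B_1 := O$.

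Your fallback of strengthening escape so that it also returns this decomposition is the cleanest way to package that work, and it would go through.
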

\vspace{-3pt}%

Other usual metatheoretic properties also come as corollaries, by
the conventional approach: equivalence is decided by normalize-and-compare;
closed terms of type $\Nat$ are definitionally equal to numerals (canonicity);
and the theory is consistent, stated as the
non-existence of $M$ satisfying $\typing[\cdot]{M}{\usynPi{x}{\Ty i}{x}}$. The contravariant
rule changes none of these statements, but it does change their proofs'
shape: with subsumption available, canonical-form analyses must invert typing
modulo subtyping. 

%% file: sections/algo_typing.tex
\section{From NbE to Type Checking}\label{sec:algo}

After re-establishing a complete and sound NbE algorithm, the final step is to verify the type-checker and the subtyping algorithm, a critical dependency of the type checker. %
In this section, we introduce the type checker, which is identical to that of \mcltt. %
Thus, its correctness reduces to the subtyping algorithm it invokes. %
We first give the bidirectional
type-checking rules, which invokes the subtyping algorithm; we then prove that the
type-checking algorithm is complete and sound w.r.t. the syntactic judgments. %
The whole challenge of the verification falls on
\emph{subtyping}, which we describe two algorithms of in the two following sections, one doing
\emph{renormalization} 
(\Cref{sec:renorm}) and the other \emph{renormalization-free} (\Cref{sec:renorm-free}). %
\Cref{sec:semantic-hard} later
analyzes why the typical semantic route to subtyping is too complex and challenging.

\begin{figure}
  \raggedright
  \mysbox{$\acheck{M}{A}$} \hfill $M$ checks against the type $A$ in $\Gamma$
  \begin{mathparshrink}
    \inferrule
    {\ainfer{M}{W} \\ \agsub{W}{A}}
    {\acheck{M}{A}}
  \end{mathparshrink}
  \vspace{-3pt}
  \raggedright
  \mysbox{$\ainfer{M}{W}$} \hfill $M$ synthesizes the normal-form type $W$ in $\Gamma$
  \begin{mathparshrink}
    \inferrule
    {~}
    {\ainfer{\Ty i}{\Ty{(1+i)}}}

    \inferrule
    {x : A \in \Gamma \\ \nbetyd{\Gamma}{A}{W}}
    {\ainfer{x}{W}}

    \inferrule
    {\ainfer{A}{\Ty i} \\\\ \ainfer[\Gamma, x : A]{B}{\Ty j}}
    {\ainfer{\usynPi{x}{A}{B}}{\Ty{(\max(i,j))}}}

    \inferrule
    {\ainfer{A}{\Ty i} \\\\ \ainfer[\Gamma, x : A]{M}{W} \\
     \nbetyd{\Gamma}{A}{V}}
    {\ainfer{\usynlam{x}{A}{M}}{\usynPi{x}{V}{W}}}

    \inferrule
    {\ainfer{M}{\usynPi{x}{V}{W}} \\ \acheck{N}{V} \\
     \nbetyd{\Gamma}{W[\synid, N/x]}{W'}}
    {\ainfer{M~N}{W'}}
  \end{mathparshrink}
  \vspace{-12pt}%
  \caption{Bidirectional type checking, parameterized by $\asubtyp{A}{B}$.}
  \label{fig:bidir}
  \vspace{-12pt}%
  \figtailspace
\end{figure}

The bidirectional type checking algorithm is directly inherited from \mcltt: the inference judgment
$\ainfer{M}{W}$ synthesizes a \emph{normal} type $W$ for $M$, and
the checking judgment $\acheck{M}{A}$ compares a given type
against an inferred one from $M$ (\Cref{fig:bidir}). NbE keeps synthesized types normal; the check rule is the only place to invoke subtyping. %
To support contravariance, we only change the subtyping algorithm. %
The subtyping algorithm $\agsub{A}{B}$ has two concrete instantiations: the renormalizing
algorithm $\asubtyp{A}{B}$ in \Cref{sec:renorm}, the 
renormalization-free algorithm $\arfsub{A}{B}$ in \Cref{sec:renorm-free}. %
For now, let us admit that these two subtyping algorithms are sound and complete, whose proof are discussed in the next two sections, and move on to proving the completeness and soundness of typing.

\vspace{-3pt}%
\begin{theorem}[\mechx{Soundness of algorithmic
    typing}{alg\_type\_\{check,infer\}\_sound,
    alg\_type\_\{check,infer\}\_rf\_sound}]\label{thm:bidir:sound}
  Suppose $\wfctx{\Gamma}$,
  \begin{itemize}[nosep,leftmargin=*]
  \item if $\ainfer{M}{W}$, then $\typing{M}{W}$;
  \item if $\acheck{M}{A}$ and $\typing{A}{\Ty i}$, then $\typing{M}{A}$.
  \end{itemize}
\end{theorem}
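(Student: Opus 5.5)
We prove both items simultaneously by induction on the derivations of $\ainfer{M}{W}$ and $\acheck{M}{A}$, generalizing over $\Gamma$ and keeping $\wfctx{\Gamma}$ as an invariant. We assume soundness of the subtyping algorithm: if $\agsub{W}{A}$ with both sides well-formed types, then $\subtyp{W}{A}$. The two recurring tools are soundness of NbE (\Cref{thm:soundness:nbe}) and presupposition (\Cref{thm:syntax:presup}). Whenever a rule produces $\nbetyd{\Gamma}{A}{V}$ for a well-formed $A$, soundness gives $\tyequiv{A}{V}{\Ty i}$. Via the reflexivity subtyping rule and subsumption, this lets us replace $A$ by $V$ in any typing conclusion.

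\textbf{Checking.} For $\acheck{M}{A}$ we have $\ainfer{M}{W}$ and $\agsub{W}{A}$. The IH gives $\typing{M}{W}$, and presupposition gives $\typing{W}{\Ty j}$. By cumulativity (\Cref{lem:syntax:cumu}), both $W$ and $A$ live at the common level $\max(i,j)$. Soundness of $\agsub{}{}$ then yields $\subtyp{W}{A}$, and subsumption concludes $\typing{M}{A}$.

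\textbf{Inference.} We take the rules in turn.
\begin{itemize}
\item The universe rule is immediate.
\item For variables, the context typing rule gives $\typing{x}{A}$, and NbE soundness converts $A$ to $W$.
\item For $\Pi$, the IH gives $\typing{A}{\Ty i}$. This makes $\Gamma, x:A$ well-formed, so the IH applies to $B$. We lift both to $\Ty{\max(i,j)}$ by cumulativity and apply the $\Pi$ formation rule.
\item For $\lambda$, the IH gives $\typing{A}{\Ty i}$ and $\typing[\Gamma,x:A]{M}{W}$. The abstraction rule yields $\typing{\usynlam{x}{A}{M}}{\usynPi{x}{A}{W}}$. Since $\tyequiv{A}{V}{\Ty i}$ by NbE soundness, $\Pi$ congruence gives $\usynPi{x}{A}{W} \approx \usynPi{x}{V}{W}$, and we conclude by subsumption.
\item For application, the IH gives $\typing{M}{\usynPi{x}{V}{W}}$. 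Presupposition followed by $\Pi$ inversion gives $\typing{V}{\Ty k}$, so the checking IH applies to $N$ and yields $\typing{N}{V}$. The application rule gives $\typing{M~N}{W[\synid,N/x]}$, and NbE soundness on $W[\synid,N/x]$ converts this to $W'$.
\end{itemize}

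\textbf{Main obstacle.} The step most likely to need care is the inversion used in the application case. We must obtain $\typing{V}{\Ty k}$ and $\typing[\Gamma,x:V]{W}{\Ty k}$ from $\typing{\usynPi{x}{V}{W}}{\Ty{k}}$. With subsumption present, $\Pi$-typing inversion holds only modulo subtyping of the universe, so we would invoke the inversion lemma from the \mcltt{} syntactic development, adapted to \mclttsub. An alternative would be to strengthen the inference IH to also return well-formedness of $W$, but this duplicates presupposition, so we avoid it.

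The remaining care points are universe-level bookkeeping, handled by cumulativity, and the presence of the highlighted premises in subsumption and in the $\Pi$ rules. Those premises are discharged by presupposition, which is exactly why \Cref{thm:syntax:presup} is stated beforehand.
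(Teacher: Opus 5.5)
Your proposal is correct and follows the same route as the paper's (mechanized) proof: mutual induction on the algorithmic derivations, with each case closed by NbE soundness, presupposition, cumulativity, and soundness of the chosen subtyping algorithm, re-establishing well-formedness of the expected type at every recursive check call instead of strengthening the inference hypothesis. The $\Pi$-inversion you flag is routine, because only $\Pi$-formation and subsumption can conclude a typing of a $\Pi$ expression, and the elided natural-number and recursor cases go through in the same way.
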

\vspace{-3pt}%

The well-formedness premise for type $A$ in the check direction mirrors how a kernel runs: the type checker is
invoked on a context and an expected type that has already been checked well-formed. %
This invariance is
re-established at every recursive call. %
The inference direction needs no such premise as presupposition tells us W is a well-formed type. 

\vspace{-3pt}%
\begin{theorem}[\mechx{Completeness of algorithmic
    typing}{alg\_type\_\{check,infer\}\_complete,
    alg\_type\_\{check,infer\}\_rf\_complete}]\label{thm:bidir:complete}
  Suppose $\typing{M}{A}$ with $M$ substitution-free. Then
  $\acheck{M}{A}$, and there is a $W$ with $\ainfer{M}{W}$ and
  $\agsub{W}{A}$.
\end{theorem}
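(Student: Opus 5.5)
We prove the inference part first, by induction on the derivation of $\typing{M}{A}$, with the statement strengthened so that it survives the subsumption rule. For each substitution-free well-typed $M$ we show that $\ainfer{M}{W}$ holds for some $W$ with $\subtyp{W}{A}$. Completeness of the subtyping algorithm then yields $\agsub{W}{A}$, and the check rule yields $\acheck{M}{A}$. We assume, as announced, that the chosen subtyping algorithm is complete: if $\subtyp{A}{B}$ then $\agsub{A}{B}$. We also use that inference is deterministic, since evaluation and readback are functional.

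The cases go as follows.
\begin{itemize}[nosep,leftmargin=*]
\item \emph{Subsumption.} The IH gives $\subtyp{W}{A}$, and transitivity with $\subtyp{A}{A'}$ gives $\subtyp{W}{A'}$.
\item \emph{Variables.} We infer $W = \nbetyf{\Gamma}{A}$. Soundness (\Cref{thm:soundness:nbe}) gives $\tyequiv{A}{W}{\Ty i}$, so $\subtyp{W}{A}$ by reflexivity.
\item \emph{Universes.} Immediate.
\item \emph{$\Pi$ types.} The IHs give $\ainfer{A}{W_A}$ and $\subtyp{W_A}{\Ty i}$, and similarly for $B$ in $\Gamma, x:A$. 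By \Cref{lem:subtyp-spec}, a type below $\Ty i$ is equivalent to some $\Ty{i'}$ with $i' \le i$. We must check that the inferred normal form really is a syntactic universe $\Ty{i'}$, which follows from soundness plus the fact that NbE of a universe returns a universe. Then $\Ty{\max(i',j')} \subseteq \Ty{\max(i,j)}$.
\item \emph{Abstractions $\usynlam{x}{A}{M} : \usynPi{x}{A}{B}$.} The IH gives $\ainfer[\Gamma,x:A]{M}{W}$ with $\subtyp[\Gamma,x:A]{W}{B}$, and we have $\nbetyd{\Gamma}{A}{V}$ with $A \approx V$. Then $\usynPi{x}{V}{W} \approx \usynPi{x}{A}{W}$. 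The $\Pi$ subtyping rule with reflexive input premise $\subtyp{A}{A}$ turns the output subtyping into the needed fact.
\item \emph{Natural numbers and recursors.} Handled analogously. The motive is checked against its annotation, and the result type is obtained by NbE of the motive instance, which is equivalent by soundness.
\item \emph{Conversion-style equivalence steps.} These are subsumed by reflexivity of subtyping.
\end{itemize}

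The main obstacle is the application case $M~N : B[\synid, N/x]$ with $\typing{M}{\usynPi{x}{A}{B}}$. The IH gives $\ainfer{M}{W_M}$ and $\subtyp{W_M}{\usynPi{x}{A}{B}}$. By \Cref{lem:subtyp-spec}, $W_M$ is equivalent to a $\Pi$ type whose input is a supertype of $A$ and whose output, in the context $\Gamma, x:A$, is a subtype of $B$. Since $W_M$ is a normal form produced by NbE, we have to argue that $W_M$ is syntactically $\usynPi{x}{V}{W}$. For this we use idempotence (\Cref{lem:nbe:idempotent}) together with completeness of NbE, so that an equivalent type normalizes to the same shape. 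We then obtain $\subtyp{A}{V}$ and $\subtyp[\Gamma,x:A]{W}{B}$.

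For the argument, the IH on $N$ gives $\ainfer{N}{W_N}$ with $\subtyp{W_N}{A}$. Transitivity gives $\subtyp{W_N}{V}$, and completeness of the subtyping algorithm gives $\acheck{N}{V}$. For the result type, we substitute $N$ into $\subtyp[\Gamma,x:A]{W}{B}$ using substitution stability of subtyping, with $(\synid, N/x) : \Gamma, x:A$. This gives $\subtyp{W[\synid,N/x]}{B[\synid,N/x]}$. Finally, $W' = \nbetyf{\Gamma}{W[\synid,N/x]}$ exists and is equivalent to $W[\synid,N/x]$ by soundness of NbE, which closes the case.

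The delicate points are the shape-inversion of NbE outputs and the narrowing of the output comparison. In this proof the contravariant premise only ever needs narrowing (\Cref{thm:syntax:narrowing}) in the direction already provided by the subtyping inversion lemma.
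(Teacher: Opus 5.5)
Your proposal is correct and follows the intended argument. You induct on the typing derivation with the strengthened invariant that the inferred type $W$ satisfies $\subtyp{W}{A}$, which is the ``principal type'' property the paper describes; the paper does not write the proof out and only says typing correctness reduces to the subtyping algorithm. You then conclude with completeness of the chosen subtyping algorithm and the check rule.

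The one step you use without naming it is injectivity of $\Pi$ under $\approx$. In the application case you need it on both sides of \Cref{lem:subtyp-spec} to reach $\subtyp{A}{V}$ and $\subtyp[\Gamma, x : A]{W}{B}$, and it follows from soundness and completeness of NbE. For the $\Pi$-shape of $W_M$, every inferred type is an NbE output except in the $\lambda$ case, and there the shape is syntactically immediate.
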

\vspace{-3pt}%

Completeness is stated for the substitution-free fragment, the fragment received as user inputs. %
It ensures that (1) we always succeed on checking a term against a type it has, and (2) for a fixed term in a fixed context, we always infer the \emph{principal} type, a subtype of every type that the term may have in that context. %
Combining both theorems, we are sure that the extracted type-checker to OCaml is also correct w.r.t. our syntactic judgments. 

%% file: sections/algo_renorm.tex
\section{A Renormalizing Algorithm and its Correctness Proof}\label{sec:renorm}

Our first decision procedure for contravariant
subtyping compares normal forms and \emph{renormalizes}
the output types of $\Pi$. %
Its correctness proof is \emph{purely syntactic}: it only inspects
the syntactic judgments and uses properties of NbE
established in \Cref{sec:nbe-per}. %
Renormalization conceptually simplifies the algorithm, 
but it does introduce unexpected tedium to the transitivity proof, 
where we must maintain the original normal forms as a handle to do induction on. %
We prove soundness and completeness in
\Cref{sec:renorm:algo,sec:renorm:complete} respectively.

\subsection{The Algorithm and Its Soundness}\label{sec:renorm:algo}

\begin{figure}
  \raggedright
  \mysbox{$\anfsub{W}{W'}$} \hfill normal type $W$ is an algorithmic subtype of $W'$ in $\Gamma$
  \begin{mathparshrink}
    \inferrule
    {W \text{ is not } \Ty{i} \text{ or } \Pi \\ W = W'}
    {\anfsub{W}{W'}}

    \inferrule
    {i \le j}
    {\anfsub{\Ty i}{\Ty j}}

    \inferrule
    {\anfsub{W_1'}{W_1} \\
     \nbetyd{\Gamma, x : W_1'}{W_2}{W_2''} \\
     \anfsub[\Gamma, x : W_1']{W_2''}{W_2'}}
    {\anfsub{\usynPi{x}{W_1}{W_2}}{\usynPi{x}{W_1'}{W_2'}}}
  \end{mathparshrink}
  \figrulesep
  \raggedright
  \mysbox{$\asubtyp{A}{B}$} \hfill type $A$ is an algorithmic subtype of $B$ in $\Gamma$
  \begin{mathparshrink}
    \inferrule
    {\nbetyd{\Gamma}{A}{W} \\
     \nbetyd{\Gamma}{B}{W'} \\
     \anfsub{W}{W'}}
    {\asubtyp{A}{B}}
  \end{mathparshrink}
  \vspace{-1em}
  \caption{Renormalizing algorithmic subtyping}
  \label{fig:asnf}
  \vspace{-10pt}%
  \figtailspace
\end{figure}

The algorithm $\asubtyp{A}{B}$ presented in \Cref{fig:asnf} first normalizes two types and then compares results structurally by $\anfsub{W}{W'}$. %
Two base cases are straightforward: the reflexivity
compares the equality of normal types and the universe rule compares universe by cumulativity. %
The $\Pi$ case first
recurses contravariantly on the input types. %
However, $W_2$ and $W_2'$ are results of normalization in \emph{different} contexts,
$\Gamma, x : W_1$ and $\Gamma, x : W_1'$ respectively. %
To eliminate the difference, the algorithm \emph{renormalizes} $W_2$ in $\Gamma, x : W_1'$, so that the comparison between the output type $W_2''$ and $W_2'$ is complete and sound.

The soundness theorem of the algorithm is easy to establish.  %
It says that subtypes admitted by the algorithm are also admitted by  declarative subtyping. %
Its proof is a straightforward induction.

\vspace{-3pt}%
\begin{theorem}[\mechx{Soundness}{alg\_subtyping\_nf\_sound, alg\_subtyping\_sound}]\label{thm:renorm:sound}
  If $\asubtyp{A}{B}$ with $A$ and $B$ well-formed at $\Ty i$, then $\subtyp A B$.
\end{theorem}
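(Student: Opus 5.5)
The proof splits into two steps. First we prove a lemma about normal forms by induction on the derivation of $\anfsub{W}{W'}$. Then we derive the theorem from NbE soundness (\Cref{thm:soundness:nbe}) together with transitivity and reflexivity of declarative subtyping.

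\emph{Normal-form lemma.} If $\anfsub{W}{W'}$ and $\typing{W}{\Ty i}$ and $\typing{W'}{\Ty i}$, then $\subtyp{W}{W'}$. We treat the three rules in turn.
\begin{itemize}[nosep]
\item \emph{Reflexivity.} Here $W = W'$, so the reflexivity rule of \Cref{fig:subtyping} applies.
\item \emph{Universe.} For $\Ty i \subseteq \Ty j$ we use the universe rule when $i<j$, and reflexivity when $i=j$. The premise $\wfctx\Gamma$ comes from \Cref{thm:syntax:presup}.
\item \emph{$\Pi$ case.} We must invert the well-formedness of $\usynPi{x}{W_1}{W_2}$ and $\usynPi{x}{W_1'}{W_2'}$ at $\Ty i$. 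This gives $\typing{W_1}{\Ty i}$ and $\typing[\Gamma,x:W_1]{W_2}{\Ty i}$, and similarly for the primed components. Typing inversion modulo subtyping may only yield smaller levels; we then lift them to $i$ with \Cref{lem:syntax:cumu}. The IH on the input types gives $\subtyp{W_1'}{W_1}$. Hence $\wfctxsub{(\Gamma,x:W_1')}{(\Gamma,x:W_1)}$, using reflexivity of context subtyping on $\Gamma$. Narrowing (\Cref{thm:syntax:narrowing}) then yields $\typing[\Gamma,x:W_1']{W_2}{\Ty i}$. With this typing, NbE soundness (\Cref{thm:soundness:nbe}) applies to the renormalization $\nbetyd{\Gamma,x:W_1'}{W_2}{W_2''}$. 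It gives $\tyequiv[\Gamma,x:W_1']{W_2}{W_2''}{\Ty i}$, and by presupposition $W_2''$ is well-formed at $\Ty i$. The IH on the output comparison gives $\subtyp[\Gamma,x:W_1']{W_2''}{W_2'}$. Reflexivity of subtyping applied to the equivalence, followed by transitivity, gives $\subtyp[\Gamma,x:W_1']{W_2}{W_2'}$. Finally the declarative $\Pi$ rule assembles the conclusion; every gray premise is one of the typings collected above.
\end{itemize}

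\emph{Main theorem.} Given $\asubtyp{A}{B}$, we have $\nbetyd{\Gamma}{A}{W}$, $\nbetyd{\Gamma}{B}{W'}$ and $\anfsub{W}{W'}$. NbE soundness, with $A, B : \Ty i$, gives $\tyequiv{A}{W}{\Ty i}$ and $\tyequiv{B}{W'}{\Ty i}$. Both $W$ and $W'$ are therefore well-formed at $\Ty i$, so the lemma gives $\subtyp{W}{W'}$. We conclude by chaining $A \subseteq W \subseteq W' \subseteq B$ with reflexivity (using symmetry of equivalence for the last step) and transitivity.

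\emph{Main obstacle.} The only delicate point is the inversion of $\Pi$-typing at a fixed level $i$ for normal forms, in the presence of subsumption. I would first look for an inversion lemma of the form: $\typing{\usynPi{x}{A}{B}}{\Ty i}$ implies $A$ and $B$ are typed at some $j \le i$, followed by cumulativity. If that fails, I would strengthen the lemma's hypothesis to ``well-formed at some level'', since presupposition already supplies a common level.
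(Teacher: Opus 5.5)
Your proposal is correct and follows essentially the same route as the paper. The core is an induction on $\anfsub{W}{W'}$ (the normal-form lemma, \texttt{alg\_subtyping\_nf\_sound}); its $\Pi$ case uses narrowing to type $W_2$ in $\Gamma, x : W_1'$ and NbE soundness to relate the renormalized $W_2''$ back to $W_2$. A wrapper then chains $A \approx W$, $W \subseteq W'$, $W' \approx B$ through reflexivity and transitivity of declarative subtyping (\texttt{alg\_subtyping\_sound}). The $\Pi$-typing inversion you flag is routine: inverting formation through subsumption gives the components at a level whose universe is a subtype of the target universe, and subsumption lifts them back to the required level.
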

\vspace{-3pt}%

\subsection{Completeness}\label{sec:renorm:complete}

Completeness is the converse: every declarative subtyping is accepted by the
algorithm. %
Renormalization in the $\Pi$ case shifts the proof complexity to the proof of \emph{transitivity} of $\anfsub{W}{W'}$. %
Central to transitivity is \emph{narrowing}, which states that algorithmic subtyping is stable under renormalization in sub-contexts.

Both lemmas share one difficulty: \emph{a measure to induct
on}. Renormalization breaks structural recursion. %
Instead of sub-terms, recursive calls run on the
\emph{renormalized} output types the $\Pi$ rule produces, and thus we do not have a guaranteed relation between sizes of normal forms before and after renormalization. %
The way out is to keep track of the \emph{original} normal forms computed by $\asubtyp{A}{B}$ and received by $\anfsub{W}{W'}$. %
We call $W$ and $W'$ the \emph{pre-nf}s of the normal forms after renormalization. %
More formally,
\vspace{-3pt}%
\begin{definition}[\mechx{Pre-renormalization normal form}{pre\_renorm\_nf}]\label{def:renorm:source}
  A normal form $V$ is a \emph{pre-nf} of $W$ at $\Gamma$, if
  $\typing{V}{\Ty j}$ and $\nbetyd{\Gamma}{V}{W}$. %
\end{definition}
\vspace{-3pt}%
\noindent That is, $V$ is a well-typed normal form that (re)normalizes to $W$. %
Since the subtyping algorithm proceeds by the shapes of given normal types, pre-nfs shrink for induction hypotheses in the proof. %
For example, in the $\Pi$ case, the algorithm renormalizes the left output type $W_2$ to $W_2''$ and recurses. %
Instead of applying the induction hypothesis on $W_2''$, the size of which has no immediate comparison with $W_2$, we keep track of its pre-nf $W_2$ directly, and all subsequent renormalizations of deeper recursions of $W_2''$ can be tracked by sub-terms of $W_2$ as their pre-nfs. %
The following lemma is the critical lemma which enables this proof method: renormalization of a normal type in a sub-context is identical to direct normalization in that sub-context. 
\vspace{-3pt}%
\begin{lemma}[\mechx{Composition across sub-contexts}{nbe\_ty\_compose\_ctxsub}]\label{lem:renorm:compose}
  If $\wfctxsub{\Gamma'}{\Gamma}$, $\typing{M}{\Ty i}$, $\nbetyd{\Gamma}{M}{W}$, and
  $\nbetyd{\Gamma'}{W}{W'}$, then $\nbetyd{\Gamma'}{M}{W'}$.
\end{lemma}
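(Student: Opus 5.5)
The plan is to avoid any semantic reasoning across sub-contexts. Instead, we transport the definitional equivalence between $M$ and its normal form $W$ into the sub-context $\Gamma'$, and then appeal to completeness of NbE there. Everything happens at the level of judgments; the only NbE-specific work is relating type normalization $\nbetyf{\Gamma}{-}$ to term normalization at type $\Ty i$.

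First, I relate the two forms of NbE. By \Cref{def:nbe}, normalizing $M$ as a term of type $\Ty i$ in any context $\Delta$ reads back the value $\dreif{\evalf{\Ty i}{\initenv\Delta}}{\evalf{M}{\initenv\Delta}}$. Since $\evalf{\Ty i}{\rho} = \dU i$ for every $\rho$, the universe rule of normal readback reduces this to $\Rty_{\nbesize{\Delta}}(\evalf{M}{\initenv\Delta})$. Hence
\[ \nbed{\Delta}{\Ty i}{M}{V} \iff \nbetyd{\Delta}{M}{V} \]
whenever either side is defined. Readback and evaluation are functional, so the normal form is unique on both sides.

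Next, I apply \Cref{thm:soundness:nbe} to $\typing{M}{\Ty i}$. This yields some $V$ with $\nbed{\Gamma}{\Ty i}{M}{V}$ and $\tyequiv{M}{V}{\Ty i}$. By the bridge above and functionality of NbE, $V = W$, so $\tyequiv{M}{W}{\Ty i}$. I then use the hypothesis $\wfctxsub{\Gamma'}{\Gamma}$ and the equivalence clause of \Cref{thm:syntax:narrowing} to obtain $\tyequiv[\Gamma']{M}{W}{\Ty i}$.

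Finally, I apply \Cref{thm:completeness:nbe} in $\Gamma'$. It gives a single $W''$ with $\nbed{\Gamma'}{\Ty i}{M}{W''}$ and $\nbed{\Gamma'}{\Ty i}{W}{W''}$. Translating through the bridge, this says $\nbetyd{\Gamma'}{M}{W''}$ and $\nbetyd{\Gamma'}{W}{W''}$. The hypothesis $\nbetyd{\Gamma'}{W}{W'}$ and functionality then force $W'' = W'$, so $\nbetyd{\Gamma'}{M}{W'}$ as required.

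The only step needing real care is the bridge between $\nbe^{\Ty i}$ and $\nbe^{\mathsf{ty}}$. It rests on the fact that $\Ty i$ evaluates to $\dU i$ in every environment, together with the universe-tagged readback rule. Both are immediate from the definitions, so I expect no genuine obstacle; the substance is carried entirely by narrowing and the NbE soundness and completeness theorems.
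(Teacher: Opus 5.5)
Your proof is correct and is essentially the argument the paper intends for this purely syntactic lemma. Soundness of NbE (\Cref{thm:soundness:nbe}) gives $\tyequiv{M}{W}{\Ty{i}}$, narrowing (\Cref{thm:syntax:narrowing}) moves this equivalence into $\Gamma'$, and completeness of NbE (\Cref{thm:completeness:nbe}) with functionality of NbE identifies the common normal form with $W'$; your bridge between normalization at $\Ty{i}$ and $\nbe^{\mathsf{ty}}$, via $\intp{\Ty{i}}(\rho) = \dU i$ and the universe readback rule being the only rule for $\dU i$-tagged values, is precisely the small bookkeeping step needed.
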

\vspace{-3pt}%
This lemma directly connects results of renormalization and their pre-nfs. %
Therefore, recursive calls in the subtyping algorithm
operate on strictly smaller pre-nfs, even though the normal forms it compares do not have immediately provable smaller sizes. %
By induction on the sizes of pre-nfs, we prove narrowing:
\vspace{-3pt}%
\begin{lemma}[\mechx{Narrowing}{asnf\_narrow\_gen}]\label{lem:renorm:narrow}
  Let $\subtyp{A'}{A}$. If $V,
  V'$ are pre-nfs of $W, W'$ at $\Gamma, x : A$ respectively, and $\anfsub[\Gamma, x : A]{W}{W'}$,
  then for some $W_0, W_0'$, $\nbetyd{\Gamma, x : A'}{W}{W_0}$,
  $\nbetyd{\Gamma, x : A'}{W'}{W_0'}$, and
  $\anfsub[\Gamma, x : A']{W_0}{W_0'}$.
\end{lemma}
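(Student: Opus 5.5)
We argue by strong induction on $\nfsize{V} + \nfsize{V'}$, the combined size of the pre-nfs, generalizing over the ambient context. The generalization matters: in the $\Pi$ case the recursion descends under a binder, so the context grows to $\Gamma, x : A, y : W_1'$. We therefore prove the statement for an arbitrary context subtyping $\wfctxsub{\Delta'}{\Delta}$ in place of $\Gamma, x : A'$ versus $\Gamma, x : A$. This instance comes from $\subtyp{A'}{A}$ by reflexivity of context subtyping on $\Gamma$ plus the extension rule, using \Cref{thm:syntax:presup} to get a common level. The narrowed normal forms $W_0, W_0'$ exist by \Cref{thm:soundness:nbe} once $W, W'$ are typed in $\Delta'$. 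Typing in $\Delta$ follows from soundness applied to the pre-nfs, and then \Cref{thm:syntax:narrowing} moves it to $\Delta'$. Moreover, \Cref{lem:renorm:compose} gives $\nbetyd{\Delta'}{V}{W_0}$ and $\nbetyd{\Delta'}{V'}{W_0'}$. This is the handle we use throughout: everything about $W_0, W_0'$ is read off by normalizing the pre-nfs directly in $\Delta'$.

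We case on the derivation of $\anfsub[\Delta]{W}{W'}$. In the universe case, $W = \Ty i$ and $W' = \Ty j$, and these renormalize to themselves in any context, so the universe rule applies again. In the reflexivity case, $W = W'$ is not a universe or $\Pi$. Functionality of NbE gives $W_0 = W_0'$, but $W_0$ could a priori have a different head, and a neutral type could contain $\lambda$-annotations that change under narrowing. So we show that $\anfsub[\Delta']{W_0}{W_0}$ holds for every normal type $W_0$, a reflexivity lemma proved by structural induction on $W_0$. Its $\Pi$ subcase needs $\nbetyd{\Delta', y:W_1}{W_2}{W_2}$, which comes from idempotence (\Cref{lem:nbe:idempotent}) applied to a type that is itself an NbE output. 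To ensure that, we record along the way that $W_0$ is an NbE output, which holds by construction.

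The main case is the $\Pi$ rule. Here $W = \usynPi{y}{W_1}{W_2}$ and $W' = \usynPi{y}{W_1'}{W_2'}$, with $\anfsub[\Delta]{W_1'}{W_1}$, $\nbetyd{\Delta, y:W_1'}{W_2}{W_2''}$, and $\anfsub[\Delta, y:W_1']{W_2''}{W_2'}$. Inverting the pre-nf relation, $V$ and $V'$ must be $\Pi$ types $\usynPi{y}{V_1}{V_2}$ and $\usynPi{y}{V_1'}{V_2'}$, since the readback of a $\Pi$ value is a $\Pi$ and evaluation preserves the head of a normal type. This gives $\nbetyd{\Delta}{V_1}{W_1}$ and $\nbetyd{\Delta, y : V_1}{V_2}{W_2}$, and similarly on the right. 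In $\Delta'$ we have $W_0 = \usynPi{y}{W_{01}}{W_{02}}$ and $W_0' = \usynPi{y}{W_{01}'}{W_{02}'}$.

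For the inputs, the IH on $(V_1', V_1)$ gives $\anfsub[\Delta']{W_{01}'}{W_{01}}$. For the outputs, the required premise is $\anfsub[\Delta', y:W_{01}']{\nbetyf{\Delta', y:W_{01}'}{W_{02}}}{W_{02}'}$. We apply the IH to the derivation $\anfsub[\Delta, y:W_1']{W_2''}{W_2'}$ under the context subtyping $\wfctxsub{\Delta', y:W_{01}'}{\Delta, y:W_1'}$, which holds because $W_{01}'$ and $W_1'$ are convertible modulo $\wfctxsub{\Delta'}{\Delta}$. The pre-nf of $W_2''$ is $V_2$, after repeated use of \Cref{lem:renorm:compose} together with context conversion between $V_1$, $W_1$, and $W_1'$. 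The pre-nf of $W_2'$ is $V_2'$. Both are strictly smaller. The IH then yields the narrowed forms. Finally, \Cref{lem:renorm:compose} and functionality of NbE identify them with $\nbetyf{\Delta', y:W_{01}'}{W_{02}}$ and $W_{02}'$, which closes the case.

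The main obstacle I expect is exactly this bookkeeping. We have to verify that $V_2$ is a legitimate pre-nf of $W_2''$, which requires it to be typed and normalized in the contexts $\Delta, y:V_1$, $\Delta, y:W_1$, and $\Delta, y:W_1'$, and to move these facts between contexts that are only related by subtyping rather than by equivalence. The context subtyping $\wfctxsub{\Delta, y:W_1'}{\Delta, y:V_1}$ comes from soundness (\Cref{thm:renorm:sound}) of $\anfsub[\Delta]{W_1'}{W_1}$ together with $V_1 \approx W_1$. Each such transport is one application of \Cref{lem:renorm:compose}, so we apply it repeatedly, once per change of context.
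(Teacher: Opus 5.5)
Your proposal is correct and follows essentially the paper's route. Like the paper, you use strong induction on the sizes of the pre-nfs, generalize from $\Gamma, x : A'$ versus $\Gamma, x : A$ to an arbitrary sub-context so the $\Pi$ case can recurse under the binder, and use \Cref{lem:renorm:compose} to realign the pre-nfs with the renormalized output types. One small gloss: matching the induction hypothesis's right-hand output with $W_{02}'$ (which is normalized in $\Delta', y : W_1'$ rather than $\Delta', y : W_{01}'$) needs stability of NbE under equivalent contexts, not just \Cref{lem:renorm:compose} and functionality, but that lemma is available.
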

\vspace{-3pt}%
This lemma is at the core of proving transitivity of the subtyping algorithm. %
We use this lemma to show that the subtyping relation is still admitted by the algorithm after renormalization. %
To prove narrowing, we must prove a generalized version instead, because in the $\Pi$ case, the output types are renormalized in extended contexts. %
\Cref{lem:renorm:compose} is applied to realign the pre-nfs and the results of renormalization under sub-contexts.

\vspace{-3pt}%
\begin{lemma}[\mechx{Transitivity}{asnf\_trans\_gen, asnf\_trans}]\label{lem:renorm:trans}
  Let $V, V', V''$ be pre-nfs of $W, W', W''$ at
  $\Gamma$ respectively. If
  $\anfsub{W}{W'}$, and $\anfsub{W'}{W''}$, then $\anfsub{W}{W''}$.
\end{lemma}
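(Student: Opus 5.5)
We prove a generalized transitivity by strong induction on the combined size $\nbesize{V}+\nbesize{V'}+\nbesize{V''}$ of the pre-nfs, followed by inversion on the two algorithmic derivations $\anfsub{W}{W'}$ and $\anfsub{W'}{W''}$. The generalization is needed because, in the $\Pi$ case, the contexts of the recursive calls change. Concretely, the statement quantifies over an arbitrary $\Gamma$ and keeps the pre-nf hypotheses (well-typedness of $V,V',V''$ and $\nbetyd{\Gamma}{V}{W}$, etc.) as the induction handle. Since NbE only produces normal forms whose head shape is determined by the head of the pre-nf, the sizes of sub-terms of $W$ can be bounded by sizes of sub-terms of $V$. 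We use this fact through \Cref{lem:renorm:compose}.

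The base cases are routine. If either derivation uses the reflexivity rule, then $W = W'$ (or $W' = W''$) with a non-universe, non-$\Pi$ head. The other derivation must then also be reflexivity, because universes and $\Pi$ types only relate to themselves in shape. Hence $W = W''$. Two universe rules compose by transitivity of $\le$ on levels.

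The main case is $\Pi$: $W = \usynPi{x}{W_1}{W_2}$, $W' = \usynPi{x}{W_1'}{W_2'}$, $W'' = \usynPi{x}{W_1''}{W_2''}$. The two derivations give:
\begin{itemize}[nosep,leftmargin=*]
\item $\anfsub{W_1'}{W_1}$ and $\anfsub{W_1''}{W_1'}$;
\item a renormalization $\nbetyd{\Gamma, x : W_1'}{W_2}{X}$ with $\anfsub[\Gamma, x:W_1']{X}{W_2'}$;
\item a renormalization $\nbetyd{\Gamma, x : W_1''}{W_2'}{Y}$ with $\anfsub[\Gamma, x:W_1'']{Y}{W_2''}$.
\end{itemize}

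For the input types, the IH applied contravariantly (pre-nfs being the components of $V'', V', V$, obtained by inverting the NbE runs of the pre-nfs) gives $\anfsub{W_1''}{W_1}$.

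For the output types, we first transport $\anfsub[\Gamma, x:W_1']{X}{W_2'}$ to the narrower context $\Gamma, x : W_1''$. To do so, we obtain $\subtyp{W_1''}{W_1'}$ from soundness (\Cref{thm:renorm:sound}), together with soundness of NbE so that $W_1'$ stands for the corresponding component of $V'$. We then apply narrowing (\Cref{lem:renorm:narrow}) with pre-nfs for $X$ and $W_2'$ at $\Gamma, x : W_1'$; here $W_2$ itself serves as a pre-nf of $X$. This yields $\nbetyd{\Gamma,x:W_1''}{X}{X_0}$, $\nbetyd{\Gamma,x:W_1''}{W_2'}{Y_0}$, and $\anfsub[\Gamma,x:W_1'']{X_0}{Y_0}$. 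By functionality of NbE, $Y_0 = Y$. By \Cref{lem:renorm:compose} (using $\wfctxsub{\Gamma,x:W_1''}{\Gamma,x:W_1'}$), $X_0$ is exactly $\nbetyf{\Gamma,x:W_1''}{W_2}$, which is the renormalization the goal's $\Pi$ rule demands. Now the IH at context $\Gamma, x : W_1''$ applies to $X_0$, $Y$, $W_2''$, with pre-nfs the output components of $V, V', V''$ (again via \Cref{lem:renorm:compose} to relate their normalizations in the new context). This gives $\anfsub[\Gamma,x:W_1'']{X_0}{W_2''}$, and the $\Pi$ rule concludes.

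The main obstacle is this bookkeeping of pre-nfs. We must show that the output components of $V, V', V''$ are genuinely pre-nfs of $X_0$, $Y$, $W_2''$ at the narrowed context $\Gamma, x : W_1''$. This requires well-typedness there, which we get by \Cref{thm:syntax:narrowing} on the typing derivations obtained by inverting $\Pi$ typing and presupposition. It also requires their normalizations to agree, which we get by chaining \Cref{lem:renorm:compose} twice, through $\Gamma$-level NbE and through the $W_1'$-context. We must also check that these components are strictly smaller than $V, V', V''$, so that the strong induction is well-founded. If the direct pre-nf route is awkward because $V$'s input annotation differs from $W_1$, the first thing to try is taking $W$ itself as its own pre-nf, via idempotence (\Cref{lem:nbe:idempotent}), so that sizes are measured on the normal forms at the top level.
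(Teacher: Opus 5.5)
Your proof is correct and follows the paper's route. The paper inducts on pre-nf sizes and handles the $\Pi$ case the same way you do: the IH for the input types applied contravariantly, \Cref{lem:renorm:narrow} to move $\anfsub[\Gamma, x:W_1']{X}{W_2'}$ into $\Gamma, x:W_1''$, and \Cref{lem:renorm:compose} to identify the renormalized left output type, after which the IH applies again.

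The one difference is the measure. The paper inducts on the middle pre-nf $|V'|$ alone, whereas you use the sum of all three. Your sum works too, because you exhibit strictly smaller pre-nfs $V_2, V_2', V_2''$ in all three positions.

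Your aside that sizes of subterms of $W$ are bounded by those of $V$ is false in general: NbE's $\eta$-expansion can enlarge terms, which is exactly why pre-nfs are tracked. Nothing in your argument depends on it, though.
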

\vspace{-3pt}%

\vspace{-3pt}%
\begin{proof}[Proof sketch]
  Induction on $\nfsize{V'}$. The only substantive
  case is the $\Pi$ one. %
  First, the input types recurse contravariantly,
  giving $\anfsub{W_1''}{W_1}$. %
  Then the output types live in
  \emph{different} contexts, $\Gamma, x{:}W_1'$ and $\Gamma, x{:}W_1''$, so
  narrowing carries the first to
  $\Gamma, x{:}W_1''$, and \Cref{lem:renorm:compose} identifies its renormalized types for comparison. %
\end{proof}
\vspace{-3pt}%

With transitivity and \Cref{lem:renorm:compose} in hand, completeness is a
direct induction.

\vspace{-3pt}%
\begin{theorem}[\mechx{Completeness}{completeness\_subtyp}]\label{thm:renorm:complete}
  If $\subtyp{A}{A'}$, then $\asubtyp{A}{A'}$.
\end{theorem}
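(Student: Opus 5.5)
The plan is induction on the derivation of $\subtyp{A}{A'}$, proving the stronger, normal-form-level statement along the way. For every subtyping derivation we show that both sides normalize, $\nbetyd{\Gamma}{A}{W}$ and $\nbetyd{\Gamma}{A'}{W'}$, and that $\anfsub{W}{W'}$. Existence of the normal forms follows from \Cref{thm:soundness:nbe} together with the presupposition \Cref{thm:syntax:presup}. By functionality of NbE, each case only has to establish $\anfsub{W}{W'}$ for the uniquely determined $W, W'$.

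The routine cases go as follows.
\begin{itemize}
\item \textbf{Reflexivity.} From $\tyequiv{A}{A'}{\Ty i}$, \Cref{thm:completeness:nbe} (at type $\Ty i$, where readback of a universe-tagged value is type readback) gives $W = W'$. We then show $\anfsub{W}{W}$ for every well-typed normal type $W$. This is a small auxiliary lemma by induction on $\nfsize{W}$. Its $\Pi$ case uses $\anfsub{W_1}{W_1}$ and the idempotence \Cref{lem:nbe:idempotent}, which makes the renormalization of $W_2$ in $\Gamma, x : W_1$ return $W_2$ itself.
\item \textbf{Universe rule.} Immediate.
\item \textbf{Transitivity.} Apply \Cref{lem:renorm:trans}. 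The required pre-nfs are the normal forms themselves: each $W$ is well typed by soundness and presupposition, and renormalizes to itself by \Cref{lem:nbe:idempotent}.
\end{itemize}

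The $\Pi$ case is the substantive one. Write $\nbetyd{\Gamma}{A}{W_1}$ and $\nbetyd{\Gamma, x : A}{B}{W_2}$, so the left side normalizes to $\usynPi{x}{W_1}{W_2}$, since the initial environment for $\Gamma, x:A$ reflects the fresh variable at $\evalf{A}{\initenv\Gamma}$. Similarly the right side normalizes to $\usynPi{x}{W_1'}{W_2'}$, with $W_2'$ obtained in $\Gamma, x : A'$. The IH on $\subtyp{A'}{A}$ gives $\anfsub{W_1'}{W_1}$. We must then produce $W_2''$ with $\nbetyd{\Gamma, x : W_1'}{W_2}{W_2''}$ and $\anfsub[\Gamma, x : W_1']{W_2''}{W_2'}$.

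The IH on $\subtyp[\Gamma, x : A']{B}{B'}$ yields $\nbetyd{\Gamma, x:A'}{B}{V}$ and $\anfsub[\Gamma, x : A']{V}{W_2'}$. The step is to identify $V$ with $W_2''$, in three moves.
\begin{enumerate}
\item Since $\wfctxsub{(\Gamma, x:A')}{(\Gamma, x:A)}$, \Cref{lem:renorm:compose} applied to $\nbetyd{\Gamma, x:A}{B}{W_2}$ says that renormalizing $W_2$ in $\Gamma, x:A'$ gives exactly $V$.
\item $\Gamma, x : A'$ and $\Gamma, x : W_1'$ are equivalent contexts, because $\tyequiv{A'}{W_1'}{\Ty i}$ by soundness.
\item NbE depends on the context only through the evaluation of its types, and equivalent types yield equal initial environments up to readback (\Cref{thm:completeness:nbe} at the context level). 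Hence renormalizing in $\Gamma, x:W_1'$ also returns $V$.
\end{enumerate}
Similarly, the judgment $\anfsub[\Gamma, x : A']{V}{W_2'}$ must be transported to the context $\Gamma, x:W_1'$.

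The main obstacle is this context-replacement step. The algorithmic judgment and its renormalizations mention the context syntactically, so I would first prove a separate lemma: $\anfsub$ and $\nbetyf{}{}$ are invariant under replacing a context by a definitionally equivalent one. I would prove it by induction on the $\anfsub$ derivation, using Narrowing (\Cref{thm:syntax:narrowing}) in its context-conversion special case together with completeness of NbE. If this invariance proves awkward, the fallback is to generalize the completeness statement so that it is phrased over any context equivalent to $\Gamma$ from the start.
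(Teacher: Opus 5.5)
Your proposal follows the paper's route. The paper's proof is a direct induction on the subtyping derivation that uses \Cref{lem:renorm:trans} for transitivity, with NbE outputs serving as their own pre-nfs exactly as you argue, and \Cref{lem:renorm:compose} for the $\Pi$ case. The paper leaves the context transport you isolate implicit. Your invariance lemma works for it, but it must carry well-typedness of both sides, so that soundness of the algorithm, narrowing, and stability of NbE apply in its $\Pi$ case. Alternatively, you can read the transport off \Cref{lem:renorm:narrow}, instantiated with $\subtyp{W_1'}{A'}$ (which holds by reflexivity). Stability of NbE under equivalent contexts and idempotence then force the two renormalized forms it returns to be $V$ and $W_2'$.

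One auxiliary statement is false as written: algorithmic reflexivity does not hold for every well-typed normal type. Take $f : \usynPi{g}{(\usynPi{y}{\Ty 1}{\Ty 1})}{\Ty 0}$ in $\Gamma$. The normal type $\usynPi{x}{\Nat}{f~(\usynlam{y}{\Ty 2}{\Nat})}$ is well typed via the contravariant $\Pi$ rule. Yet renormalizing its output type yields $f~(\usynlam{y}{\Ty 1}{\Nat})$, and the neutral base case of the renormalizing algorithm demands syntactic equality. State the lemma instead for NbE fixed points, i.e.\ normal types with $\nbetyd{\Gamma}{W}{W}$. This property is inherited by $W_1$ in $\Gamma$ and by $W_2$ in $\Gamma, x : W_1$. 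It is precisely what your idempotence argument uses, and it covers the only instance you need, namely NbE outputs.
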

\vspace{-3pt}%

%% file: sections/algo_renorm_free.tex
\section{A Renormalization-Free Algorithm and its Correctness Proof}\label{sec:renorm-free}

The algorithm of \Cref{sec:renorm} is correct but unsatisfying in two ways: it
re-runs NbE at every $\Pi$ level, and its proof, though conceptually
straightforward, is technically tedious. Analyzing what renormalization actually
does to an already-normal form lets us do better. We present a second decision
procedure that performs \emph{no} renormalization, comparing two given types
purely structurally once they are normalized. Its correctness rests on a
bisimulation argument that runs the original normalization against a
hypothetical renormalization under a narrower context and shows that two
compute the same results. The argument is \emph{semantic} but not in a traditional sense: 
it is a program equivalence proof, which is not a frequently used method in studying type theory.

\subsection{The Algorithm}\label{sec:renorm-free:algo}

Previously, renormalization is introduced to ease the handling of the $\Pi$ case. %
Avoiding the renormalization first requires
understanding what it does to an already-normal form. The guiding claim of this
section, established rigorously below as \Cref{thm:rf:renorm-codomain-ann}, is
that the effect is \emph{almost nothing}: renormalization only changes $\lambda$ annotations. %
This claim is based on two observations. %
(1) The first is that
renormalization preserves a normal form's \emph{shape}. Normalization already yields an $\beta\eta$ normal form, so a second
pass reduces no new $\beta$-redex and can only redo the $\eta$-expansions. The
expansions are directed by shapes of types, which are preserved by the subtyping relation, in the readback functions. %
(2) Empirically, we only find examples where renormalization changes $\lambda$ annotations, but not other components, c.f. our examples in \Cref{sec:overview}. %
Therefore, we conjecture that the only changing components after renormalization are the universe levels in annotations. %
If this conjecture holds, then we can safely proceed in the $\Pi$ case and ignore all $\lambda$ annotations when comparing equality between normal forms. %

\begin{figure}
  \raggedright
  \mysbox{$\nfaeq{W}{W'}$,\ $\nfaeq{U}{U'}$} \hfill anno-agnostic equality: normal forms / neutrals equal up to $\lambda$-annotations
  \begin{mathparshrink}
    \inferrule{~}{\nfaeq{\Ty i}{\Ty i}}

    \inferrule{\nfaeq{W_1}{W_1'} \\ \nfaeq{W_2}{W_2'}}
    {\nfaeq{\usynPi{x}{W_1}{W_2}}{\usynPi{x}{W_1'}{W_2'}}}

    \inferrule{\nfaeq{W}{W'}}
    {\nfaeq{\usynlam{x}{W_1}{W}}{\usynlam{x}{W_1'}{W'}}}

    \inferrule{~}{\nfaeq{x}{x}}

    \inferrule{\nfaeq{U}{U'} \\ \nfaeq{W}{W'}}
    {\nfaeq{U~W}{U'~W'}}
  \end{mathparshrink}
  \vspace{-3pt}

  \raggedright
  \mysbox{$\rfsub{W}{W'}$} \hfill normal type $W$ is a renormalization-free algorithmic subtype of $W'$
  \begin{mathparshrink}
    \inferrule
    {W \text{ is not } \Ty{i} \text{ or } \Pi \\\\ \nfaeq{W}{W'}}
    {\rfsub{W}{W'}}
    \qquad
    \inferrule
    {i \le j}
    {\rfsub{\Ty i}{\Ty j}}
    \qquad
    \inferrule
    {\rfsub{W_1'}{W_1} \\ \rfsub{W_2}{W_2'}}
    {\rfsub{\usynPi{x}{W_1}{W_2}}{\usynPi{x}{W_1'}{W_2'}}}
  \end{mathparshrink}
   \vspace{-3pt}
   
  \raggedright
  \mysbox{$\arfsub{A}{B}$} \hfill type $A$ is a renormalization-free algorithmic subtype of $B$ in $\Gamma$
  \begin{mathparshrink}
    \inferrule
    {\nbetyd{\Gamma}{A}{W} \\
     \nbetyd{\Gamma}{B}{W'} \\
     \rfsub{W}{W'}}
    {\arfsub{A}{B}}
  \end{mathparshrink}
  \vspace{-9pt}%
  \caption{Annotation-agnostic equality and renormalization-free subtyping.}
  \label{fig:srf}
  \vspace{-10pt}%
  \figtailspace
\end{figure}

This picture turns directly into an algorithm (\Cref{fig:srf}; full rules for $\nfaeq{}{}$ are in \Cref{app:algo}). To check
$\subtyp A B$, it normalizes both types and compares the results purely structurally: in the $\Pi$ case, it recurses contravariantly on the
input types, and covariantly \emph{directly} on the output types.  %
For universes, it compares
levels ($\Ty i \le \Ty j$). %
In all other cases, it compares the two normal
forms up to $\nfaeq{W}{W'}$ instead of syntactic equality.  %
The \emph{annotation-agnostic equality} $\nfaeq{W}{W'}$
(\emph{anno-agnostic} for short), is almost syntactic equality, except that the
annotations of $\lambda$s are dropped. %
In particular, universe levels are compared equal. %
Then being able to recurse directly on the
output types is exactly what the guiding claim warrants; the mere differences in $\lambda$ annotations
are ignored by the $W \simeq W'$ comparison in the base case.  %
This algorithm has a similar shape to \mcltt and is more efficient than that in \Cref{sec:renorm}. %

\subsection{Soundness}\label{sec:renorm-free:sound}

Since the algorithm only checks equality up to erased $\lambda$-annotations (by $\nfaeq{W}{W'}$), 
we must show this relaxation is compatible with syntactic equivalence. %
The following lemma proves that well-typed $\lambda$ expressions can be compared just by their bodies, regardless of their annotations:
\vspace{-3pt}%
\begin{lemma}[\mechx{Annotation irrelevance}{ann\_irrel\_fn\_cong}]\label{lem:rf:ann-irrel}
  If $\typing{\usynlam{x}{A}{M}}{\usynPi{x}{B}{C}}$,
  $\typing{\usynlam{x}{A'}{M'}}{\usynPi{x}{B}{C}}$, and
  $\tyequiv[\Gamma, x:B]{M}{M'}{C}$, then
  $\tyequiv{\usynlam{x}{A}{M}}{\usynlam{x}{A'}{M'}}{\usynPi{x}{B}{C}}$.
\end{lemma}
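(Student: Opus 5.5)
The plan is to go through $\eta$: each $\lambda$ is equivalent to its own $\eta$-expansion, and the two expansions are then compared at $\usynPi{x}{B}{C}$, where the annotations $A$ and $A'$ no longer appear. Write $f \coloneqq \usynlam{x}{A}{M}$ and $f' \coloneqq \usynlam{x}{A'}{M'}$. By the $\eta$ rule at type $\usynPi{x}{B}{C}$ (\Cref{app:declarative}), we have $\tyequiv{f}{\usynlam{x}{B}{(f[\synwk]~x)}}{\usynPi{x}{B}{C}}$, and similarly for $f'$. The $\lambda$ congruence rule with annotation $B$ reduces the goal to showing
\[
\tyequiv[\Gamma, x:B]{f[\synwk]~x}{f'[\synwk]~x}{C}.
\]
Transitivity and symmetry of $\approx$ then give the statement.

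To compare the two applications, I would $\beta$-reduce each side and land on the bodies. The goal is $\tyequiv[\Gamma,x:B]{f[\synwk]~x}{M}{C}$, and likewise for $f'$ and $M'$. The hypothesis $\tyequiv[\Gamma, x:B]{M}{M'}{C}$ then closes the chain. The $\beta$ rule in \Cref{app:declarative} presumably requires $f$ to be typed at its own annotation, i.e.\ $\typing[\Gamma, x:A]{M}{C_0}$ with $\typing{\usynPi{x}{A}{C_0}}{\Ty i}$. So the first concrete step is an inversion lemma for $\lambda$ under subsumption: from $\typing{f}{\usynPi{x}{B}{C}}$ obtain $C_0$ with $\typing[\Gamma,x:A]{M}{C_0}$ and $\subtyp{\usynPi{x}{A}{C_0}}{\usynPi{x}{B}{C}}$. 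This follows by induction on the typing derivation, with the subsumption case absorbed by the transitivity rule of subtyping.

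Next I apply \Cref{lem:subtyp-spec} to this subtyping. Its three cases give either an equivalence of the two $\Pi$ types, which by $\Pi$-injectivity (obtained from \Cref{thm:completeness:nbe}) yields $A \approx B$ and $C_0 \approx C$, or else $\subtyp{B}{A}$ and $\subtyp[\Gamma,x:B]{C_0}{C}$. In either case we get $\wfctxsub{(\Gamma,x:B)}{(\Gamma,x:A)}$. \Cref{thm:syntax:narrowing} then transports $\typing[\Gamma,x:A]{M}{C_0}$ to $\Gamma,x:B$, and subsumption lifts it to $\typing[\Gamma,x:B]{M}{C}$.

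With this in hand, the $\beta$-step is computed in context $\Gamma,x:B$. There, $f[\synwk]$ is typed at $\usynPi{x}{A[\synwk]}{C_0[\ldots]}$, the variable $x$ is typed at $A[\synwk]$ by subsumption along $\subtyp{B}{A}$, and $\beta$ together with the substitution calculus gives $M[\synwk \circ \ldots, x/x] \approx M$ at type $C_0$, hence at type $C$. I expect the main obstacle to be this bookkeeping: aligning the $\beta$ rule, which is stated at the annotation $A$, with the target type $C$ under the sub-context $\Gamma,x:B$. Care is needed because $\approx$ can only be lifted along subtyping by the subsumption rule for equivalences.

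If working inside the calculus proves too heavy, the fallback is semantic. By \Cref{thm:soundness:nbe} and \Cref{thm:completeness:nbe}, it suffices to show that both sides have the same NbE normal form at $\usynPi{x}{B}{C}$. Readback of a $\lambda$ at a $\Pi$ tag ignores the closure's annotation, as observed for the rules in \Cref{fig:eval}: evaluation discards the annotation and the readback rule takes the $\lambda$ annotation from the tag. The body is evaluated in the same extended initial environment on both sides, and completeness applied to $M \approx M'$ then makes the two normal forms coincide.
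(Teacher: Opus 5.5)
Your main route is correct and is the paper's own argument: $\eta$-expand both $\lambda$s at $\usynPi{x}{B}{C}$, apply $\lambda$-congruence with annotation $B$, and $\beta$-reduce each application to its body. Your $\lambda$-inversion together with \Cref{lem:subtyp-spec} supplies the $\subtyp{B}{A}$ needed to type $x_0$ at the annotation $A$, which the paper's sketch leaves implicit; your NbE fallback would also go through, since readback takes the $\lambda$ annotation from the tag and evaluates the body in exactly $\initenv{(\Gamma, x:B)}$.
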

\vspace{-3pt}%
\vspace{-3pt}%
\begin{proof}[Proof Sketch]
    Do $\eta$ expansion on both $\lambda$ terms and then use $\beta$ reduction to get rid of $A$ and $A'$. %
    The conclusion holds by congruence. 
\end{proof}
\vspace{-3pt}%

That $\nfaeq{W}{W'}$ and $\nfaeq{U}{U'}$ preserve syntactic equivalence is proved by mutual induction:

\vspace{-3pt}%
\begin{lemma}[\mechx{Annotation erasure is definitional}{aeq\_defeq}]\label{lem:rf:aeq-defeq}
  \phantom{x}
  \begin{itemize}[nosep,leftmargin=*]
  \item If $\nfaeq{W}{W'}$, $\typing{W}{C}$, and $\typing{W'}{C}$, then
    $\tyequiv{W}{W'}{C}$;
  \item If $\nfaeq{U}{U'}$, $\typing{U}{C}$, and $\typing{U'}{C'}$,
    then there is a \emph{common subtype} $D$ with
    $\tyequiv{U}{U'}{D}$, $\subtyp{D}{C}$, and $\subtyp{D}{C'}$.
  \end{itemize}
\end{lemma}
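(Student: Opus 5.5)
We prove both items simultaneously by mutual induction on the derivations of $\nfaeq{W}{W'}$ and $\nfaeq{U}{U'}$. The typing hypotheses are inverted at each step using a typing-inversion principle modulo subtyping: every typing of a normal or neutral form factors through its syntax-directed rule followed by subsumption. This follows from the declarative rules together with presupposition (\Cref{thm:syntax:presup}) and subtyping inversion (\Cref{lem:subtyp-spec}). The neutral clause is stated with a common subtype $D$ precisely because a neutral's type is not unique up to $\approx$ under subsumption. It gives a handle that subsumption can push back up to either $C$ or $C'$.

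\emph{Neutral cases.} For a variable $x \simeq x$, both typings invert to $x : A \in \Gamma$ with $\subtyp{A}{C}$ and $\subtyp{A}{C'}$, so we take $D := A$ and use reflexivity. For an application $\nfaeq{U~W}{U'~W'}$, inversion gives $\typing{U}{\usynPi{x}{A_1}{B_1}}$ with $\typing{W}{A_1}$, $\subtyp{B_1[\synid,W/x]}{C}$, and the primed analogues. The induction hypothesis on $U,U'$ yields a common subtype $D_0$ of both $\Pi$ types. Using \Cref{lem:subtyp-spec} and the typing of the neutral $U$, we arrange $D_0 \approx \usynPi{x}{A_0}{B_0}$ with $\subtyp{A_1}{A_0}$, $\subtyp{A_1'}{A_0}$, and $B_0$ below $B_1$ and $B_1'$ in the narrowed contexts. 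Subsumption puts $W$ and $W'$ at the common type $A_0$. The first item of the induction hypothesis then gives $\tyequiv{W}{W'}{A_0}$, and application congruence gives $\tyequiv{U~W}{U'~W'}{B_0[\synid,W/x]}$. For the two subtyping obligations we substitute $W$ into $\subtyp[\Gamma,x:A_1]{B_0}{B_1}$; this is legitimate because $\typing{W}{A_1}$ and substitution preserves subtyping. For the primed side we first convert $W'$ to $W$ along $\tyequiv{W}{W'}{A_0}$. The recursor case follows the same pattern: its motive is $\simeq$-related and forced to be syntactically well-typed at $\Nat$-indexed types, and $D$ is the motive instantiated at the scrutinee.

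\emph{Normal cases.} Universes and $\Nat$ are immediate by reflexivity. For $\Pi$, we invert both typings to get components in universes below $C$. The induction hypothesis on the domains gives $\tyequiv{W_1}{W_1'}{\Ty i}$. Context conversion, a special case of \Cref{thm:syntax:narrowing}, then moves $W_2'$ into $\Gamma,x:W_1$, and the induction hypothesis applies to the codomains; we conclude with $\Pi$ congruence and subsumption to $C$. For $\lambda$, $C$ must be $\approx$ a supertype of $\usynPi{x}{W_1}{B}$ and of $\usynPi{x}{W_1'}{B'}$. We use \Cref{lem:subtyp-spec} to write $C \approx \usynPi{x}{C_1}{C_2}$. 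Narrowing the bodies into $\Gamma,x:C_1$ and subsuming them to $C_2$ lets the induction hypothesis give $\tyequiv[\Gamma,x:C_1]{W}{W'}{C_2}$. \Cref{lem:rf:ann-irrel} then closes the case, discarding the annotations $W_1,W_1'$. When a normal form is a neutral $U$ typed at $C$, the second item gives $D$ with $\subtyp{D}{C}$, and subsumption lifts the equivalence to $C$.

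The main obstacle is the neutral application case. There we must reconcile two independently chosen function types for $U$ and $U'$ via the common subtype, and show that substituting the (only definitionally equal) arguments preserves the required subtypings. We plan to handle it by first normalizing $D_0$ to a $\Pi$ shape using \Cref{lem:subtyp-spec}, and then proving a small auxiliary lemma that subtyping is stable under substitution of equivalent terms.
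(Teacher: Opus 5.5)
Your proposal is correct and follows the paper's proof: a mutual induction on $\simeq$ in which the spine cases are congruences, the $\Pi$-shaped cases (the function type in an application and the type of a $\lambda$) are resolved via subtyping inversion (\Cref{lem:subtyp-spec}), and \Cref{lem:rf:ann-irrel} discharges the differing $\lambda$-annotations. The details you leave implicit are routine: lifting both sides to a common universe by cumulativity before applying the hypothesis in the $\Pi$ and recursor cases, using $\Pi$-injectivity and no-confusion when reading off \Cref{lem:subtyp-spec}, and the \texttt{zero} and \texttt{succ} cases.
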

\vspace{-3pt}%

The spine cases are congruences,
and the $\Pi$ case goes through the subtyping inversion principle
(\Cref{lem:subtyp-spec}). Note that the two cases are stated asymmetrically:
the neutral case carries two possibly different types $C, C'$ and concludes only their common
subtype $D$, while the normal-form case uses a single shared $C$. %
This difference is dictated by their usages. %
The normal-form result is used where the overall types are known. %
Meanwhile the neutral result is applied in the elimination case, so we must focus on a common subtype. %
This can be viewed as a stronger invariant supported by the inversion lemmas modulo subtyping of \Cref{sec:syntax:prop}.

\vspace{-5pt}%
\begin{theorem}[\mechx{Soundness}{subtyp\_rf\_sound, alg\_subtyping\_rf\_sound}]\label{thm:rf:sound}
  If $\rfsub{W}{W'}$, $\typing{W}{\Ty i}$, and $\typing{W'}{\Ty i}$, then
  $\subtyp{W}{W'}$. Consequently $\arfsub{A}{B}$ with both sides typed at
  $\Ty i$ implies $\subtyp A B$.
\end{theorem}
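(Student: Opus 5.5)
We argue by induction on the derivation of $\rfsub{W}{W'}$, generalizing over the context $\Gamma$ and the level $i$ (the $\Pi$ case changes the context). The second half then follows immediately. If $\arfsub{A}{B}$, then $\nbetyd{\Gamma}{A}{W}$ and $\nbetyd{\Gamma}{B}{W'}$. By \Cref{thm:soundness:nbe} and functionality of NbE, $\tyequiv{A}{W}{\Ty i}$ and $\tyequiv{B}{W'}{\Ty i}$, so by presupposition $W$ and $W'$ are typed at $\Ty i$. The first half gives $\subtyp{W}{W'}$, and chaining with the reflexivity rule and transitivity yields $\subtyp{A}{B}$.

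For the first half there are three cases.

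\emph{Base case.} Here $\nfaeq{W}{W'}$, and \Cref{lem:rf:aeq-defeq} (normal-form part, with $C = \Ty i$) gives $\tyequiv{W}{W'}{\Ty i}$. The reflexivity rule of subtyping then gives $\subtyp{W}{W'}$.

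\emph{Universe case.} Here $\Ty i \subseteq \Ty j$ with $i \le j$. If $i = j$ this is reflexivity. If $i < j$ it is the universe rule, using $\wfctx\Gamma$ from presupposition.

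\emph{$\Pi$ case.} Here $W = \usynPi{x}{W_1}{W_2}$ and $W' = \usynPi{x}{W_1'}{W_2'}$, both typed at $\Ty i$.
\begin{itemize}
\item First invert the typing of the $\Pi$ types to get $\typing{W_1}{\Ty i}$, $\typing[\Gamma, x:W_1]{W_2}{\Ty i}$, and the analogous facts for $W_1', W_2'$. This inversion must be done modulo subtyping and cumulativity, so it may land at some level $k$ and then be lifted to $\Ty i$ by \Cref{lem:syntax:cumu}.
\item The IH on the input premise gives $\subtyp{W_1'}{W_1}$.
\item Build the context subtyping $\wfctxsub{(\Gamma, x:W_1')}{(\Gamma, x:W_1)}$ from reflexive context subtyping on $\Gamma$ together with $\subtyp{W_1'}{W_1}$.
\item Apply narrowing (\Cref{thm:syntax:narrowing}) to $\typing[\Gamma, x:W_1]{W_2}{\Ty i}$ to obtain $\typing[\Gamma, x:W_1']{W_2}{\Ty i}$.
\item The IH on the output premise, in context $\Gamma, x:W_1'$, gives $\subtyp[\Gamma, x:W_1']{W_2}{W_2'}$.
\item All premises of the contravariant $\Pi$ rule are now available, which concludes $\subtyp{W}{W'}$.
\end{itemize}

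Generalizing the context in the induction is essential, because the output IH is used in $\Gamma, x:W_1'$ rather than in $\Gamma$.

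The main obstacle is the typing inversion in the $\Pi$ case. With subsumption, $\typing{\usynPi{x}{W_1}{W_2}}{\Ty i}$ inverts only to typings at some level $j$ with $\subtyp{\Ty j}{\Ty i}$. We first recover $j \le i$ using \Cref{lem:subtyp-spec} together with NbE to separate universes of different levels, and then lift by cumulativity. A second point of care is reflexivity of context subtyping on $\Gamma$, which we expect to prove by a small induction on $\wfctx\Gamma$.
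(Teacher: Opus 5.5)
Your proof is correct and follows the paper's argument. It proceeds by induction on $\rfsub{W}{W'}$ (generalized over $\Gamma$ and the level), uses \Cref{lem:rf:aeq-defeq} in the base case, applies narrowing (\Cref{thm:syntax:narrowing}) to move the typing of $W_2$ into $\Gamma, x : W_1'$ before the output-type induction hypothesis, and uses NbE soundness for the second half.

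One simplification: the conclusion of the $\Pi$ subtyping rule carries no level and your induction hypothesis quantifies over levels. So you can lift the inverted typings to $\max(k,k')$ by \Cref{lem:syntax:cumu} instead of first proving $k \le i$, and the obstacle you single out disappears.
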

\vspace{-5pt}%

\vspace{-5pt}%
\begin{proof}
  Induction on $\rfsub{W}{W'}$. Apply
  \Cref{lem:rf:aeq-defeq} in the base case. %
  In the $\Pi$ case, we use \Cref{thm:syntax:narrowing} to narrow the context for the output types. 
\end{proof}
\vspace{-5pt}%

\vspace{-0.3em}
\subsection{Completeness}\label{sec:renorm-free:engine}\label{sec:renorm-free:complete}

\begin{figure}
  \raggedright
  \mysbox{$\compat{M}{M'}$} \hfill tag equivalence (syntax): like $\nfaeq{}{}$ but levels free
  \begin{mathparshrink}
    \inferrule
    {~}
    {\khlmath{\compat{\Ty i}{\Ty j}}}

    \inferrule
    {\compat{A}{A'} \\ \compat{B}{B'}}
    {\compat{\usynPi{x}{A}{B}}{\usynPi{x}{A'}{B'}}}

    \inferrule
    {\compat{M}{M'}}
    {\compat{\usynlam{x}{A}{M}}{\usynlam{x}{A'}{M'}}}
  \end{mathparshrink}
  \figrulesep
 
  \raggedright
  \mysbox{$\dcompat{a}{a'}$} \hfill tag equivalence (domain): levels free, annotations dropped
  \begin{mathparshrink}
    \inferrule
    {~}
    {\khlmath{\dcompat{\dU i}{\dU j}}}

    \inferrule
    {\dcompat{a}{a'}}
    {\dcompat{\dsu{a}}{\dsu{a'}}}

    \inferrule
    {\dcompat{\rho}{\rho'} \\ \compat{M}{M'}}
    {\dcompat{\dlam{M}{\rho}}{\dlam{M'}{\rho'}}}

    \inferrule
    {\dcompat{a}{a'} \\ \compat{B}{B'} \\ \dcompat{\rho}{\rho'}}
    {\dcompat{\dpi{a}{B}{\rho}}{\dpi{a'}{B'}{\rho'}}}

    \inferrule
    {\dcompat{a}{a'} \\ \dcompat{v}{v'}}
    {\dcompat{\dneut{a}{v}}{\dneut{a'}{v'}}}

    \inferrule
    {\dcompat{a}{a'} \\ \dcompat{m}{m'}}
    {\dcompat{\dreif{a}{m}}{\dreif{a'}{m'}}}
  \end{mathparshrink}
  \figrulesep
  
  \raggedright
  \mysbox{$\daeq{a}{a'}$} \hfill tag-agnostic equivalence (domain only): levels exact, bodies equal, agnostic at tags
  \begin{mathparshrink}
    \inferrule
    {~}
    {\daeq{\dU i}{\dU i}}

    \inferrule
    {\daeq{a}{a'}}
    {\daeq{\dsu{a}}{\dsu{a'}}}

    \inferrule
    {\daeq{\rho}{\rho'}}
    {\daeq{\dlam{M}{\rho}}{\dlam{M}{\rho'}}}

    \inferrule
    {\daeq{a}{a'} \\ \daeq{\rho}{\rho'}}
    {\daeq{\dpi{a}{B}{\rho}}{\dpi{a'}{B}{\rho'}}}

    \inferrule
    {\khlmath{\dcompat{a}{a'}} \\ \daeq{v}{v'}}
    {\daeq{\dneut{a}{v}}{\dneut{a'}{v'}}}

    \inferrule
    {\khlmath{\dcompat{a}{a'}} \\ \daeq{v}{v'}}
    {\daeq{\dreif{a}{v}}{\dreif{a'}{v'}}}
  \end{mathparshrink}
  \vspace{-13pt}%
  \caption{Tag and tag-agnostic equivalences (selected rules).}
  \label{fig:dom-relations}
  \vspace{-10pt}%
  \figtailspace
\end{figure}

Due to the structural simplicity of the algorithm, transitivity is comparatively simple to prove. %
The proof proceeds by induction on the middle type $W'$.
\vspace{-5pt}%
\begin{lemma}[\mechx{Transitivity}{subtyp\_rf\_trans}]\label{lem:rf:trans}
  If $\rfsub{W}{W'}$ and $\rfsub{W'}{W''}$, then $\rfsub{W}{W''}$.
\end{lemma}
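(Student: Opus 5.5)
We prove the statement by structural induction on the middle normal form $W'$, generalizing over $W$ and $W''$. Before starting, we record that $\nfaeq{}{}$ is an equivalence relation on normal and neutral forms. Reflexivity, symmetry and transitivity each follow by a routine mutual induction over the rules of \Cref{fig:srf} and \Cref{app:algo}. The reason is that $\nfaeq{}{}$ is syntactic equality with $\lambda$-annotations ignored, and erasure commutes with composition.

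The proof then inverts both derivations $\rfsub{W}{W'}$ and $\rfsub{W'}{W''}$. The head constructor of $W'$ decides which rule ended each derivation, and we get three cases.
\begin{itemize}
\item If $W'$ is $\Ty j$, only the universe rule applies on both sides. This is because the base rule excludes universes, and $\nfaeq{}{}$ relates a universe only to itself at the same level. Hence $W = \Ty i$ and $W'' = \Ty k$ with $i \le j \le k$, and the universe rule gives $\rfsub{W}{W''}$.
\item If $W' = \usynPi{x}{W_1'}{W_2'}$, both derivations end in the $\Pi$ rule. So $W = \usynPi{x}{W_1}{W_2}$ and $W'' = \usynPi{x}{W_1''}{W_2''}$, with premises $\rfsub{W_1'}{W_1}$, $\rfsub{W_2}{W_2'}$, $\rfsub{W_1''}{W_1'}$, and $\rfsub{W_2'}{W_2''}$. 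The induction hypothesis on the subterm $W_1'$ combines $\rfsub{W_1''}{W_1'}$ with $\rfsub{W_1'}{W_1}$ into $\rfsub{W_1''}{W_1}$; the contravariant flip is harmless because $W_1'$ is still the middle element. The induction hypothesis on $W_2'$ gives $\rfsub{W_2}{W_2''}$, and the $\Pi$ rule closes the case.
\item In the remaining case $W'$ is neither a universe nor a $\Pi$, so both derivations use the base rule. We have $\nfaeq{W}{W'}$ and $\nfaeq{W'}{W''}$, hence $\nfaeq{W}{W''}$ by transitivity of $\nfaeq{}{}$.
\end{itemize}

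Because the algorithm carries no context, none of these steps needs narrowing or any reasoning about renormalization. That is exactly what made \Cref{lem:renorm:trans} laborious.

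The one step needing care is the side condition in the last case. The base rule for $\rfsub{W}{W''}$ requires that $W$ is not $\Ty{i}$ or $\Pi$. We obtain this from $\nfaeq{W}{W'}$: the rules of $\nfaeq{}{}$ preserve the head constructor, so an inversion lemma shows that $W$ is a universe or a $\Pi$ if and only if $W'$ is. Since $W'$ is neither, neither is $W$.

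The same head-preservation lemma justifies the case split above. It rules out mixed situations, such as $W$ being a $\Pi$ while $W'$ is a neutral. Proving this inversion, together with transitivity of $\nfaeq{}{}$ over the full set of rules including neutrals and the recursor, is the only real bookkeeping in the proof. We expect it to be routine.
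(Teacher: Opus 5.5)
Your proof is correct and follows the paper's approach: induction on the middle normal form $W'$, generalized over $W$ and $W''$. The $\Pi$ case is closed by the induction hypotheses on $W_1'$ and $W_2'$, and the remaining case by transitivity of $\nfaeq{}{}$. One small simplification: in the last case the side condition on $W$ is just a premise of the base rule you already inverted in $\rfsub{W}{W'}$, so you need head-preservation of $\nfaeq{}{}$ only to exclude the base rule for the left derivation when $W'$ is a universe or a $\Pi$.
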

\vspace{-5pt}%
Following the analysis of \Cref{sec:renorm-free:algo}, the real challenge for completeness is the
$\Pi$ case: we must show that a direct comparison suffices and renormalization is unnecessary. %
The rest
of this subsection builds the machinery to close this gap by making the guiding claim formal and precise: renormalization 
in a sub-context changes only the $\lambda$ annotations.
This statement corresponds to the following conjecture:

\vspace{-0.3em}
\begin{conjecture}\label{conj:renorm-uni}
    If $\rfsub{V'}{V}$, $\nbetyd{\Gamma, x:V}{B}{W}$, and
  $\nbetyd{\Gamma, x:V'}{B}{W'}$, then $\nfaeq{W}{W'}$.
\end{conjecture}
\vspace{-0.3em}

In this conjecture, we compare $W$ and $W'$, the results of normalizing $B$ in a super- and a sub-context. %
Unfolding the definition of $\nbe$, the conclusion requires a comparison between evaluations in two \emph{different} environments, due to the subtyping relation. %
This capability, unfortunately, is unequipped in the current semantic setup. %
Typically, a type theorist would attempt to construct a stronger, more general semantic model that captures the equivalence under evaluations in environments arising from a context and a narrowed context. %
However, this approach is deemed laborious and was difficult, as we will discuss further in \Cref{sec:discussion}. %

Instead, we approach it with a fresh mind. %
Observe that the readback function $\Rnf$ (\Cref{fig:eval}) simply disregards universe levels in the tags and redirects readback of domain types to $\Rty$: $\Rnf_z(\downarrow^{\dU i}(a)) := \Rty_z(a)$. %
Meanwhile, universe levels matter when they are actually read back to some universes: $\Rty_z(\dU i) := \Ty i$. 
Thus, the universe levels in tags are irrelevant; they are only significant as the tagged. %
This observation is captured by two levels of relations in \Cref{fig:dom-relations}. %
On the tag level, we use the tag equivalence relation, $\mathbin{\dot{\sim}}$, to capture terms and domain objects that are equal up to universe levels. %
The base cases (two topmost highlighted rules in \Cref{fig:dom-relations}) relate all universes regardless of their levels, because levels are irrelevant in tags. %
Omitted cases are trivially congruences. %
We lift tag equivalences point-wise to relate also pairs of substitutions and of environments. %
When the same domain value $m$ is tagged by equivalent $a$ and $a'$ i.e. $\dcompat{a}{a'}$, the readbacks of $m$ are \emph{equal} up to anno-agnostic equality $\nfaeq{}{}$. %

On the tagged value level, tag-agnostic equivalence, $\mathbin{\dot{\simeq}}$, relates domain objects that have same readbacks up to anno-agnostic equality. %
For that reason, it only relates universes with the same levels. %
Note that tag-agnostic equivalence does not relate terms; it suffices to require identical captured terms for proving completeness. %
This design choice is reflected in the $\Lambda$ and the $\Pi$ cases. %
Finally, when we encounter tags, we relate them by tag equivalence. %
This is reflected by the highlighted premise $\khlmath{\dcompat{a}{a'}}$ in the last two rules in \Cref{fig:dom-relations}. %
Both rules say that if the tags lead to the same path during readback, we can conclude that the tagged values are read back the same. 

With tag and tag-agnostic equivalences, we proceed with \Cref{conj:renorm-uni} in the following way. %
Note that $B$ is normalized in two contexts, $\Gamma, x : V$ and $\Gamma, x : V'$. %
Since $V'$ and $V$ only appear in the contexts, they are only used as tags during normalization. %
If they are further related by tag equivalence, then they should drive readback down the same execution path. %
If in addition, evaluation preserves both tag and tag-agnostic equivalences, then evaluating $B$ in two contexts really share an identical execution trace, modulo the minor differences of $\lambda$-annotations, which are safely ignored by anno-agnostic equality. %
In this proof plan, we consider the two traces of program executions, leading to an elementary \emph{bisimulation} argument of our desired property. %
This is a technique rarely seen in type theory; we believe that it can be effective and elegant in many other scenarios that are yet to be discovered. %

Formally, we first aim at the following theorem. 
\vspace{-6pt}%
\begin{theorem}[\mechx{Renormalization up to tags}{renorm\_codomain\_ann}]\label{thm:rf:renorm-codomain-ann}
  If $\compat{A}{A'}$, $\nbetyd{\Gamma, x:A}{B}{W}$, and
  $\nbetyd{\Gamma, x:A'}{B}{W'}$, then $\nfaeq{W}{W'}$.
\end{theorem}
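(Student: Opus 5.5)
We prove \Cref{thm:rf:renorm-codomain-ann} by a bisimulation between the two runs of NbE, one in $\Gamma, x:A$ and one in $\Gamma, x:A'$. We never appeal to the PER model. We show that every evaluation and readback step of one run is matched by a step of the other, and that the invariants of \Cref{fig:dom-relations} are preserved along the way.

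The first step concerns the initial environments. Since $\compat{A}{A'}$, we show that $\initenv{(\Gamma, x:A)}$ and $\initenv{(\Gamma, x:A')}$ are related pointwise by $\daeq{}{}$. The entries for $\Gamma$ are literally identical reflected levels, and $\daeq{}{}$ is reflexive on values built from common syntax. The top entries are $\dneut{a}{\dvar{z}}$ and $\dneut{a'}{\dvar{z}}$, with $a = \intp{A}(\initenv\Gamma)$ and $a' = \intp{A'}(\initenv\Gamma)$. Their neutral parts coincide, so the highlighted neutral rule only needs $\dcompat{a}{a'}$. This requires an auxiliary lemma: evaluation maps tag-equivalent syntax in tag-equivalent environments to tag-equivalent values. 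We prove it by induction on the evaluation derivation, simultaneously for substitutions and for domain application, whose closures compare $\compat{}{}$-related bodies. We also need that $\daeq{}{}$ implies $\dcompat{}{}$, so that environments related by $\daeq{}{}$ may be used here.

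The second step is the core lemma: evaluation preserves $\daeq{}{}$. If $\daeq{\rho}{\rho'}$ and $\evald{M}{\rho}{m}$, then $\evald{M}{\rho'}{m'}$ with $\daeq{m}{m'}$. The same holds for substitution evaluation, and for application: if $\daeq{m}{m'}$, $\daeq{n}{n'}$ and $\dapp{m}{n}\shortsearrow b$, then $\dapp{m'}{n'}\shortsearrow b'$ with $\daeq{b}{b'}$. The proof is a mutual induction on the derivation for the first run, and it also establishes that the second run terminates.

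The delicate cases are the tag-computing ones. Consider applying a neutral $\dneut{\dpi{a}{B}{\rho}}{v}$. The new tags are obtained by evaluating $B$ in extended environments, and these must be related only by $\dcompat{}{}$. So the $\dcompat{}{}$-preservation lemma from the first step is invoked inside the $\daeq{}{}$ induction, with the two environments related by $\dcompat{}{}$ obtained by weakening $\daeq{}{}$. The reified argument $\dreif{a}{n}$ is then related by the highlighted reify rule. The recursor on a neutral is handled similarly, since its motive is evaluated into a tag.

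The third step is readback. We show that $\daeq{w}{w'}$ implies $\nfaeq{\Rnf_z(w)}{\Rnf_z(w')}$, with the analogous statements for $\Rne$ and for $\Rty$ under $\daeq{}{}$. We also show the tag-level fact that $\dcompat{a}{a'}$ makes readback of a reified value follow the same branch, since $\Rnf$ dispatches only on the shape of the tag and ignores universe levels. This is a mutual induction on the readback derivation. The $\Pi$-tag case is the crucial one. It applies $m$ and $m'$ to $\dneut{a}{\dvar z}$ and $\dneut{a'}{\dvar z}$, which are $\daeq{}{}$-related by $\dcompat{a}{a'}$, and then uses step two for the applications. The output tags are obtained via the $\dcompat{}{}$ lemma. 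The $\lambda$-annotations $\Rty_z(a)$ and $\Rty_z(a')$ may differ, but they are discarded by $\nfaeq{}{}$. The universe case is exactly where $\daeq{}{}$ must insist on equal levels, since $\Rty_z(\dU i)=\Ty i$.

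Combining the steps at $z=\nbesize{\Gamma}+1$ gives $\nfaeq{W}{W'}$, and functionality of NbE identifies these readbacks with the given results. I expect the main obstacle to be step two, specifically the interleaving of the two relations. Tags can trigger evaluation, so the $\daeq{}{}$ induction must call the $\dcompat{}{}$ lemma. I will first try to keep the $\dcompat{}{}$ lemma independent and proved beforehand, which requires $\daeq{}{}\Rightarrow\dcompat{}{}$ and avoids a genuinely mutual induction.
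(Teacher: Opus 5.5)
Your proposal is correct and takes the same route as the paper, which also proves this theorem through a three-part preservation lemma. First, evaluation preserves $\dcompat{}{}$, which you use to make the initial environments of $\Gamma, x:A$ and $\Gamma, x:A'$ tag-agnostically equivalent. Second, evaluating the same term $B$ preserves $\daeq{}{}$. Third, readback sends $\daeq{}{}$-related values to $\nfaeq{}{}$-related normal forms, and, as in the paper, the $\dcompat{}{}$ lemma is proved independently beforehand. One small difference: you also claim the second run terminates, which the paper avoids by taking both evaluations as given (the theorem's hypotheses supply both NbE derivations); if you keep that claim, state the $\dcompat{}{}$ lemma in existence form too.
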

\vspace{-6pt}%
The theorem says that if $A$ and $A'$ are tag-equivalent, then normalizing the same term in their respective extended contexts yields the same result up to anno-agnostic equality. %
Based on our plan, its proof begins with the following series of preservation lemmas. %
The first says that evaluation preserves tag equivalence; the second says that evaluating the same term in tag-agnostically equivalent environments preserves tag-agnostic equivalence; the third says that readback transports tag-agnostic equivalence to anno-agnostic equality. %
The evaluation preservation results generalize and hold simultaneously for substitution
evaluation and domain application/recursion, while the readback result is proved
simultaneously for normal- and neutral-form readback.
\vspace{-6pt}%
\begin{lemma}[\mechx{Preservation of tag and tag-agnostic equivalences}{eval\_preserves\_subtyp\_compat, eval\_preserves\_dom\_aeq, readback\_erases\_dom\_aeq}]\label{lem:rf:preservation}
  \phantom{x}
  \begin{enumerate}[nosep,leftmargin=*]
  \item If $\compat{M}{M'}$, $\dcompat{\rho}{\rho'}$, $\evald{M}{\rho}{m}$, and
    $\evald{M'}{\rho'}{m'}$, then $\dcompat{m}{m'}$.
  \item If
    $\daeq{\rho}{\rho'}$, $\evald{M}{\rho}{m}$, and $\evald{M}{\rho'}{m'}$,
    then $\daeq{m}{m'}$.
  \item If
    $\daeq{a}{a'}$, $\rtyd{z}{a}{V}$, and $\rtyd{z}{a'}{V'}$, then
    $\nfaeq{V}{V'}$.
  \end{enumerate}
\end{lemma}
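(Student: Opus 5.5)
Each of the three parts is proved by induction on the evaluation (respectively readback) derivation. The functional relations are generalized simultaneously with their companions: substitution evaluation $\evalsd{\sigma}{\rho}{\rho'}$, domain application $\dapp{m}{n}\shortsearrow b$, and the recursor on domain values. Because all these relations are functional, we induct on the derivation for $\rho$ and invert the derivation for $\rho'$. The compatibility premises force both sides to take the same rule, since $\compat{}{}$ and $\daeq{}{}$ relate only values of the same head shape. Environments are related pointwise, so $\denvext{}{}$ and $\denvdrop{}$ preserve both relations trivially.

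\textbf{Part (1).} For part (1) the statements are the following.
\begin{itemize}
\item If $\compat{\sigma}{\sigma'}$ and $\dcompat{\rho}{\rho'}$, then the evaluated environments are $\dot\sim$-related.
\item If $\dcompat{m}{m'}$ and $\dcompat{n}{n'}$, then $\dcompat{\dapp{m}{n}}{\dapp{m'}{n'}}$.
\end{itemize}
Most cases are congruences. In the variable case we read off the related environment entries. In the universe case $\compat{\Ty i}{\Ty j}$ yields $\dcompat{\dU i}{\dU j}$ by the free rule. For $\lambda$ the annotations are discarded, so we obtain $\dlam{M}{\rho}$ versus $\dlam{M'}{\rho'}$ with $\compat{M}{M'}$. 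In the application case for a $\lambda$-closure we use the IH on the body with the extended environments. In the neutral application case we evaluate $B,B'$ in the extended environments, use the IH for the tag, and rebuild $\dneut{b}{v~\dreif{a}{n}}$ from the congruence rules.

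\textbf{Part (2).} For part (2) the term is shared and only the environments differ. The inductive statements are:
\begin{itemize}
\item $\daeq{\rho}{\rho'}$ implies that the environments evaluated from the same $\sigma$ are $\dot\simeq$-related;
\item $\daeq{m}{m'}$ and $\daeq{n}{n'}$ imply $\daeq{\dapp{m}{n}}{\dapp{m'}{n'}}$, and likewise for the recursor.
\end{itemize}
The delicate case is neutral application $\dapp{\dneut{\dpi{a}{B}{\rho}}{v}}{n}$ against $\dapp{\dneut{\dpi{a'}{B}{\rho''}}{v'}}{n'}$. The result requires $\dcompat{b}{b'}$ for the new tags and $\daeq{v~\dreif{a}{n}}{v'~\dreif{a'}{n'}}$. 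The tags are related only by $\dot\sim$, so we must appeal to part (1), evaluating $B$ against itself with $\compat{B}{B}$.

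Two auxiliary facts are needed for this. First, $\dot\sim$ is reflexive on syntax. Second, $\daeq{}{}$ implies $\dcompat{}{}$; we prove this by a separate easy induction, and it converts $\daeq{n}{n'}$ into $\dcompat{n}{n'}$ so that the extended environment is $\dot\sim$-related. The same pattern handles the recursor on a neutral, where the motive is evaluated to form the tag.

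\textbf{Part (3).} Part (3) is a mutual induction over $\Rty$, $\Rnf$ and $\Rne$. The statements are: $\daeq{w}{w'}$ on normal values implies the $\Rnf$ results are $\simeq$, and $\daeq{v}{v'}$ on neutrals implies the $\Rne$ results are $\simeq$. In the case $\dU i$ the levels are exact, giving $\Ty i\simeq\Ty i$. In the $\Pi$ case, with shared $B$, we reflect the fresh variable $\dneut{a}{\dvar z}$ against $\dneut{a'}{\dvar z}$; these are $\dot\simeq$-related because $\daeq{a}{a'}$ implies $\dcompat{a}{a'}$. Part (2) then relates the codomains, and the IH applies.

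\textbf{Main obstacle.} The main obstacle is $\Rnf$ on $\dreif{a}{m}$ versus $\dreif{a'}{m'}$, where the tags are only $\dot\sim$-related. Readback branches on the tag, so we first show that $\dot\sim$-related tags have the same head: both are $\dU{}$, both $\Pi$, both $\dN$, or both neutral. Hence the same readback rule fires on both sides.

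In the $\Pi$-tag case we apply $m,m'$ to $\dneut{a}{\dvar z}$ and $\dneut{a'}{\dvar z}$ and evaluate the codomains, which may differ syntactically ($B$ versus $B'$ with $\compat{B}{B'}$). We then need the following:
\begin{itemize}
\item part (1) for the new tags, giving $\dcompat{b}{b'}$;
\item the application clause of part (2) for $\daeq{m_1}{m_1'}$;
\item the IH for the body.
\end{itemize}
The $\lambda$-annotations $W_1,W_1'$ are read back from $a,a'$, which are only $\dot\sim$-related, and so may differ in levels. This is exactly what $\simeq$ ignores on $\lambda$, so no recursive call on $a$ is needed there. The universe-tag case switches to $\Rty$ on $m,m'$, which are $\dot\simeq$-related, and closes by the IH.

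We would set up the head-shape lemma for $\dot\sim$ first, then discharge the remaining cases as routine congruences.
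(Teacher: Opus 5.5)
Your proposal is correct and takes essentially the paper's route. Parts (1) and (2) go by induction on evaluation, generalized simultaneously to substitution evaluation and domain application/recursion, with part (2) borrowing part (1) for the freshly evaluated tags. Part (3) goes by simultaneous induction on the readback relations, and you correctly ignore the $\lambda$-annotations read back from merely $\dot\sim$-related tags.

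One small slip: in the neutral-application case of part (2), the function tags are only $\dot\sim$-related. So the codomains are $B$ and $B'$ with $\compat{B}{B'}$, and the closures $\rho,\rho''$ are only $\dot\sim$-related, rather than sharing a single $B$. You must therefore invoke part (1) with $\compat{B}{B'}$, not $\compat{B}{B}$, exactly as you already do in part (3); your general statement of part (1) covers this.
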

\vspace{-6pt}%
The proof of the preservation lemmas is essentially a (trivial) bisimulation when environments are viewed as program states. %
The bisimulation is elementary: with tag and tag-agnostic equivalent inputs, the evaluation and the readback functions always take the same step of execution, with the mild exception of $\lambda$-annotations in the $\eta$ expansion step. %
The readback preservation result therefore relates outputs by anno-agnostic equality, which ignores this exceptional annotation difference. 

The tag-equivalence preservation result is slightly more subtle because it compares the evaluation of two related terms, $M$ and $M'$, while evaluation may contract $\beta$-redexes. %
This still follows from the shape information in $\compat M{M'}$: the two terms reduce in lockstep, including the $\beta$ steps. %
Quantifying over arbitrary related terms is necessary because tags may already differ in \Cref{thm:rf:renorm-codomain-ann}, and because $\Lambda$ and $\Pi$ closures may contain unevaluated bodies and ambient environments that are known only up to tag equivalence. %

By contrast, tag-agnostic preservation evaluates the \emph{same} term on both sides, as
\Cref{thm:rf:renorm-codomain-ann} only considers normalization of the same term. %
\Cref{thm:rf:renorm-codomain-ann} can be proved by combining these preservation results:
first, contexts $\Gamma, x:A$ and $\Gamma, x:A'$ are tag-equivalent, so their initial environments are tag-agnostically equivalent. %
Therefore, evaluating $B$ in tag-agnostically equivalent environments yields tag-agnostically equivalent domain values. %
Reading them back yields two anno-agnostically equal normal forms. From this point, we only need to reason about syntactic terms, using only tag equivalence $\compat{A}{A'}$ and anno-agnostic equality $\nfaeq{W}{W'}$. %
The part about domain values is entirely shielded by \Cref{thm:rf:renorm-codomain-ann}.

The following two lemmas further connect both anno-agnostic equality and tag equivalence with the algorithmic subtyping algorithm. The first is a relaxation lemma, which captures the intuition that normal forms admitted by the subtyping algorithm only differ in universe levels and $\lambda$-annotations. %
Equivalently, two algo-subtypes are tag equivalent.  %
This lemma, along with \Cref{thm:rf:renorm-codomain-ann}, closes \Cref{conj:renorm-uni}.
The second is a \emph{stability} lemma : the subtyping algorithm is stable under anno-agnostically equal inputs. %

\vspace{-5pt}%
\begin{lemma}[\mechx{Normal subtypes are tag equivalent}{subtyp\_rf\_exp\_subtyp\_compat}]\label{lem:rf:erase-compat}
  If $\rfsub{W}{W'}$, then $\compat{W}{W'}$.
\end{lemma}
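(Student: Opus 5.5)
We prove \Cref{lem:rf:erase-compat} by induction on the derivation of $\rfsub{W}{W'}$. In parallel we need an auxiliary statement: if $\nfaeq{W}{W'}$ then $\compat{W}{W'}$, and likewise for neutrals, $\nfaeq{U}{U'}$ implies $\compat{U}{U'}$. The two neutral-form grammars are subsets of expressions, so tag equivalence $\compat{}{}$ applies to them directly.

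We establish the auxiliary fact first, by mutual induction on $\nfaeq{W}{W'}$ and $\nfaeq{U}{U'}$. Each rule of anno-agnostic equality has a matching congruence rule of $\compat{}{}$, and the cases go as follows.
\begin{itemize}
\item The universe rule $\nfaeq{\Ty i}{\Ty i}$ is an instance of the highlighted rule $\compat{\Ty i}{\Ty j}$.
\item The $\lambda$ rule of $\compat{}{}$ ignores annotations just as $\nfaeq{}{}$ does, so it follows from the induction hypothesis on the bodies.
\item The $\Pi$, application, $\synsu{}$ and variable cases are direct congruences on the induction hypotheses.
\item The recursor case also uses the induction hypotheses on the motive and the branches.
\end{itemize}
Conceptually, $\nfaeq{}{}$ is a sub-relation of $\compat{}{}$: the latter additionally forgets universe levels. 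The only care needed is to match the full rule set of $\nfaeq{}{}$ in \Cref{app:algo} against the omitted congruence rules of $\compat{}{}$, in particular for $\Nat$, $\synze$ and $\trec$.

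With this in hand, the three cases of $\rfsub{W}{W'}$ are immediate.
\begin{itemize}
\item In the base case, $\nfaeq{W}{W'}$ holds, and the auxiliary lemma gives $\compat{W}{W'}$.
\item In the universe case, $\compat{\Ty i}{\Ty j}$ holds for any $i \le j$ by the highlighted rule. The ordering premise is simply discarded.
\item In the $\Pi$ case, the induction hypotheses give $\compat{W_1'}{W_1}$ and $\compat{W_2}{W_2'}$. The $\Pi$ congruence rule needs $\compat{W_1}{W_1'}$, so we also need symmetry of $\compat{}{}$, and then we apply the rule.
\end{itemize}

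Symmetry of $\compat{}{}$ is a routine induction on its derivation, since every rule, including the level-free universe rule, is symmetric in its two sides. Alternatively, we strengthen the statement of the lemma to give both $\compat{W}{W'}$ and $\compat{W'}{W}$, which avoids a separate lemma. I expect this contravariant flip in the $\Pi$ case to be the only point needing attention. It is the same phenomenon that complicated transitivity elsewhere, but here it is harmless because tag equivalence is symmetric.
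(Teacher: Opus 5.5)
Your proof is correct and follows the paper's route: a structural induction on the derivation of $\rfsub{W}{W'}$. You make explicit two routine auxiliary facts that the paper's one-line proof leaves implicit: that $\nfaeq{}{}$ is contained in $\compat{}{}$, which handles the base case, and that $\compat{}{}$ is symmetric, which handles the contravariant input types in the $\Pi$ case.
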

\vspace{-5pt}%

\vspace{-0.5em}

\vspace{-5pt}%
\begin{lemma}[\mechx{Stability under anno-agnostic equality}{subtyp\_rf\_respects\_nf\_aeq}]\label{lem:rf:aeq-stable}
  If $\rfsub{W}{W'}$, $\nfaeq{W}{V}$, and $\nfaeq{W'}{V'}$, then
  $\rfsub{V}{V'}$.
\end{lemma}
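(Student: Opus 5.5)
We will prove \Cref{lem:rf:aeq-stable} by induction on the derivation of $\rfsub{W}{W'}$, inverting $\nfaeq{W}{V}$ and $\nfaeq{W'}{V'}$ in each case. First we record two auxiliary facts about anno-agnostic equality, both proved by routine mutual induction on the normal and neutral forms: $\nfaeq{}{}$ is symmetric and transitive. We also note that $\nfaeq{}{}$ preserves head shape. If $\nfaeq{W}{V}$ and $W = \Ty i$, then $V = \Ty i$. If $W = \usynPi{x}{W_1}{W_2}$, then $V = \usynPi{x}{V_1}{V_2}$ with $\nfaeq{W_1}{V_1}$ and $\nfaeq{W_2}{V_2}$. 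If $W$ is neither a universe nor a $\Pi$, then neither is $V$. All three facts follow by inspecting the rules of \Cref{fig:srf} (and \Cref{app:algo}), since every rule of $\nfaeq{}{}$ relates forms with the same head constructor, and the universe rule forces identical levels.

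\emph{Base case.} Here $W$ is not $\Ty i$ or $\Pi$, and $\nfaeq{W}{W'}$. By shape preservation, $V$ is not a universe or $\Pi$ either. Chaining symmetry and transitivity gives $\nfaeq{V}{W}$, then $\nfaeq{V}{W'}$, then $\nfaeq{V}{V'}$. The base rule then yields $\rfsub{V}{V'}$.

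\emph{Universe case.} Here $W = \Ty i$, $W' = \Ty j$ and $i \le j$. Inverting the universe rule of $\nfaeq{}{}$ gives $V = \Ty i$ and $V' = \Ty j$ with the same levels, so the universe rule of $\rfsub{}{}$ applies directly. This is the point where it matters that $\nfaeq{}{}$ compares levels exactly, unlike tag equivalence.

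\emph{$\Pi$ case.} Here $W = \usynPi{x}{W_1}{W_2}$, $W' = \usynPi{x}{W_1'}{W_2'}$, with premises $\rfsub{W_1'}{W_1}$ and $\rfsub{W_2}{W_2'}$. Inversion gives $V = \usynPi{x}{V_1}{V_2}$ and $V' = \usynPi{x}{V_1'}{V_2'}$ componentwise related. For the inputs, we apply the induction hypothesis to $\rfsub{W_1'}{W_1}$ with $\nfaeq{W_1'}{V_1'}$ and $\nfaeq{W_1}{V_1}$; the induction hypothesis is quantified over both endpoints, so the contravariant swap causes no trouble. This gives $\rfsub{V_1'}{V_1}$. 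For the outputs, the induction hypothesis gives $\rfsub{V_2}{V_2'}$, and the $\Pi$ rule concludes.

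Since $\rfsub{}{}$ is context-free and structural, no semantic reasoning is needed. The only real obstacle is the auxiliary transitivity of $\nfaeq{}{}$ in the base case. It must be proved mutually for normal and neutral forms, including the recursor case where motives and branches are compared. This is still a routine structural induction.
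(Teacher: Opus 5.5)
Your proof is correct and follows the paper's route: a structural induction on the derivation of $\rfsub{W}{W'}$. You invert $\nfaeq{}{}$ for shape preservation in the universe and $\Pi$ cases (where exact levels matter), and you close the base case with symmetry and transitivity of $\nfaeq{}{}$. The contravariant input premise causes no trouble because, as you note, the induction hypothesis is quantified over both endpoints.
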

\vspace{-5pt}%

Both are proved by structural induction on derivations.  %
Combining all theorems and lemmas above, we are ready to prove \Cref{conj:renorm-uni}, which is really the core of the completeness theorem. 
\vspace{-3pt}%
\begin{theorem}[\mechx{Completeness}{completeness\_subtyp\_rf}]\label{thm:rf:complete}
  If $\subtyp{A}{A'}$, then $\arfsub{A}{A'}$.
\end{theorem}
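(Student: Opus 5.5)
The plan is to avoid inducting on the declarative derivation of $\subtyp{A}{A'}$ directly. Instead we reduce to the renormalizing algorithm, whose completeness is already available as \Cref{thm:renorm:complete}. From $\subtyp{A}{A'}$ we get $\asubtyp{A}{A'}$: there are $W, W'$ with $\nbetyd{\Gamma}{A}{W}$, $\nbetyd{\Gamma}{A'}{W'}$ and $\anfsub{W}{W'}$. Since $\arfsub{A}{A'}$ normalizes the same two types, it suffices to prove the following claim:
\[
\anfsub{W}{W'} \text{ implies } \rfsub{W}{W'}
\]
for normal forms that are well-formed in $\Gamma$, or more precisely that arise as NbE outputs. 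By presupposition (\Cref{thm:syntax:presup}) and soundness of NbE (\Cref{thm:soundness:nbe}), $W$ and $W'$ are well-typed at a common universe.

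We prove the claim by induction on the derivation of $\anfsub{W}{W'}$, generalizing over $\Gamma$. The base cases are immediate. Syntactic equality $W = W'$ implies $\nfaeq{W}{W'}$ by reflexivity of $\nfaeq{}{}$, and the universe case coincides in both systems.

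In the $\Pi$ case we have $\anfsub{W_1'}{W_1}$, $\nbetyd{\Gamma, x : W_1'}{W_2}{W_2''}$ and $\anfsub[\Gamma, x : W_1']{W_2''}{W_2'}$. The induction hypotheses give $\rfsub{W_1'}{W_1}$ and $\rfsub{W_2''}{W_2'}$. To conclude $\rfsub{\usynPi{x}{W_1}{W_2}}{\usynPi{x}{W_1'}{W_2'}}$ we need $\rfsub{W_2}{W_2'}$, which we obtain by \Cref{lem:rf:aeq-stable} once we know $\nfaeq{W_2''}{W_2}$. For that, \Cref{lem:rf:erase-compat} turns $\rfsub{W_1'}{W_1}$ into $\compat{W_1'}{W_1}$, and symmetry of $\compat{}{}$ gives the orientation needed for \Cref{thm:rf:renorm-codomain-ann}.

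We then apply \Cref{thm:rf:renorm-codomain-ann} with $B := W_2$. This needs two normalizations of $W_2$: the given $\nbetyd{\Gamma, x : W_1'}{W_2}{W_2''}$, and $\nbetyd{\Gamma, x : W_1}{W_2}{W_2}$. The latter follows from idempotence of NbE (\Cref{lem:nbe:idempotent}), since $W_2$ is itself the readback of the output type in $\Gamma, x : W_1$. The theorem then yields $\nfaeq{W_2}{W_2''}$, and we use symmetry of $\nfaeq{}{}$ as needed. Stability also needs $\nfaeq{W_2'}{W_2'}$, which is reflexivity.

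The main obstacle is justifying idempotence for $W_2$. \Cref{lem:nbe:idempotent} applies to a type $A$ with $\nbetyd{\Gamma}{A}{W}$, whereas $W_2$ appears as a subcomponent of a normal form. We would first apply idempotence to the whole $\Pi$ type, giving $\nbetyd{\Gamma}{\usynPi{x}{W_1}{W_2}}{\usynPi{x}{W_1}{W_2}}$. Inverting the $\Rty$ rule for $\Pi$ then shows that evaluating $W_2$ in $\denvext{\initenv{\Gamma}}{\dneut{a}{\dvar{|\Gamma|}}}$, i.e.\ $\initenv{(\Gamma, x : W_1)}$, reads back to $W_2$. This inversion is exactly the required $\nbetyd{\Gamma, x : W_1}{W_2}{W_2}$.

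To keep the induction going, we carry the invariant that both sides are NbE-fixed in their contexts. On the right this is inherited from $\nbetyd{\Gamma}{A'}{W'}$. For $W_2''$ it holds because $W_2''$ is an NbE output in $\Gamma, x : W_1'$. Well-typedness of $W_2$ in $\Gamma, x : W_1'$ follows from \Cref{thm:syntax:narrowing}, using the sub-context induced by $\subtyp{W_1'}{W_1}$ obtained from \Cref{thm:renorm:sound}.
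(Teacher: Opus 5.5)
Your proof is correct, but it takes a genuinely different route from the paper's.

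\textbf{The paper's route.} The paper inducts directly on the declarative derivation of $\subtyp{A}{A'}$.
\begin{itemize}
\item Reflexivity is closed by completeness of NbE.
\item Transitivity is closed by determinism of NbE together with the structural transitivity of $\rfsub{}{}$ (\Cref{lem:rf:trans}).
\item In the $\Pi$ case, it applies \Cref{lem:rf:erase-compat} and \Cref{thm:rf:renorm-codomain-ann} to the \emph{original} output type $B$, normalized in $\Gamma, x : \nbetyf{\Gamma}{A}$ and in $\Gamma, x : \nbetyf{\Gamma}{A'}$. It realigns these with $\Gamma, x : A$ and $\Gamma, x : A'$ by stability of NbE under equivalent contexts, and finishes with \Cref{lem:rf:aeq-stable}.
\end{itemize}

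\textbf{Your route.} You factor through \Cref{thm:renorm:complete} and prove a simulation $\anfsub{W}{W'} \Rightarrow \rfsub{W}{W'}$. Your $\Pi$ case feeds the normal form $W_2$, not $B$, into \Cref{thm:rf:renorm-codomain-ann}. So you need the NbE-fixedness invariant (idempotence plus inversion of the $\Pi$ readback) to obtain $\nbetyd{\Gamma, x : W_1}{W_2}{W_2}$. You are right that this invariant cannot be dropped. Arbitrary well-typed normal types are not NbE-fixed: a $\lambda$ sitting inside a neutral argument can have its annotation changed by renormalization.

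\textbf{What each buys.} Your route spares you all reflexivity and transitivity reasoning for the new algorithm. It also gives a direct simulation of the renormalizing algorithm by the renormalization-free one. The price is that completeness of the renormalization-free algorithm now rests on the entire renormalizing development: pre-nfs, \Cref{lem:renorm:compose}, \Cref{lem:renorm:narrow}, and \Cref{lem:renorm:trans}. That is precisely the tedium \Cref{sec:renorm-free} was designed to bypass. The paper's proof is independent of \Cref{sec:renorm} and needs only the elementary \Cref{lem:rf:trans}.

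\textbf{Two small remarks.}
\begin{itemize}
\item Instantiating \Cref{thm:rf:renorm-codomain-ann} with $A := W_1'$ and $A' := W_1$ gives $\nfaeq{W_2''}{W_2}$ directly. No symmetry of $\compat{}{}$ or $\nfaeq{}{}$ is needed.
\item You should say explicitly that the invariant also passes to $W_1$ and $W_1'$ for the input-type recursive call. It follows from the same readback inversion.
\end{itemize}
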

\vspace{-5pt}%

\vspace{-10pt}%
\begin{proof}
  Induction on $\subtyp{A}{A'}$.
  \emph{Reflexivity}: completeness of NbE (\Cref{thm:completeness:nbe})
  yields a common normal form.
  \emph{Universes}: trivial.
  \emph{Transitivity}: determinism of NbE aligns the middle normal form, and
  \Cref{lem:rf:trans} closes the case.

  The $\Pi$ case is the main case. %
  We compare $\usynPi{x}{A}{B}$ and $\usynPi{x}{A'}{B'}$. %
  The induction hypotheses give
  $\rfsub{\nbetyf{\Gamma}{A'}}{\nbetyf{\Gamma}{A}}$ for the input types and
  $\rfsub{\nbetyf{\Gamma, x : A'}{B}}{\nbetyf{\Gamma, x : A'}{B'}}$ for the output
  types. By the Soundness of NbE and the stability of NbE under equivalent contexts, we know $\nbetyf{\Gamma, x : \nbetyf{\Gamma}{A}}{B} = \nbetyf{\Gamma, x : A}{B}$ (similarly for $\nbetyf{\Gamma, x : A'}{B}$). To show $\rfsub{\nbetyf{\Gamma, x : A}{B}}{\nbetyf{\Gamma, x : A'}{B'}}$, we first apply \Cref{conj:renorm-uni} to get $\nfaeq{\nbetyf{\Gamma, x : A'}{B}}{\nbetyf{\Gamma, x : A}{B}}$, then apply \Cref{lem:rf:aeq-stable} to close the proof. 
\end{proof}
\vspace{-3pt}%

%% file: sections/discussion.tex
\vspace{-0.5em}
\section{Discussions}\label{sec:discussion}

The previous sections develop two complete and sound algorithms, but neither proof comes
from a single fully semantic model. We discuss our previous attempts at
such a semantic proof. We then discuss how the current framework might
extend to other subtyping features.

\vspace{-5pt}
\subsection{Why a fully semantic proof resists}\label{sec:semantic-hard}

In \Cref{sec:renorm-free:complete}, we describe a completeness proof for the renormalization-free subtyping algorithm. %
The primary challenge of the completeness proof is to reason about the evaluation of the same term in two different contexts related by the sub-context relation. %
Unfortunately, this requirement has escaped what the current PER model, as given by \mcltt and in \Cref{sec:nbe-per}, can offer. %
Therefore, the focus of the proof has been to find a way to reason about evaluation in unbalanced environments. %
Our proof is somewhat unexpected: it does not involve constructing a model, in a way similar to \Cref{sec:nbe-per}, but reasons about the program behaviors of the evaluation and readback functions instead. %
This proof method has the evident advantage of simplicity, while it is also not very typical. %
Usually, a type-theoretical approach is fully semantic: one finds a model that is general enough, so that once instantiated, it shows that evaluations of the same term in a sub- and a super-context can be related by anno-agnostic equality (c.f. \Cref{conj:renorm-uni}). 

One closely related work is the development of sized types by \citet{abel_normalization_2017}, where they introduce sized types to MLTT so that the termination checker can admit non-primitive recursions more easily. %
Orthogonal to sized types is their contravariant setup of $\Pi$ types, exactly identical to our setting. %
They have also provided an NbE and a subtyping algorithm, and proved both complete and sound. %
After looking into their development, we see that \citeauthor{abel_normalization_2017}'s NbE algorithm only normalizes to \emph{weak-head} normal forms. %
Their subtyping algorithm first weak-head normalizes both given types and then make the comparison based on their head shapes. %
It is therefore natural to compare output types of $\Pi$ in an extended context with the narrower input type, and rely on the recursion to perform deeper weak-head normalization. %
This setting is analogous to our renormalization algorithm in \Cref{sec:renorm}, in its continuous weak-head normalization in recursive calls. %
In fact, their setting makes the completeness of subtyping even simpler to prove than our renormalization algorithm, because they do not need to reason about the interactions between renormalization and the subtyping algorithm at all. %
Whereas in our case, since our NbE gives full $\beta\eta$ normal forms, this complication is unavoidable and we rely on pre-nfs in \Cref{def:renorm:source} to prove transitivity. %
Furthermore, it is also unclear how to adapt \citeauthor{abel_normalization_2017}'s constructions to prove completeness of the renormalization-free algorithm in \Cref{sec:renorm-free}, where the core comparison algorithm does not (need to) keep track of the ambient context at all. %
In retrospect, we think that full $\beta\eta$ normalization does introduce unwanted complexity in developing an efficient contravariant subtyping algorithm. %

Taking one step back from \citeauthor{abel_normalization_2017}'s work, to construct a fully semantic completeness proof for the algorithm in \Cref{sec:renorm-free}, the semantic relation of two domain types must satisfy two aspects. %
(1) Two related domain types should be read back equal up to anno-agnostic equality. %
This aspect corresponds to the base case of the algorithm, which takes $\eta$ expansion into account and ignores $\lambda$-annotations during equality comparison. %
(2) The relation itself should subsume the subtyping relation and also generalize the PER model. %
The second aspect is the primary source of complication due to its conflicting requirements. %
First, since this relation is used to reason about evaluation in sub-contexts, it must capture the subtyping relation between types that are extended in the context, so it must subsume the semantic subtyping, an asymmetric relation. %
Moreover, we are concerned about the anno-agnostic equality of the readbacks of domain values tagged by related domain types. %
Therefore, this semantic relation must also generalize the PER model, a symmetric relation, so that it has an output relation to relate domain values. %
In summary, the relation should be of the form $\dcompatrel a b R$, so that if $m$ and $m'$ are related by $R$, then the readbacks of $\dreif a m$ and $\dreif b{m'}$ are anno-agnostically equal. %
This desired property is effectively the realizability of the $\dcompatrel a b R$ relation. %
The structure of this relation also has two layers and is similar to tag and tag-agnostic equivalences in \Cref{sec:renorm-free:complete} retrospectively. %
The domain types $a$ and $b$ are only used in tags. %
The main difference in the semantic relation $\dcompatrel a b R$ is its asymmetry, caused by its subsumption of subtyping, which has led to several technical difficulties and conflicts with PER's symmetry. %

In one previous failed attempt, we try to eliminate the asymmetry and reintroduce symmetry by permissively relating universes regardless of their levels in the base case, as in tag equivalence:

\vspace{-0.5em}
\[
  \setlength{\arraycolsep}{2pt}
    \inferrule
    {R \releq \Uc_{\min(i, j)}}
    {\dcompatrel {\dU i}{\dU j} R}
    \quad
    \inferrule
    {\dcompatrel{a}{a'}{R_I} \quad
     \forall \D \uname {n} \sim {n'} \in {R_I}.\
       \dcompatrel{\evalf{B}{\denvext{\rho}{n}}}{\evalf{B'}{\denvext{\rho'}{n'}}}{R_O(\D)} \\\\
     R \releq \{(m, m') \mid
       \forall \D \uname {n} \sim {n'} \in {R_I}.\
       {\dapp{m}{n}} \sim {\dapp{m'}{n'}} \in {R_O(\D)}\}}
    {\dcompatrel{\dpi{a}{B}{\rho}}{\dpi{a'}{B'}{\rho'}}{R}}
\]
\vspace{-0.5em}

At the first glance, this relation resembles the tag equivalence in many ways. %
In the universe case, the returned relation is just the PER model at the smaller level $\min(i, j)$, as we want the readbacks of values related by $R$ to be equal and types at level $\min(i, j)$ can be inhabited at both levels $i$ and $j$. %
For the $\Pi$ case, $\dcompatrel{a}{a'}{R_I}$ returns $R_I$ that relates domain values of types $a$ and $a'$, respectively. %
Then the final returned relation $R$ requires that for all related $n$ and $n'$, applying $m$ and $m'$ to them respectively yields related results. %
This definition does allow us to prove the realizability property specified above; but it falls apart in the transitivity proof in the $\Pi$ case. %
The problem is that this model is overly generalized; transitivity of $\dcompatrel{a}{a'}{R}$ and $\dcompatrel{a'}{a''}{R'}$ derives 
$\dcompatrel{a}{a''}{R''}$ for some $R''$, which
does not maintain any information about $a'$. %
In addition, this relation does \emph{not} subsume subtyping, as $B$ in the $\Pi$ rule is not evaluated with a value of the subtype. %

The semantic relation above is defined with the same underlying idea as the successful proof in \Cref{sec:renorm-free:complete} in mind, but why is it a failed attempt? %
Our understanding is that, in a semantic construction, a model must simultaneously satisfy various properties, including closure under $\Pi$ types, transitivity, and subsumption of semantic subtyping and the PER model, in this particular case.  %
It is therefore challenging and time-consuming to tune the definition to meet these requirements \emph{at once}. %
Whereas in the proof in \Cref{sec:renorm-free:complete}, we bypass the challenge wholesale; instead of meanings of types, we focus on the program behaviors, which is indeed the right insight to finalize the completeness proof. %

Finally, we can imagine that a semantic proof remains possible. %
In fact, we have a promising candidate in mind. %
Consider the second aspect above, which requires the semantic relation to extend semantic subtyping and the PER model. %
It seems sufficient to simply embrace the asymmetry and combine semantic subtyping and the aforementioned failed attempt in one inductive relation:
\begingroup
\setlength{\abovedisplayskip}{2pt plus 1pt minus 1pt}
\setlength{\belowdisplayskip}{2pt plus 1pt minus 1pt}
\[
    \inferrule
    {i \le j \\\\ R \releq \Uc_{\min(i, j)}}
    {\dcompatrel {\dU i}{\dU j} R}
    \qquad
    \inferrule
    {\dcompatrel{a'}{a}{R_I} \\ \semsubtyp{\dpi{a}{B}{\rho}}{\dpi{a'}{B'}{\rho'}}{i} \\\\
     \forall \D \uname {n'} \sim {n} \in {R_I}.\
       \dcompatrel{\evalf{B}{\denvext{\rho}{n}}}{\evalf{B'}{\denvext{\rho'}{n'}}}{R_O(\D)} \\\\
     R \releq \{(m, m') \mid
       \forall \D \uname {n'} \sim {n} \in {R_I}.\
       {\dapp{m}{n}} \sim {\dapp{m'}{n'}} \in {R_O(\D)}\}}
    {\dcompatrel{\dpi{a}{B}{\rho}}{\dpi{a'}{B'}{\rho'}}{R}}
\]
\endgroup

This new definition is more plausible. %
First, it is asymmetric; it is easy to prove the subsumption of semantic subtyping: if $\dcompatrel a b R$, then $\semsubtyp a b i$ for some $i$. %
The asymmetry comes from two places. %
In the universe case, we now require $i \le j$. %
In the $\Pi$ case, to target the subsumption of semantic subtyping, the first premise now compares $a$ and $a'$ contravariantly, i.e. $\dcompatrel{a'}{a}{R_I}$. %
We further force the subtyping of both $\Pi$ types by requiring the second premise. %
The subsumption of semantic subtyping further enables transitivity, and realizability is maintained. %

Nevertheless, we still think that the proof in \Cref{sec:renorm-free:complete} is more elegant and concise. %
Even if the candidate model above is strong enough to establish completeness, a whole semantic setup is still unavoidable: we must prove necessary ambient properties, realizability, fundamental theorems, and finally extract the desired property of \Cref{conj:renorm-uni}. %
This entire roadmap is a detour compared to our present approach. %
We simply use tag and tag-agnostic equivalences to capture two layers of behavioral differences, and prove the preservation lemma (\Cref{lem:rf:preservation}) w.r.t. them. %
The proof complexity is reduced drastically by taking the right approach. %
We hope our approach to simplify other semantic proofs.

\vspace{-0.3em}
\subsection{Scaling beyond contravariant universes}

For a new subtyping feature in a subsumptive subtyping discipline, subsumption
must preserve the term: using a subtype where a supertype is expected inserts no
term-level coercion. The semantic model
therefore has to validate an inclusion like \Cref{lem:per:elem-subtyping}: the
same domain value related at the subtype must also be related at the supertype.
Once this transport is available, the two algorithmic developments pose
different proof obligations. The renormalizing algorithm needs the output-type
comparison to be realignable by another NbE run in the narrowed context, with a
well-founded measure like the pre-renormalization normal forms of
\Cref{def:renorm:source}. The renormalization-free algorithm needs the two
possible normalizations to differ only in observations that the comparison
relation may ignore, as $\nfaeq{}{}$ ignores $\lambda$-annotations in
\Cref{thm:rf:renorm-codomain-ann}.

Record and $\Sigma$ types are the first candidate. We assume the standard syntax for them and $\eta$-rules are considered. Concretely, for record, it is
$M \approx \{\,\ell_i = M.\ell_i\,\}_{i \in I}$ at record type
$\{\ell_i{:}A_i\}_{i \in I}$; for dependent pairs, it is $M \approx (\mathsf{fst}\,M,\mathsf{snd}\,M)$ at $\Sigma(x{:}A).\,B$. To handle these rules, $\Sigma$ and record types need to guide the $\eta$ expansions when they appear in tag positions. The prospective
subtyping rules can be sketched as follows:
\begingroup
\setlength{\abovedisplayskip}{2pt plus 1pt minus 1pt}
\setlength{\belowdisplayskip}{2pt plus 1pt minus 1pt}
\begin{mathparshrink}
  \small
  \inferrule
  {\subtyp{A}{A'} \\ \subtyp{B}{B'}}
  {\subtyp{\{\ell_1{:}A,\,\ell_2{:}B\}}{\{\ell_1{:}A',\,\ell_2{:}B'\}}}
  \and
  \inferrule
  {\subtyp{A}{A'} \\ \subtyp[\Gamma, x : A]{B}{B'}}
  {\subtyp{\Sigma(x{:}A).\,B}{\Sigma(x{:}A').\,B'}}
  \and
  \inferrule{~}
  {\subtyp{\{\ell_1{:}A,\,\ell_2{:}B\}}{\{\ell_1{:}A\}}}
\end{mathparshrink}
\endgroup
The first rule, record depth, is the easy case: subtyping happens
component-wise, and the same record value can be viewed at the supertype. The
algorithmic proof is also straightforward, with no sub-context discrepancy. Dependent $\Sigma$-depth subtyping is less immediate. The displayed rule is the
natural covariant candidate: the first component moves from $A$ to $A'$, and the
second component is compared under $x : A$ because an actual pair of the subtype
contains a first component of type $A$.  Unlike the $\Pi$ rule, however, it is less clear that this is exactly the right
semantic principle for $\Sigma$: the second component of a $\Sigma$ value is
typed at its actual first component, not uniformly at every element of the
domain. The semantic question is whether this rule is adequate as stated, or whether the interpretation of $B'$ at such an
$x$ requires an additional compatibility condition. Algorithmically, if this
rule is validated, it should still be milder than the current $\Pi$ subtyping
case, since pair $\eta$ preserves the pair shape and the rule does not reverse
the first-component subtyping direction.

Record width subtyping is different. Semantically, the displayed rule is reasonable for projection-based records: a value belongs to the narrow record
type when its retained projections have the required PERs, so the same wide
value can be used at the narrow type.
Algorithmically, however, it is more complicated, since subtype-related record types of different
widths can $\eta$-expand the same neutral record to normal forms with different
sets of projected fields. Such a more significant perturbation to ``shapes'' further requires non-trivial adaption of both our current proof techniques.

For a base inclusion such as $\subtyp{\Nat}{\Int}$, the algorithmic side is
benign: the new rule lives at a leaf in the subtyping derivation, does not
trigger any $\eta$-expansion, and is therefore comparable to a simplified
universe subtyping case. The challenge is semantic. In dependent type theory,
values are not untyped objects later classified by types: new data structures
are introduced together with their types by their formation and introduction
rules. Thus, if $\Nat$ and $\Int$ are defined as separate inductive types, a
constructor-built natural number is not already an integer value. Relating the
two then inevitably requires a coercion from $\Nat$ to $\Int$. A subsumptive account would therefore need a different
design in which the two base types are interpreted over a shared representation,
so that every natural-number value is already available as an integer value. 

%% file: sections/related.tex
\vspace{-0.5em}
\section{Related Work}\label{sec:related}

This section reviews related work in four strands: subsumptive and coercive
subtyping, contravariant $\Pi$ subtyping and the procedures that decide it,
universe structure and universe subtyping, and mechanizations of dependent type
theory. 

\vspace{-3pt}
\paragraph{Subsumptive and coercive subtyping}
Subtyping is commonly presented in two styles: \emph{subsumptive} subtyping,
where a term of a subtype may be used at a supertype unchanged, and
\emph{coercive} subtyping, where elaboration inserts an explicit map. Both
styles occur beyond dependent type theory~\cite{cardelli_multiple_1988}. This paper
uses the subsumptive style: semantically, subtyping is an inclusion of elements,
so a value related at the subtype is already related at the supertype. Universe
cumulativity in systems such as \coq{} and \mcltt{} is subsumptive in this
sense, and our semantic coercion is the identity
(\Cref{lem:per:elem-subtyping}). A large body of dependent-type work studies the coercive style.
In Luo's coercive subtyping~\cite{luo_coercive_1999,luo_coercive_2013}, a declared map
$c : A \to B$ lets a term of $A$ stand where a $B$ is expected, but the
elaborated term contains the coercion. This is more flexible than subsumption
but requires coherence of coercion paths; for example, the first projection can
make $\Sigma(x{:}A).\,B$ a subtype of $A$. This style underlies \coq's
implicit-coercion mechanism~\cite{saibi_inheritance_1997}. More recent
dependent-type work makes coercions compute through type constructors, as in
Laurent's functorial coercions~\cite{laurent_functoriality_2024} and
AdapTT~\cite{adjedj_adaptt_2026}; these systems address a different problem
from our transparent, identity-coercion subtyping.

\vspace{-3pt}
\paragraph{Contravariant $\Pi$ subtyping}
Contravariant input-type subtyping has been studied across various systems in
programming language research~\cite{cardelli_multiple_1988,castagna_contravariance_1995,frisch_semantic_2008}; for dependent types, \citet{aspinall_subtyping_2001} decide
subtyping for $\lambda P$ with bounded type constants against a
reduction back end, and \citet{zwanenburg_pts_1999} adds contravariant
subtyping to pure type systems with normalization and subject-reduction
results. The procedures
that do decide the full contravariant rule in a dependent setting are the
closest to this paper. \citet{DBLP:conf/tlca/FridlenderP13} decide an MLTT
with cumulative universes and contravariant $\Pi$ subtyping by
normalize-and-compare over NbE, but their theory omits $\eta$ and its
$\lambda$s are input-annotation-free, so readback is untyped and depends
only on the length of the context, free from the sub-context complexity in the first place.
\citet{scherer_universe_2011} treat contravariant universe subtyping with
$\eta$ and typed equality on paper, and report attempting, but not completing, a \coq{}
mechanization of the core argument. \citet{abel_normalization_2017} extend
NbE to sized dependent types with the full contravariant rule, decided
incrementally on weak-head normal forms. We have already discussed it extensively in \Cref{sec:discussion}.
\citet{liu_conversion_2026} mechanize in \coq{} a sound, complete, and
terminating untyped reduce-and-compare procedure with cumulative universes,
contravariant $\Pi$ input types, and $\eta$ for functions and pairs. Their syntactic rules differ from ours in two aspects. %
First, their subtyping rule for $\Pi$ types, if $\subtyp{A'}{A}$ and $\subtyp[\Gamma, x : A]{B}{B'}$, then $\subtyp{\usynPi{x}{A}{B}}{\usynPi{x}{A'}{B'}}$, compares $B$ and $B'$ in a context extended with the super-type, and hence is weaker than ours.
Secondly, their system includes
injectivities of $\Pi$ types as built-in rules as a technical necessity, which also eases the meta-theory. %
Meanwhile, injectivities of type constructors are provable via a semantic argument in our setting. %
Moreover, the absence of annotations on
their Curry-style $\lambda$s makes the development harder to scale to a full
type-checking procedure. %
Their verified artifact is a conversion checker. Our work instead provides a mechanized, sound, complete, and terminating type-checking procedure for contravariant universe subtyping over an
NbE backend.

\vspace{-3pt}
\paragraph{Universe structures}
Universes and their structures have long been a central topic in type theory. %
Introduced by \citet{DBLP:books/daglib/0000395} and further refined by \citet{palmgren_universes_1998} are two common universe structures: the Russell style, where types and terms have a unified syntactic category and share the same well-formedness judgment, and the Tarski style, where types and terms are distinguished on both syntactic and judgmental levels. %
The Russell style is the most common in practice and is what the syntactic theory of this paper presents. %
The Tarski style is semantically more fundamental~\cite{luo_universe_notes_2012}. %
In this style, elements of universes are \emph{codes} of types, and a decoder, often written as $\El$, is used to translate codes into actual types. %
Our semantic models are given in this style. %
\citet{coquand_presheaf} recently defines the Coquand style, which is similar to the Tarski style in its distinction between types and terms. %
It differs, however, in its additional encoding function, allowing to obtain codes back from types. %
The Coquand style in addition requires the encoder and the decoder to form an isomorphism. %
This style is convenient in certain constructions of modal type theory~\citep{DBLP:conf/lics/GratzerKNB20,DBLP:journals/lmcs/GratzerKNB21}.
Besides style, universes are also organized in different hierarchies:
\emph{cumulativity}, as in Luo's ECC~\cite{luo_ecc_1994} and \coq{} with its PCUIC
formalization~\cite{DBLP:journals/corr/abs-1710-03912}, or \emph{non-cumulativity}, as in  \agda~\cite{agda} and Lean~\citep{DBLP:conf/cade/Moura021}. %

\paragraph{Mechanizations of dependent type theory}
Our development extends \mcltt~\cite{mctt}, which mechanizes NbE \`a la
\citet{abel_normalization_2013} for MLTT with a cumulative hierarchy and
a \emph{covariant} $\Pi$ rule with universe subtyping in \coq, using an
impredicative encoding of the inductive-recursive PER model and
extracting a verified checker; type-directed conversion checking and its
completeness via PERs and logical relations follow the NbE line of
\citet{abel_normalization_2013}. The present work
keeps that architecture and upgrades the subtyping discipline to
contravariant $\Pi$ subtyping, which modifies its model moderately but
reshapes the algorithmic layer and its proof entirely.
\citet{abel_decidability_2017} mechanize in Agda a different,
reduction-based proof of decidability of conversion for MLTT, using typed
weak-head reduction and a Kripke logical relation rather than NbE.
\citet{DBLP:journals/pacmpl/PujetT22,DBLP:journals/pacmpl/PujetT23} extend
that line to observational equality and impredicative observational equality,
including decidability results and a refinement that avoids
induction-recursion.
\citet{wieczorek_coq_2018} mechanize NbE for an MLTT with one universe in
\coq. %
\citet{hu_jang_pientka_2023} mechanize NbE for a modal extension of
MLTT with full cumulativity in Agda; \citet{jiang_noncumu_2025} mechanize
NbE for a non-cumulative hierarchy. They also adopted an operational view on NbE to 
prove their additional universe levels annotations (to establish NbE soundness) do not change the execution trace and can be erased. 
Most recently, \citet{danielsson_firstclass_2026} internalize
universe levels as a first-class type and prove normalization and
decidability of type-checking for level-polymorphic Agda, a different
axis of universe flexibility from the cumulative slack we confine.

Beyond MLTT cores, verified kernels are now being built for the full
theories of working proof assistants: \citet{DBLP:conf/aplas/LiesnikovC24}
give a correct-by-construction, intrinsically typed checker for a core of
Agda in Agda, shifting the trusted base from kernel code to the stated
rules, and \citet{carneiro_lean4lean_2024} reimplement and verify Lean's
kernel in Lean, uncovering and fixing a soundness bug in the process. In
the reduction-based tradition, \citet{coqcoqcorrect} verify a type
checker for PCUIC against its formalized typing rules (assuming
normalization), \citet{lennonbertrand2025doescertifyconversion} isolates
what proof obligations are needed to certify conversion checkers, and
\citet{lennonbertrand_bidir} equips PCUIC with a complete bidirectional typing
discipline, all under the invariant-input-type cumulativity this paper relaxes.

%% file: sections/conclusion.tex
\vspace{-0.6em}
\section{Conclusion}\label{sec:conclusion}

We have presented \mclttsub, a complete, mechanized account of contravariant
universe subtyping for an NbE-based MLTT kernel. The syntactic theory, the NbE
algorithm, and the PER model adapt to contravariance with only moderate changes:
semantic subtyping is extended to also compare input types, and the soundness and
completeness of NbE are re-established in the new setting. The difficulty
concentrates in the subtyping algorithm, for which we give two verified
solutions. The renormalizing algorithm re-runs NbE on output types at every $\Pi$
and is proved correct,
at the price of extra computation. The renormalization-free algorithm instead
compares the already-computed normal forms structurally, recovering efficiency with a subtler completeness proof.

The renormalization-free proof reaches this result by an unconventional route.
The standard type-theoretic approach would build a semantic model and read
completeness off a realizability theorem. Contravariance, which inverts the
direction of subtyping under $\Pi$, makes this delicate: the property is
inherently directional, yet establishing it forces one to relate both directions
of the subtyping at once, and the symmetric reasoning the PER model rests on
breaks down, leaving each induction hypothesis valid in only one direction.
Rather than re-engineer the model, we step outside the well-typed domain and treat
NbE as a running algorithm: on raw terms, with no typing assumptions, we recover a
\emph{symmetric} invariant that forgets which side is the subtype, and a
bisimulation over the normalizer's shared execution traces closes the otherwise
directional inductions, from which completeness follows. Program equivalence is a
standard tool in operational semantics yet rarely brought to bear on dependent
type theory, and we believe this perspective reaches beyond the present result:
it realizes, outside any model, the universe-level irrelevance that
\citet{abel_normalization_2017} proposed to build into one; it should accommodate
further subtyping features, such as  $\Sigma$ or record  subtyping, which we leave to
future work; and, more broadly, it offers a fresh, algorithmic perspective for
type-theoretic meta-theory.

%% file: sections/appendix/rules.tex
\pagebreak
\section{The Full Declarative System}\label{app:declarative}

This appendix collects the rules and definitions that the body elides or
shows only in part. They are transcribed from the \coq{} mechanization and
use the notation of the main text. Subtyping, term subsumption, and context
subtyping are already given in full in \Cref{fig:subtyping}; the universe
PER model, semantic subtyping, and the completeness semantic judgments in
\Cref{fig:per-model,fig:per-subtyp} and \Cref{sec:nbe-per:completeness}. We
do not repeat them.

Throughout, $\qsub{\sigma} \coloneqq (\sigma \circ \synwk, x_0)$ denotes the
\emph{lifted} substitution that protects the topmost variable, and
$\qsub{\qsub{\sigma}}$ its double lift. Four subsumption-style rules
($\typing{M}{A'}$, $\substyping{\sigma}{\Delta'}$, and the two equivalence
analogues) each carry an extra well-formedness premise on the \emph{target}
that is redundant given presupposition (\Cref{thm:syntax:presup}); we mark
such premises \phlmath{\text{light gray}}, exactly as in \Cref{fig:subtyping}.

\paragraph{Contexts and variable lookup}
Context well-formedness $\wfctx{\Gamma}$ and the lookup judgment
$x_d : A \in \Gamma$:
\begin{rulepar}
  \inferrule{~}{\wfctx{\cdot}}

  \inferrule{\wfctx{\Gamma} \\ \typing{A}{\Ty i}}{\wfctx{\Gamma, x : A}}

  \inferrule{~}{x_0 : A[\synwk] \in \Gamma, x : A}

  \inferrule{x_d : A \in \Gamma}{x_{1+d} : A[\synwk] \in \Gamma, y : B}
\end{rulepar}
Context equivalence $\wfctxeq{\Gamma}{\Delta}$ is the evident pointwise
congruence:
\begin{rulepar}
  \inferrule{~}{\wfctxeq{\cdot}{\cdot}}

  \inferrule
  {\wfctxeq{\Gamma}{\Delta} \\\\
   \typing{A}{\Ty i} \\ \typing{A'}{\Ty i} \\\\
   \typing[\Delta]{A}{\Ty i} \\ \typing[\Delta]{A'}{\Ty i} \\\\
   \tyequiv{A}{A'}{\Ty i} \\ \tyequiv[\Delta]{A}{A'}{\Ty i}}
  {\wfctxeq{\Gamma, x : A}{\Delta, x : A'}}
\end{rulepar}

\paragraph{Typing of terms}
The full set of rules for $\typing{M}{A}$ (the subsumption rule is in
\Cref{fig:subtyping}):
\begin{rulepar}
  \inferrule{\wfctx{\Gamma}}{\typing{\Ty i}{\Ty{(1+i)}}}

  \inferrule{\wfctx{\Gamma}}{\typing{\Nat}{\Ty 0}}

  \inferrule{\wfctx{\Gamma}}{\typing{\synze}{\Nat}}

  \inferrule{\typing{M}{\Nat}}{\typing{\synsu M}{\Nat}}

  \inferrule
  {\wfctx{\Gamma} \\ x_d : A \in \Gamma}
  {\typing{x_d}{A}}

  \inferrule
  {\typing{A}{\Ty i} \\ \typing[\Gamma, x : A]{B}{\Ty i}}
  {\typing{\usynPi{x}{A}{B}}{\Ty i}}

  \inferrule
  {\typing{A}{\Ty i} \\ \typing[\Gamma, x : A]{M}{B}}
  {\typing{\usynlam{x}{A}{M}}{\usynPi{x}{A}{B}}}

  \inferrule
  {\typing{A}{\Ty i} \\ \typing[\Gamma, x : A]{B}{\Ty i} \\\\
   \typing{M}{\usynPi{x}{A}{B}} \\ \typing{N}{A}}
  {\typing{M~N}{B[\synid, N/x]}}

  \inferrule
  {\typing[\Gamma, x : \Nat]{A}{\Ty i} \\
   \typing{M_{\mathsf z}}{A[\synid, \synze/x]} \\\\
   \typing[\Gamma, y : \Nat, r : A]{M_{\mathsf s}}{A[\synwk\circ\synwk, \synsu y/x]} \\
   \typing{N}{\Nat}}
  {\typing{\recn{x.A}{M_{\mathsf z}}{y,r.M_{\mathsf s}}{N}}{A[\synid, N/x]}}

  \inferrule
  {\substyping{\sigma}{\Delta} \\ \typing[\Delta]{M}{A}}
  {\typing{M[\sigma]}{A[\sigma]}}
\end{rulepar}

\paragraph{Typing of substitutions}
For $\substyping{\sigma}{\Delta}$ (the subsumption rule
$\substyping{\sigma}{\Delta'}$ is in \Cref{fig:subtyping}):
\begin{rulepar}
  \inferrule{\wfctx{\Gamma}}{\substyping{\synid}{\Gamma}}

  \inferrule{\wfctx{\Gamma, x : A}}{\substyping[\Gamma, x : A]{\synwk}{\Gamma}}

  \inferrule
  {\substyping[\Gamma_1]{\delta}{\Gamma_2} \\ \substyping[\Gamma_2]{\sigma}{\Gamma_3}}
  {\substyping[\Gamma_1]{\sigma \circ \delta}{\Gamma_3}}

  \inferrule
  {\substyping{\sigma}{\Delta} \\ \typing[\Delta]{A}{\Ty i} \\ \typing{M}{A[\sigma]}}
  {\substyping{(\sigma, M/x)}{\Delta, x : A}}
\end{rulepar}

\paragraph{Term equivalence: substitution on the formers}
Each former commutes with substitution; lifts $\qsub{\cdot}$ appear under
binders.
\begin{rulepar}
  \inferrule{\substyping{\sigma}{\Delta}}{\tyequiv{\Ty i[\sigma]}{\Ty i}{\Ty{(1+i)}}}

  \inferrule{\substyping{\sigma}{\Delta}}{\tyequiv{\Nat[\sigma]}{\Nat}{\Ty 0}}

  \inferrule{\substyping{\sigma}{\Delta}}{\tyequiv{\synze[\sigma]}{\synze}{\Nat}}

  \inferrule{\substyping{\sigma}{\Delta} \\ \typing[\Delta]{M}{\Nat}}
  {\tyequiv{(\synsu M)[\sigma]}{\synsu{(M[\sigma])}}{\Nat}}

  \inferrule
  {\substyping{\sigma}{\Delta} \\ \typing[\Delta]{A}{\Ty i} \\ \typing[\Delta, x : A]{B}{\Ty i}}
  {\tyequiv{(\usynPi{x}{A}{B})[\sigma]}{\usynPi{x}{A[\sigma]}{B[\qsub{\sigma}]}}{\Ty i}}

  \inferrule
  {\substyping{\sigma}{\Delta} \\ \typing[\Delta]{A}{\Ty i} \\ \typing[\Delta, x : A]{M}{B}}
  {\tyequiv{(\usynlam{x}{A}{M})[\sigma]}{\usynlam{x}{A[\sigma]}{M[\qsub{\sigma}]}}{(\usynPi{x}{A}{B})[\sigma]}}

  \inferrule
  {\substyping{\sigma}{\Delta} \\ \typing[\Delta]{A}{\Ty i} \\ \typing[\Delta, x : A]{B}{\Ty i} \\\\
   \typing[\Delta]{M}{\usynPi{x}{A}{B}} \\ \typing[\Delta]{N}{A}}
  {\tyequiv{(M~N)[\sigma]}{(M[\sigma])~(N[\sigma])}{B[\sigma, N[\sigma]/x]}}

  \inferrule
  {\substyping{\sigma}{\Delta} \\
   \typing[\Delta, x : \Nat]{A}{\Ty i} \\
   \typing[\Delta]{M_{\mathsf z}}{A[\synid, \synze/x]} \\\\
   \typing[\Delta, y : \Nat, r : A]{M_{\mathsf s}}{A[\synwk\circ\synwk, \synsu y/x]} \\
   \typing[\Delta]{N}{\Nat}}
  {\tyequiv{(\recn{x.A}{M_{\mathsf z}}{y,r.M_{\mathsf s}}{N})[\sigma]}
           {\recn{x.A[\qsub{\sigma}]}{M_{\mathsf z}[\sigma]}{y,r.M_{\mathsf s}[\qsub{\qsub{\sigma}}]}{N[\sigma]}}
           {A[\sigma, N[\sigma]/x]}}
\end{rulepar}

\paragraph{Term equivalence: congruence}
\begin{rulepar}
  \inferrule{\tyequiv{M}{M'}{\Nat}}{\tyequiv{\synsu M}{\synsu{M'}}{\Nat}}

  \inferrule
  {\typing{A}{\Ty i} \\ \tyequiv{A}{A'}{\Ty i} \\ \tyequiv[\Gamma, x : A]{B}{B'}{\Ty i}}
  {\tyequiv{\usynPi{x}{A}{B}}{\usynPi{x}{A'}{B'}}{\Ty i}}

  \inferrule
  {\typing{A}{\Ty i} \\ \tyequiv{A}{A'}{\Ty i} \\ \tyequiv[\Gamma, x : A]{M}{M'}{B}}
  {\tyequiv{\usynlam{x}{A}{M}}{\usynlam{x}{A'}{M'}}{\usynPi{x}{A}{B}}}

  \inferrule
  {\typing{A}{\Ty i} \\ \typing[\Gamma, x : A]{B}{\Ty i} \\\\
   \tyequiv{M}{M'}{\usynPi{x}{A}{B}} \\ \tyequiv{N}{N'}{A}}
  {\tyequiv{M~N}{M'~N'}{B[\synid, N/x]}}

  \inferrule
  {\typing[\Gamma, x : \Nat]{A}{\Ty i} \\ \tyequiv[\Gamma, x : \Nat]{A}{A'}{\Ty i} \\
   \tyequiv{M_{\mathsf z}}{M_{\mathsf z}'}{A[\synid, \synze/x]} \\\\
   \tyequiv[\Gamma, y : \Nat, r : A]{M_{\mathsf s}}{M_{\mathsf s}'}{A[\synwk\circ\synwk, \synsu y/x]} \\
   \tyequiv{N}{N'}{\Nat}}
  {\tyequiv{\recn{x.A}{M_{\mathsf z}}{y,r.M_{\mathsf s}}{N}}
           {\recn{x.A'}{M_{\mathsf z}'}{y,r.M_{\mathsf s}'}{N'}}{A[\synid, N/x]}}
\end{rulepar}

\paragraph{Term equivalence: $\beta$ and $\eta$}
\begin{rulepar}
  \inferrule
  {\typing[\Gamma, x : \Nat]{A}{\Ty i} \\
   \typing{M_{\mathsf z}}{A[\synid, \synze/x]} \\
   \typing[\Gamma, y : \Nat, r : A]{M_{\mathsf s}}{A[\synwk\circ\synwk, \synsu y/x]}}
  {\tyequiv{\recn{x.A}{M_{\mathsf z}}{y,r.M_{\mathsf s}}{\synze}}{M_{\mathsf z}}{A[\synid, \synze/x]}}

  \inferrule
  {\typing[\Gamma, x : \Nat]{A}{\Ty i} \\
   \typing{M_{\mathsf z}}{A[\synid, \synze/x]} \\\\
   \typing[\Gamma, y : \Nat, r : A]{M_{\mathsf s}}{A[\synwk\circ\synwk, \synsu y/x]} \\
   \typing{M}{\Nat}}
  {\tyequiv{\recn{x.A}{M_{\mathsf z}}{y,r.M_{\mathsf s}}{\synsu M}}
           {M_{\mathsf s}[\synid, M/y, (\recn{x.A}{M_{\mathsf z}}{y,r.M_{\mathsf s}}{M})/r]}
           {A[\synid, \synsu M/x]}}

  \inferrule
  {\typing{A}{\Ty i} \\ \typing[\Gamma, x : A]{B}{\Ty i} \\ \typing[\Gamma, x : A]{M}{B} \\ \typing{N}{A}}
  {\tyequiv{(\usynlam{x}{A}{M})~N}{M[\synid, N/x]}{B[\synid, N/x]}}

  \inferrule
  {\typing{A}{\Ty i} \\ \typing[\Gamma, x : A]{B}{\Ty i} \\ \typing{M}{\usynPi{x}{A}{B}}}
  {\tyequiv{M}{\usynlam{x}{A}{(M[\synwk])~x_0}}{\usynPi{x}{A}{B}}}
\end{rulepar}

\paragraph{Term equivalence: variables, substitution laws, conversion}
\begin{rulepar}
  \inferrule{\wfctx{\Gamma} \\ x_d : A \in \Gamma}{\tyequiv{x_d}{x_d}{A}}

  \inferrule
  {\substyping{\sigma}{\Delta} \\ \typing[\Delta]{A}{\Ty i} \\ \typing{M}{A[\sigma]}}
  {\tyequiv{x_0[(\sigma, M/x)]}{M}{A[\sigma]}}

  \inferrule
  {\substyping{\sigma}{\Delta} \\ \typing[\Delta]{A}{\Ty i} \\ \typing{M}{A[\sigma]} \\ x_d : B \in \Delta}
  {\tyequiv{x_{1+d}[(\sigma, M/x)]}{x_d[\sigma]}{B[\sigma]}}

  \inferrule
  {\wfctx{\Gamma, y : B} \\ x_d : A \in \Gamma}
  {\tyequiv[\Gamma, y : B]{x_d[\synwk]}{x_{1+d}}{A[\synwk]}}

  \inferrule
  {\tyequiv[\Delta]{M}{M'}{A} \\ \subeq{\sigma}{\sigma'}{\Delta}}
  {\tyequiv{M[\sigma]}{M'[\sigma']}{A[\sigma]}}

  \inferrule{\typing{M}{A}}{\tyequiv{M[\synid]}{M}{A}}

  \inferrule
  {\substyping[\Gamma]{\tau}{\Gamma'} \\ \substyping[\Gamma']{\sigma}{\Gamma''} \\ \typing[\Gamma'']{M}{A}}
  {\tyequiv{M[\sigma \circ \tau]}{M[\sigma][\tau]}{A[\sigma \circ \tau]}}

  \inferrule
  {\tyequiv{M}{M'}{A} \\ \phlmath{\typing{A'}{\Ty i}} \\ \subtyp{A}{A'}}
  {\tyequiv{M}{M'}{A'}}

  \inferrule{\tyequiv{M}{M'}{A}}{\tyequiv{M'}{M}{A}}

  \inferrule{\tyequiv{M}{M'}{A} \\ \tyequiv{M'}{M''}{A}}{\tyequiv{M}{M''}{A}}
\end{rulepar}

\paragraph{Substitution equivalence}
The category laws, the $\eta$-law for substitutions, congruence, and the
subsumption analogue:
\begin{rulepar}
  \inferrule{\wfctx{\Gamma}}{\subeq{\synid}{\synid}{\Gamma}}

  \inferrule{\wfctx{\Gamma, x : A}}{\subeq[\Gamma, x : A]{\synwk}{\synwk}{\Gamma}}

  \inferrule
  {\subeq[\Gamma]{\tau}{\tau'}{\Gamma'} \\ \subeq[\Gamma']{\sigma}{\sigma'}{\Gamma''}}
  {\subeq{\sigma \circ \tau}{\sigma' \circ \tau'}{\Gamma''}}

  \inferrule
  {\subeq{\sigma}{\sigma'}{\Delta} \\ \typing[\Delta]{A}{\Ty i} \\ \tyequiv{M}{M'}{A[\sigma]}}
  {\subeq{(\sigma, M/x)}{(\sigma', M'/x)}{\Delta, x : A}}

  \inferrule{\substyping{\sigma}{\Delta}}{\subeq{\synid \circ \sigma}{\sigma}{\Delta}}

  \inferrule{\substyping{\sigma}{\Delta}}{\subeq{\sigma \circ \synid}{\sigma}{\Delta}}

  \inferrule
  {\substyping[\Gamma']{\sigma}{\Gamma} \\ \substyping[\Gamma'']{\sigma'}{\Gamma'} \\ \substyping[\Gamma''']{\sigma''}{\Gamma''}}
  {\subeq[\Gamma''']{(\sigma \circ \sigma') \circ \sigma''}{\sigma \circ (\sigma' \circ \sigma'')}{\Gamma}}

  \inferrule
  {\substyping[\Gamma']{\sigma}{\Gamma''} \\ \typing[\Gamma'']{A}{\Ty i} \\ \typing[\Gamma']{M}{A[\sigma]} \\ \substyping[\Gamma]{\tau}{\Gamma'}}
  {\subeq{(\sigma, M/x) \circ \tau}{(\sigma \circ \tau, M[\tau]/x)}{\Gamma'', x : A}}

  \inferrule
  {\substyping[\Gamma']{\sigma}{\Gamma} \\ \typing{A}{\Ty i} \\ \typing[\Gamma']{M}{A[\sigma]}}
  {\subeq[\Gamma']{\synwk \circ (\sigma, M/x)}{\sigma}{\Gamma}}

  \inferrule
  {\substyping[\Gamma']{\sigma}{\Gamma, x : A}}
  {\subeq[\Gamma']{\sigma}{(\synwk \circ \sigma, x_0[\sigma]/x)}{\Gamma, x : A}}

  \inferrule{\subeq{\sigma}{\sigma'}{\Delta}}{\subeq{\sigma'}{\sigma}{\Delta}}

  \inferrule{\subeq{\sigma}{\sigma'}{\Delta} \\ \subeq{\sigma'}{\sigma''}{\Delta}}{\subeq{\sigma}{\sigma''}{\Delta}}

  \inferrule
  {\subeq{\sigma}{\sigma'}{\Delta} \\ \phlmath{\wfctx{\Delta'}} \\ \wfctxsub{\Delta}{\Delta'}}
  {\subeq{\sigma}{\sigma'}{\Delta'}}
\end{rulepar}

\section{Normalization by Evaluation: Full Operational Semantics}\label{app:nbe}

\Cref{fig:eval} in \Cref{sec:nbe-per:nbe}
show only a selection. The complete relations follow; the domain grammar and
notation ($\dneut{a}{v}$, $\dreif{a}{m}$, $\denvext{\rho}{n}$,
$\denvdrop{\rho}$, $\dvar z$, closures) are as in \Cref{sec:nbe-per:nbe}.

\paragraph{Evaluation, domain application, and recursion}
$\evald{M}{\rho}{m}$ for expressions, $\dapp{m}{n} \shortsearrow r$ for
domain application, $\drecd{m}{x.A}{M_{\mathsf z}}{y,r.M_{\mathsf s}}{\rho}{r'}$
for domain recursion, and $\evalsd{\sigma}{\rho}{\rho'}$ for substitutions:
\begin{rulepar}
  \inferrule{~}{\evald{\Ty i}{\rho}{\dU i}}

  \inferrule{~}{\evald{x_d}{\rho}{\rho(d)}}

  \inferrule{~}{\evald{\Nat}{\rho}{\dN}}

  \inferrule{~}{\evald{\synze}{\rho}{\dze}}

  \inferrule{\evald{M}{\rho}{m}}{\evald{\synsu M}{\rho}{\dsu m}}

  \inferrule{\evald{A}{\rho}{a}}{\evald{\usynPi{x}{A}{B}}{\rho}{\dpi{a}{B}{\rho}}}

  \inferrule{~}{\evald{\usynlam{x}{A}{M}}{\rho}{\dlam{M}{\rho}}}

  \inferrule
  {\evald{M}{\rho}{m} \\ \evald{N}{\rho}{n} \\ \dapp{m}{n} \shortsearrow r}
  {\evald{M~N}{\rho}{r}}

  \inferrule
  {\evald{N}{\rho}{m} \\ \drecd{m}{x.A}{M_{\mathsf z}}{y,r.M_{\mathsf s}}{\rho}{r'}}
  {\evald{\recn{x.A}{M_{\mathsf z}}{y,r.M_{\mathsf s}}{N}}{\rho}{r'}}

  \inferrule
  {\evalsd{\sigma}{\rho}{\rho'} \\ \evald{M}{\rho'}{m}}
  {\evald{M[\sigma]}{\rho}{m}}
\end{rulepar}
\begin{rulepar}
  \inferrule
  {\evald{M}{\denvext{\rho}{n}}{m}}
  {\dapp{\dlam{M}{\rho}}{n} \shortsearrow m}

  \inferrule
  {\evald{B}{\denvext{\rho}{n}}{b}}
  {\dapp{\dneut{\dpi{a}{B}{\rho}}{v}}{n} \shortsearrow \dneut{b}{v~\dreif{a}{n}}}
\end{rulepar}
\begin{rulepar}
  \inferrule
  {\evald{M_{\mathsf z}}{\rho}{m_{\mathsf z}}}
  {\drecd{\dze}{x.A}{M_{\mathsf z}}{y,r.M_{\mathsf s}}{\rho}{m_{\mathsf z}}}

  \inferrule
  {\drecd{b}{x.A}{M_{\mathsf z}}{y,r.M_{\mathsf s}}{\rho}{r'} \\
   \evald{M_{\mathsf s}}{\denvext{\denvext{\rho}{b}}{r'}}{m_{\mathsf s}}}
  {\drecd{\dsu b}{x.A}{M_{\mathsf z}}{y,r.M_{\mathsf s}}{\rho}{m_{\mathsf s}}}

  \inferrule
  {\evald{M_{\mathsf z}}{\rho}{m_{\mathsf z}} \\ \evald{A}{\denvext{\rho}{\dneut{b}{v}}}{a}}
  {\drecd{\dneut{b}{v}}{x.A}{M_{\mathsf z}}{y,r.M_{\mathsf s}}{\rho}
         {\dneut{a}{\dnatrec{A}{m_{\mathsf z}}{M_{\mathsf s}}{\rho}{v}}}}
\end{rulepar}
\begin{rulepar}
  \inferrule{~}{\evalsd{\synid}{\rho}{\rho}}

  \inferrule{~}{\evalsd{\synwk}{\rho}{\denvdrop{\rho}}}

  \inferrule
  {\evalsd{\sigma}{\rho}{\rho'} \\ \evald{M}{\rho}{m}}
  {\evalsd{(\sigma, M/x)}{\rho}{\denvext{\rho'}{m}}}

  \inferrule
  {\evalsd{\tau}{\rho}{\rho_\tau} \\ \evalsd{\sigma}{\rho_\tau}{\rho'}}
  {\evalsd{\sigma \circ \tau}{\rho}{\rho'}}
\end{rulepar}

\paragraph{Readback}
The three mutually defined relations $\rnf{z}{w}{W}$ (normal values),
$\rne{z}{v}{U}$ (neutral values), and $\rtyd{z}{a}{V}$ (types); $z$ counts the
enclosing binders. We write $\dvar{z}$ for the fresh variable at level $z$
and $\dneut{a}{\dvar z}$ for its reflection at type tag $a$. The
function/$\Pi$ cases drive $\eta$-expansion.
\begin{rulepar}
  \inferrule{\rtyd{z}{a}{A}}{\rnf{z}{\dreif{\dU i}{a}}{A}}

  \inferrule{~}{\rnf{z}{\dreif{\dN}{\dze}}{\synze}}

  \inferrule{\rnf{z}{\dreif{\dN}{m}}{W}}{\rnf{z}{\dreif{\dN}{\dsu m}}{\synsu W}}

  \inferrule{\rne{z}{v}{U}}{\rnf{z}{\dreif{\dN}{\dneut{a}{v}}}{U}}

  \inferrule
  {\rtyd{z}{a}{W_1} \\
   \dapp{m}{\dneut{a}{\dvar z}} \shortsearrow m' \\
   \evald{B}{\denvext{\rho}{\dneut{a}{\dvar z}}}{b} \\
   \rnf{(1+z)}{\dreif{b}{m'}}{W_2}}
  {\rnf{z}{\dreif{\dpi{a}{B}{\rho}}{m}}{\usynlam{x}{W_1}{W_2}}}

  \inferrule{\rne{z}{v}{U}}{\rnf{z}{\dreif{\dneut{a}{b}}{\dneut{c}{v}}}{U}}
\end{rulepar}
\begin{rulepar}
  \inferrule{~}{\rne{z}{\dvar{x}}{x_{(z - x - 1)}}}

  \inferrule{\rne{z}{v}{U} \\ \rnf{z}{w}{W}}{\rne{z}{v~w}{U~W}}

  \inferrule
  {\evald{A}{\denvext{\rho}{\dneut{\dN}{\dvar z}}}{a} \\
   \rtyd{(1+z)}{a}{A'} \\
   \evald{A}{\denvext{\rho}{\dze}}{a_{\mathsf z}} \\
   \rnf{z}{\dreif{a_{\mathsf z}}{m_{\mathsf z}}}{W_{\mathsf z}} \\\\
   \evald{A}{\denvext{\rho}{\dsu{\dneut{\dN}{\dvar z}}}}{a_{\mathsf s}} \\
   \evald{M_{\mathsf s}}{\denvext{\denvext{\rho}{\dneut{\dN}{\dvar z}}}{\dneut{a}{\dvar{(1+z)}}}}{m_{\mathsf s}} \\
   \rnf{(2+z)}{\dreif{a_{\mathsf s}}{m_{\mathsf s}}}{W_{\mathsf s}} \\\\
   \rne{z}{v}{U}}
  {\rne{z}{\dnatrec{A}{m_{\mathsf z}}{M_{\mathsf s}}{\rho}{v}}
       {\recn{x.A'}{W_{\mathsf z}}{y,r.W_{\mathsf s}}{U}}}
\end{rulepar}
\begin{rulepar}
  \inferrule{~}{\rtyd{z}{\dU i}{\Ty i}}

  \inferrule{~}{\rtyd{z}{\dN}{\Nat}}

  \inferrule
  {\rtyd{z}{a}{W_1} \\
   \evald{B}{\denvext{\rho}{\dneut{a}{\dvar z}}}{b} \\
   \rtyd{(1+z)}{b}{W_2}}
  {\rtyd{z}{\dpi{a}{B}{\rho}}{\usynPi{x}{W_1}{W_2}}}

  \inferrule{\rne{z}{v}{U}}{\rtyd{z}{\dneut{a}{v}}{U}}
\end{rulepar}

\section{Context-Level Semantic Models}\label{app:semantic}

The universe PER model $\perm{i}{a}{a'}{R}$ and semantic subtyping
$\semsubtyp{a}{a'}{i}$ are in \Cref{fig:per-model,fig:per-subtyp}; the
completeness semantic judgments in \Cref{sec:nbe-per:completeness}. We give
the auxiliary base PERs and the context-level relations, whose extension
rules the body only sketched, together with the standard $\Nat$ rules omitted
from \Cref{fig:per-model,fig:per-subtyp}:
\begin{rulepar}
  \inferrule{R \releq \Natd}{\perm{i}{\dN}{\dN}{R}}

  \inferrule{~}{\semsubtyp{\dN}{\dN}{i}}
\end{rulepar}

\paragraph{Base and inductive PERs}
$\inper{w}{w'}{\Nfd}$, $\inper{v}{v'}{\Ned}$, $\inper{a}{a'}{\Tyd}$ are in
\Cref{sec:nbe-per:per} (relate iff they read back to a common form at every
depth). On top of them, $\Natd$ and $\Neud$; both ignore type tags:
\begin{rulepar}
  \inferrule{~}{\inper{\dze}{\dze}{\Natd}}

  \inferrule{\inper{m}{n}{\Natd}}{\inper{\dsu m}{\dsu n}{\Natd}}

  \inferrule{\inper{v}{v'}{\Ned}}{\inper{\dneut{a}{v}}{\dneut{a'}{v'}}{\Natd}}

  \inferrule{\inper{v}{v'}{\Ned}}{\inper{\dneut{a}{v}}{\dneut{a'}{v'}}{\Neud}}
\end{rulepar}

\paragraph{Context PER and environments}
$\perctxd{\Gamma}{\Gamma'}{R}$ relates contexts and returns a PER $R$ on
environments. The empty context relates all environments; extension
quantifies the head type over environments related at the tail and pairs
environments by their head and (dropped) tail.
\begin{rulepar}
  \inferrule
  {R \releq \{(\rho, \rho') \mid \top\}}
  {\perctxd{\cdot}{\cdot}{R}}

  \inferrule
  {\perctxd{\Gamma}{\Gamma'}{R_T} \\
   \phlmath{R_T \text{ is a PER}} \\\\
   \forall \D \uname \inper{\rho}{\rho'}{R_T}.\
     \perm{i}{\evalf{A}{\rho}}{\evalf{A'}{\rho'}}{R_H(\D)} \\\\
   R \releq \{(\rho, \rho') \mid \exists\, \D \uname \inper{\denvdrop{\rho}}{\denvdrop{\rho'}}{R_T}.\
     \inper{\rho(0)}{\rho'(0)}{R_H(\D)}\}}
  {\perctxd{\Gamma, x : A}{\Gamma', x : A'}{R}}
\end{rulepar}
Here $R_H$ is a family indexed by a proof $\D$ of tail-relatedness, exactly as
the output relation $R_O$ in the $\Pi$ rule of \Cref{fig:per-model}.

\paragraph{Semantic context subtyping}
$\perctxsub{\Gamma}{\Gamma'}$ generalizes $\semsubtyp{}{}{}$ pointwise; both
extended contexts must be self-related, and the head types are compared
covariantly over environments related at the smaller tail:
\begin{rulepar}
  \inferrule{~}{\perctxsub{\cdot}{\cdot}}

  \inferrule
  {\perctxsub{\Gamma}{\Gamma'} \\
   \perctxd{\Gamma}{\Gamma}{R_T} \\\\
   \forall \inper{\rho}{\rho'}{R_T}.\
     \evald{A}{\rho}{a} \to \evald{A'}{\rho'}{a'} \to \semsubtyp{a}{a'}{i} \\\\
   \perctxd{\Gamma, x : A}{\Gamma, x : A}{R_1} \\\\
   \perctxd{\Gamma', x : A'}{\Gamma', x : A'}{R_2}}
  {\perctxsub{\Gamma, x : A}{\Gamma', x : A'}}
\end{rulepar}

\section{The Gluing (Logical-Relations) Model in Full}\label{app:gluing}

\Cref{sec:nbe-per:soundness} describes the Kripke gluing model only in prose.
We give its definitions. The model is inherited from \mcltt{} unchanged.
Contravariance surfaces not in these
definitions but in the metatheory built on them. We
write $\glumodels{i}{P}{El}{a}$ for ``the domain type $a$ glues, at level
$i$, to type-predicate $P$ and term-predicate $El$''; $\gluty{A}[i]{P}$ for
the glued type and $\glutm{A}{M}{m}[i]{El}$ for the glued term; and
$\sndtyping{M}{A}$, $\sndctx{\Gamma}$ for the soundness judgments. Predicate
equivalence ($\releq$, pointwise iff) ties each constructor's output
predicates to the canonical ones.

\paragraph{Weakening}
$\gluwk{\sigma}{\Delta}$ holds when $\sigma$ is definitionally a composite of
$\synwk$'s; its second rule carries the context-subtyping premise that
removes the need for a separate gluing judgment for subtyping:
\begin{rulepar}
  \inferrule
  {\subeq{\sigma}{\synid}{\Delta}}
  {\gluwk{\sigma}{\Delta}}

  \inferrule
  {\gluwk{\tau}{\Delta', x : A} \\
   \wfctxsub{\Delta'}{\Delta} \\
   \subeq{\sigma}{\synwk \circ \tau}{\Delta}}
  {\gluwk{\sigma}{\Delta}}
\end{rulepar}

\paragraph{The gluing relation}
$\glumodels{i}{P}{El}{a}$ is defined by the same case analysis on $a$ as the
PER model. (As there, universe levels $j < i$ recur through a well-founded
knot: at $\dU j$ the predicates are those of a glued type/term one level
down.)
\begin{rulepar}
  \inferrule
  {P \releq \{(\Gamma, A) \mid \tyequiv{A}{\Ty j}{\Ty i}\} \\
   j < i \\\\
   El \releq \{(\Gamma, A, M, m) \mid \typing{M}{A},\ \tyequiv{A}{\Ty j}{\Ty i},\ \gluty{M}[j]{P_m}\}}
  {\glumodels{i}{P}{El}{\dU j}}

  \inferrule
  {P \releq \{(\Gamma, A) \mid \tyequiv{A}{\Nat}{\Ty i}\} \\\\
   El \releq \{(\Gamma, A, M, m) \mid \gluty{A}[i]{P},\ \glunat[\Gamma]{M}{m}\}}
  {\glumodels{i}{P}{El}{\dN}}

  \inferrule
  {\inper{b}{b}{\Ned} \\
   P \releq P_{\dneut{a}{b}} \\
   El \releq El_{\dneut{a}{b}}}
  {\glumodels{i}{P}{El}{\dneut{a}{b}}}

  \inferrule
  {\glumodels{i}{P_I}{El_I}{a} \\
   \perm{i}{a}{a}{R_I} \\
   \perm{i}{\dpi{a}{B}{\rho}}{\dpi{a}{B}{\rho}}{R} \\\\
   \forall \inper{c}{c}{R_I}.\ \glumodels{i}{P_O(c)}{El_O(c)}{\evalf{B}{\denvext{\rho}{c}}} \\\\
   P \releq P_\Pi \\ El \releq El_\Pi}
  {\glumodels{i}{P}{El}{\dpi{a}{B}{\rho}}}
\end{rulepar}
In the universe rule, $P_m$ is the glued-type predicate determined by the
value $m$ at level $j$ (the knot tied by well-founded recursion on $i$). The
$\Pi$ predicates $P_\Pi$ and $El_\Pi$ are the Kripke clauses below; note that
$El_I$ occurs in a \emph{negative} position in both (a glued argument is
\emph{assumed}), which is the source of the cumulativity ``lowering'' lemma
(\Cref{sec:nbe-per:soundness}).

\paragraph{Glued types and terms by shape}
Unfolding $\gluty{A}[i]{P}$ ($\coloneqq P\,(\Gamma, A)$) and
$\glutm{A}{M}{m}[i]{El}$ ($\coloneqq El\,(\Gamma, A, M, m)$):
\begin{itemize}
\item \emph{Neutral} $a = \dneut{a_0}{b}$: $\gluty{A}[i]{P}$ holds iff
  $\typing{A}{\Ty i}$ and every readback of $b$ is, under every weakening
  $\gluwk{\sigma}{\Delta}$, definitionally equal to $A[\sigma]$; and
  $\glutm{A}{M}{m}[i]{El}$ additionally requires $\typing{M}{A}$,
  $m \approx m \in \Bot$, and that every readback of $m$ is, under every
  weakening, $\approx M[\sigma]$ at $A[\sigma]$.
\item \emph{$\Pi$} $a = \dpi{a_0}{B}{\rho}$ (with $A \approx \usynPi{x}{I}{O}$):
  $\gluty{A}[i]{P_\Pi}$ requires, for every $\gluwk{\sigma}{\Delta}$, that
  $\gluty[\Delta]{I[\sigma]}[i]{P_I}$, and for every glued argument
  $\glutm[\Delta]{I[\sigma]}{N}{n}[i]{El_I}$ that
  $\gluty[\Delta]{O[\qsub{\sigma}, N/x]}[i]{P_O(n)}$. The glued term
  $\glutm{A}{M}{m}[i]{El_\Pi}$ additionally requires that for every such glued
  argument there is $\dapp{m}{n} \shortsearrow r$ with
  $\glutm[\Delta]{O[\qsub{\sigma}, N/x]}{(M[\sigma])~N}{r}[i]{El_O(n)}$.
\end{itemize}
$\glunat[\Gamma]{M}{m}$ glues a $\Nat$-term to its value: $m = \dze$ with
$\tyequiv{M}{\synze}{\Nat}$; $m = \dsu{m'}$ with
$\tyequiv{M}{\synsu{M'}}{\Nat}$ and $\glunat[\Gamma]{M'}{m'}$; or $m$ neutral
with the same readback condition as the neutral term predicate.

\paragraph{Glued contexts and soundness judgments}
Contexts glue to predicates $\mathrm{Sb}$ on substitution--environment pairs,
written $\sndsub{\sigma}{\rho}{\mathrm{Sb}}$. The empty context glues
$\sigma$ to $\rho$ iff $\substyping{\sigma}{\cdot}$; extension pairs a glued
head variable with a glued tail:
\begin{rulepar}
  \inferrule
  {\mathrm{Sb} \releq \{(\Delta, \sigma, \rho) \mid \substyping[\Delta]{\sigma}{\cdot}\}}
  {\gluctx{\cdot}{\mathrm{Sb}}}

  \inferrule
  {\gluctx{\Gamma}{\mathrm{Sb}_T} \\
   \typing{A}{\Ty i} \\
   \forall \sndsub[\Delta]{\sigma}{\rho}{\mathrm{Sb}_T}.\
     \evald{A}{\rho}{a},\ \glumodels{i}{P}{El}{a},\ \gluty[\Delta]{A[\synwk \circ \sigma]}[i]{P} \\\\
   \mathrm{Sb} \releq \mathrm{Sb}_{\Gamma, A}}
  {\gluctx{\Gamma, x : A}{\mathrm{Sb}}}
\end{rulepar}
where, in the extension predicate $\mathrm{Sb}_{\Gamma, A}$, a pair
$(\Delta, \sigma, \rho)$ is related iff $\substyping[\Delta]{\sigma}{\Gamma, x : A}$,
the head $\glutm[\Delta]{A[\synwk \circ \sigma]}{x_0[\sigma]}{\rho(0)}[i]{El}$,
and the tail $\sndsub[\Delta]{\synwk \circ \sigma}{\denvdrop{\rho}}{\mathrm{Sb}_T}$.
The soundness judgments quantify over glued pairs:
\[
  \begin{array}{r@{\ }l}
    \sndctx{\Gamma} \coloneqq{} & \exists\, \mathrm{Sb}.\ \gluctx{\Gamma}{\mathrm{Sb}}; \\[2pt]
    \sndtyping{M}{A} \coloneqq{} & \exists\, \mathrm{Sb}, i.\ \gluctx{\Gamma}{\mathrm{Sb}}
      \text{ and } \forall \sndsub[\Delta]{\sigma}{\rho}{\mathrm{Sb}}.\\
    & \quad \evald{A}{\rho}{a},\ \evald{M}{\rho}{m},\ \glumodels{i}{P}{El}{a},\
      \glutm[\Delta]{A[\sigma]}{M[\sigma]}{m}[i]{El}.
  \end{array}
\]
The substitution soundness judgment $\sndsubtyping{\tau}{\Gamma'}$ is
analogous, gluing $\tau \circ \sigma$ to $\evalf{\tau}{\rho}$ at
$\mathrm{Sb}'$ for $\Gamma'$.

\section{Full Algorithmic Rules}\label{app:algo}

Two judgments of the algorithmic layer are shown only in part in the body:
\Cref{fig:bidir} elides the $\Nat$ cases of inference, and \Cref{fig:srf}
trims the congruences of annotation equality. We complete just those here;
\Cref{fig:dom-relations} already shows both domain-value relations, its only
omissions being the plain-congruence cases its caption notes.

\paragraph{The $\Nat$ cases of inference}
The rules of the $\ainfer{M}{W}$ judgment elided from \Cref{fig:bidir}:
\begin{rulepar}
  \inferrule{~}{\ainfer{\Nat}{\Ty 0}}

  \inferrule{~}{\ainfer{\synze}{\Nat}}

  \inferrule{\acheck{M}{\Nat}}{\ainfer{\synsu M}{\Nat}}

  \inferrule
  {\ainfer[\Gamma, x : \Nat]{A}{\Ty i} \\
   \acheck{M_{\mathsf z}}{A[\synid, \synze/x]} \\\\
   \acheck[\Gamma, y : \Nat, r : A]{M_{\mathsf s}}{A[\synwk\circ\synwk, \synsu y/x]} \\
   \acheck{N}{\Nat} \\
   \nbetyd{\Gamma}{A[\synid, N/x]}{W}}
  {\ainfer{\recn{x.A}{M_{\mathsf z}}{y,r.M_{\mathsf s}}{N}}{W}}
\end{rulepar}

\paragraph{The remaining congruences of annotation equality}
The congruence rules of $\nfaeq{W}{W'}$ omitted from \Cref{fig:srf} (which
shows the universe, $\Pi$, and $\lambda$ rules); all are plain congruences:
\begin{rulepar}
  \inferrule{~}{\nfaeq{\Nat}{\Nat}}

  \inferrule{~}{\nfaeq{\synze}{\synze}}

  \inferrule{\nfaeq{W}{W'}}{\nfaeq{\synsu W}{\synsu{W'}}}

  \inferrule{~}{\nfaeq{x_d}{x_d}}

  \inferrule{\nfaeq{U}{U'} \\ \nfaeq{W}{W'}}{\nfaeq{U~W}{U'~W'}}

  \inferrule{\nfaeq{W}{W'} \\ \nfaeq{W_{\mathsf z}}{W_{\mathsf z}'} \\
    \nfaeq{W_{\mathsf s}}{W_{\mathsf s}'} \\ \nfaeq{U}{U'}}
  {\nfaeq{\recn{x.W}{W_{\mathsf z}}{y,r.W_{\mathsf s}}{U}}
         {\recn{x.W'}{W_{\mathsf z}'}{y,r.W_{\mathsf s}'}{U'}}}
\end{rulepar}

%% file: main.bbl

\begin{thebibliography}{45}


\ifx \showCODEN    \undefined \def \showCODEN     #1{\unskip}     \fi
\ifx \showISBNx    \undefined \def \showISBNx     #1{\unskip}     \fi
\ifx \showISBNxiii \undefined \def \showISBNxiii  #1{\unskip}     \fi
\ifx \showISSN     \undefined \def \showISSN      #1{\unskip}     \fi
\ifx \showLCCN     \undefined \def \showLCCN      #1{\unskip}     \fi
\ifx \shownote     \undefined \def \shownote      #1{#1}          \fi
\ifx \showarticletitle \undefined \def \showarticletitle #1{#1}   \fi
\ifx \showURL      \undefined \def \showURL       {\relax}        \fi
\providecommand\bibfield[2]{#2}
\providecommand\bibinfo[2]{#2}
\providecommand\natexlab[1]{#1}
\providecommand\showeprint[2][]{arXiv:#2}

\bibitem[Abadi et~al\mbox{.}(1991)]%
        {DBLP:journals/jfp/AbadiCCL91}
\bibfield{author}{\bibinfo{person}{Mart{\'{\i}}n Abadi}, \bibinfo{person}{Luca
  Cardelli}, \bibinfo{person}{Pierre{-}Louis Curien}, {and}
  \bibinfo{person}{Jean{-}Jacques L{\'{e}}vy}.}
  \bibinfo{year}{1991}\natexlab{}.
\newblock \showarticletitle{Explicit Substitutions}.
\newblock \bibinfo{journal}{\emph{J. Funct. Program.}} \bibinfo{volume}{1},
  \bibinfo{number}{4} (\bibinfo{year}{1991}), \bibinfo{pages}{375--416}.
\newblock
\href{https://doi.org/10.1017/S0956796800000186}{doi:\nolinkurl{10.1017/S0956796800000186}}


\bibitem[Abel(2013)]%
        {abel_normalization_2013}
\bibfield{author}{\bibinfo{person}{Andreas Abel}.}
  \bibinfo{year}{2013}\natexlab{}.
\newblock \emph{\bibinfo{title}{Normalization by Evaluation: Dependent Types
  and Impredicativity}}.
\newblock Habilitation Thesis.
  \bibinfo{school}{Ludwig-Maximilians-Universit{\"a}t M{\"u}nchen},
  \bibinfo{address}{Munich, Germany}.
\newblock
\urldef\tempurl%
\url{https://www.cse.chalmers.se/~abela/habil.pdf}
\showURL{%
\tempurl}


\bibitem[Abel et~al\mbox{.}(2018)]%
        {abel_decidability_2017}
\bibfield{author}{\bibinfo{person}{Andreas Abel}, \bibinfo{person}{Joakim
  {\"{O}}hman}, {and} \bibinfo{person}{Andrea Vezzosi}.}
  \bibinfo{year}{2018}\natexlab{}.
\newblock \showarticletitle{Decidability of Conversion for Type Theory in Type
  Theory}.
\newblock \bibinfo{journal}{\emph{Proc. {ACM} Program. Lang.}}
  \bibinfo{volume}{2}, \bibinfo{number}{{POPL}} (\bibinfo{year}{2018}),
  \bibinfo{pages}{23:1--23:29}.
\newblock
\href{https://doi.org/10.1145/3158111}{doi:\nolinkurl{10.1145/3158111}}


\bibitem[Abel et~al\mbox{.}(2017)]%
        {abel_normalization_2017}
\bibfield{author}{\bibinfo{person}{Andreas Abel}, \bibinfo{person}{Andrea
  Vezzosi}, {and} \bibinfo{person}{Th{\'{e}}o Winterhalter}.}
  \bibinfo{year}{2017}\natexlab{}.
\newblock \showarticletitle{Normalization by Evaluation for Sized Dependent
  Types}.
\newblock \bibinfo{journal}{\emph{Proc. {ACM} Program. Lang.}}
  \bibinfo{volume}{1}, \bibinfo{number}{{ICFP}} (\bibinfo{year}{2017}),
  \bibinfo{pages}{33:1--33:30}.
\newblock
\href{https://doi.org/10.1145/3110277}{doi:\nolinkurl{10.1145/3110277}}


\bibitem[Adjedj et~al\mbox{.}(2026)]%
        {adjedj_adaptt_2026}
\bibfield{author}{\bibinfo{person}{Arthur Adjedj}, \bibinfo{person}{Meven
  Lennon{-}Bertrand}, \bibinfo{person}{Thibaut Benjamin}, {and}
  \bibinfo{person}{Kenji Maillard}.} \bibinfo{year}{2026}\natexlab{}.
\newblock \showarticletitle{{AdapTT}: Functoriality for Dependent Type Casts}.
\newblock \bibinfo{journal}{\emph{Proc. {ACM} Program. Lang.}}
  \bibinfo{volume}{10}, \bibinfo{number}{{POPL}} (\bibinfo{year}{2026}),
  \bibinfo{pages}{628--658}.
\newblock
\href{https://doi.org/10.1145/3776664}{doi:\nolinkurl{10.1145/3776664}}


\bibitem[Aspinall and Compagnoni(2001)]%
        {aspinall_subtyping_2001}
\bibfield{author}{\bibinfo{person}{David Aspinall} {and}
  \bibinfo{person}{Adriana~B. Compagnoni}.} \bibinfo{year}{2001}\natexlab{}.
\newblock \showarticletitle{Subtyping dependent types}.
\newblock \bibinfo{journal}{\emph{Theor. Comput. Sci.}} \bibinfo{volume}{266},
  \bibinfo{number}{1-2} (\bibinfo{year}{2001}), \bibinfo{pages}{273--309}.
\newblock
\href{https://doi.org/10.1016/S0304-3975(00)00175-4}{doi:\nolinkurl{10.1016/S0304-3975(00)00175-4}}


\bibitem[Brady(2013)]%
        {DBLP:journals/jfp/Brady13}
\bibfield{author}{\bibinfo{person}{Edwin~C. Brady}.}
  \bibinfo{year}{2013}\natexlab{}.
\newblock \showarticletitle{Idris, a general-purpose dependently typed
  programming language: Design and implementation}.
\newblock \bibinfo{journal}{\emph{J. Funct. Program.}} \bibinfo{volume}{23},
  \bibinfo{number}{5} (\bibinfo{year}{2013}), \bibinfo{pages}{552--593}.
\newblock
\href{https://doi.org/10.1017/S095679681300018X}{doi:\nolinkurl{10.1017/S095679681300018X}}


\bibitem[Cardelli(1988)]%
        {cardelli_multiple_1988}
\bibfield{author}{\bibinfo{person}{Luca Cardelli}.}
  \bibinfo{year}{1988}\natexlab{}.
\newblock \showarticletitle{A Semantics of Multiple Inheritance}.
\newblock \bibinfo{journal}{\emph{Inf. Comput.}} \bibinfo{volume}{76},
  \bibinfo{number}{2/3} (\bibinfo{year}{1988}), \bibinfo{pages}{138--164}.
\newblock
\href{https://doi.org/10.1016/0890-5401(88)90007-7}{doi:\nolinkurl{10.1016/0890-5401(88)90007-7}}


\bibitem[Cardelli et~al\mbox{.}(1991)]%
        {DBLP:conf/tacs/CardelliMMS91}
\bibfield{author}{\bibinfo{person}{Luca Cardelli}, \bibinfo{person}{Simone
  Martini}, \bibinfo{person}{John~C. Mitchell}, {and} \bibinfo{person}{Andre
  Scedrov}.} \bibinfo{year}{1991}\natexlab{}.
\newblock \showarticletitle{An Extension of System {F} with Subtyping}. In
  \bibinfo{booktitle}{\emph{Theoretical Aspects of Computer Software,
  International Conference {TACS} '91, Sendai, Japan, September 24-27, 1991,
  Proceedings}} \emph{(\bibinfo{series}{Lecture Notes in Computer Science},
  Vol.~\bibinfo{volume}{526})}, \bibfield{editor}{\bibinfo{person}{Takayasu
  Ito} {and} \bibinfo{person}{Albert~R. Meyer}} (Eds.).
  \bibinfo{publisher}{Springer}, \bibinfo{pages}{750--770}.
\newblock
\href{https://doi.org/10.1007/3-540-54415-1\_73}{doi:\nolinkurl{10.1007/3-540-54415-1\_73}}


\bibitem[Carneiro(2024)]%
        {carneiro_lean4lean_2024}
\bibfield{author}{\bibinfo{person}{Mario Carneiro}.}
  \bibinfo{year}{2024}\natexlab{}.
\newblock \bibinfo{title}{{Lean4Lean}: Verifying a Typechecker for Lean, in
  Lean}.
\newblock
\showeprint[arxiv]{2403.14064}~[cs.LO]
\urldef\tempurl%
\url{https://arxiv.org/abs/2403.14064}
\showURL{%
\tempurl}


\bibitem[Castagna(1995)]%
        {castagna_contravariance_1995}
\bibfield{author}{\bibinfo{person}{Giuseppe Castagna}.}
  \bibinfo{year}{1995}\natexlab{}.
\newblock \showarticletitle{Covariance and Contravariance: Conflict without a
  Cause}.
\newblock \bibinfo{journal}{\emph{{ACM} Trans. Program. Lang. Syst.}}
  \bibinfo{volume}{17}, \bibinfo{number}{3} (\bibinfo{year}{1995}),
  \bibinfo{pages}{431--447}.
\newblock
\href{https://doi.org/10.1145/203095.203096}{doi:\nolinkurl{10.1145/203095.203096}}


\bibitem[Coquand(2013)]%
        {coquand_presheaf}
\bibfield{author}{\bibinfo{person}{Thierry Coquand}.}
  \bibinfo{year}{2013}\natexlab{}.
\newblock \bibinfo{title}{Presheaf Model of Type Theory}.
\newblock
\urldef\tempurl%
\url{https://www.cse.chalmers.se/~coquand/presheaf.pdf}
\showURL{%
\tempurl}
\newblock
\shownote{Unpublished note}.


\bibitem[Danielsson et~al\mbox{.}(2026)]%
        {danielsson_firstclass_2026}
\bibfield{author}{\bibinfo{person}{Nils~Anders Danielsson},
  \bibinfo{person}{Na{\"i}m~Camille Favier}, {and}
  \bibinfo{person}{Ond{\v{r}}ej Kub{\'a}nek}.} \bibinfo{year}{2026}\natexlab{}.
\newblock \showarticletitle{Normalisation for First-Class Universe Levels}.
\newblock \bibinfo{journal}{\emph{Proc. {ACM} Program. Lang.}}
  \bibinfo{volume}{10}, \bibinfo{number}{{POPL}} (\bibinfo{year}{2026}).
\newblock
\href{https://doi.org/10.1145/3776645}{doi:\nolinkurl{10.1145/3776645}}


\bibitem[de~Moura and Ullrich(2021)]%
        {DBLP:conf/cade/Moura021}
\bibfield{author}{\bibinfo{person}{Leonardo de Moura} {and}
  \bibinfo{person}{Sebastian Ullrich}.} \bibinfo{year}{2021}\natexlab{}.
\newblock \showarticletitle{The Lean 4 Theorem Prover and Programming
  Language}. In \bibinfo{booktitle}{\emph{Proceedings of the 28th International
  Conference on Automated Deduction, {CADE} 2021, Virtual Event, July 12-15,
  2021}} \emph{(\bibinfo{series}{Lecture Notes in Computer Science},
  Vol.~\bibinfo{volume}{12699})},
  \bibfield{editor}{\bibinfo{person}{Andr{\'{e}} Platzer} {and}
  \bibinfo{person}{Geoff Sutcliffe}} (Eds.). \bibinfo{publisher}{Springer},
  \bibinfo{pages}{625--635}.
\newblock
\href{https://doi.org/10.1007/978-3-030-79876-5\_37}{doi:\nolinkurl{10.1007/978-3-030-79876-5\_37}}


\bibitem[de~Moura et~al\mbox{.}(2015)]%
        {DBLP:conf/cade/MouraKADR15}
\bibfield{author}{\bibinfo{person}{Leonardo~Mendon{\c{c}}a de Moura},
  \bibinfo{person}{Soonho Kong}, \bibinfo{person}{Jeremy Avigad},
  \bibinfo{person}{Floris van Doorn}, {and} \bibinfo{person}{Jakob von
  Raumer}.} \bibinfo{year}{2015}\natexlab{}.
\newblock \showarticletitle{The Lean Theorem Prover (System Description)}. In
  \bibinfo{booktitle}{\emph{Proceedings of the 25th International Conference on
  Automated Deduction, {CADE} 2015, Berlin, Germany, August 1-7, 2015}}
  \emph{(\bibinfo{series}{Lecture Notes in Computer Science},
  Vol.~\bibinfo{volume}{9195})}, \bibfield{editor}{\bibinfo{person}{Amy~P.
  Felty} {and} \bibinfo{person}{Aart Middeldorp}} (Eds.).
  \bibinfo{publisher}{Springer}, \bibinfo{pages}{378--388}.
\newblock
\href{https://doi.org/10.1007/978-3-319-21401-6\_26}{doi:\nolinkurl{10.1007/978-3-319-21401-6\_26}}


\bibitem[Dybjer and Setzer(2003)]%
        {DBLP:journals/apal/DybjerS03}
\bibfield{author}{\bibinfo{person}{Peter Dybjer} {and} \bibinfo{person}{Anton
  Setzer}.} \bibinfo{year}{2003}\natexlab{}.
\newblock \showarticletitle{Induction-recursion and Initial Algebras}.
\newblock \bibinfo{journal}{\emph{Ann. Pure Appl. Log.}} \bibinfo{volume}{124},
  \bibinfo{number}{1-3} (\bibinfo{year}{2003}), \bibinfo{pages}{1--47}.
\newblock
\href{https://doi.org/10.1016/S0168-0072(02)00096-9}{doi:\nolinkurl{10.1016/S0168-0072(02)00096-9}}


\bibitem[Fridlender and Pagano(2013)]%
        {DBLP:conf/tlca/FridlenderP13}
\bibfield{author}{\bibinfo{person}{Daniel Fridlender} {and}
  \bibinfo{person}{Miguel Pagano}.} \bibinfo{year}{2013}\natexlab{}.
\newblock \showarticletitle{A Type-Checking Algorithm for Martin-L{\"{o}}f Type
  Theory with Subtyping Based on Normalisation by Evaluation}. In
  \bibinfo{booktitle}{\emph{Typed Lambda Calculi and Applications, 11th
  International Conference, {TLCA} 2013, Eindhoven, The Netherlands, June
  26-28, 2013. Proceedings}} \emph{(\bibinfo{series}{Lecture Notes in Computer
  Science}, Vol.~\bibinfo{volume}{7941})},
  \bibfield{editor}{\bibinfo{person}{Masahito Hasegawa}} (Ed.).
  \bibinfo{publisher}{Springer}, \bibinfo{pages}{140--155}.
\newblock
\href{https://doi.org/10.1007/978-3-642-38946-7\_12}{doi:\nolinkurl{10.1007/978-3-642-38946-7\_12}}


\bibitem[Frisch et~al\mbox{.}(2008)]%
        {frisch_semantic_2008}
\bibfield{author}{\bibinfo{person}{Alain Frisch}, \bibinfo{person}{Giuseppe
  Castagna}, {and} \bibinfo{person}{V{\'{e}}ronique Benzaken}.}
  \bibinfo{year}{2008}\natexlab{}.
\newblock \showarticletitle{Semantic Subtyping: Dealing Set-Theoretically with
  Function, Union, Intersection, and Negation Types}.
\newblock \bibinfo{journal}{\emph{J. {ACM}}} \bibinfo{volume}{55},
  \bibinfo{number}{4} (\bibinfo{year}{2008}), \bibinfo{pages}{19:1--19:64}.
\newblock
\href{https://doi.org/10.1145/1391289.1391293}{doi:\nolinkurl{10.1145/1391289.1391293}}


\bibitem[Gratzer et~al\mbox{.}(2020)]%
        {DBLP:conf/lics/GratzerKNB20}
\bibfield{author}{\bibinfo{person}{Daniel Gratzer}, \bibinfo{person}{G.~A.
  Kavvos}, \bibinfo{person}{Andreas Nuyts}, {and} \bibinfo{person}{Lars
  Birkedal}.} \bibinfo{year}{2020}\natexlab{}.
\newblock \showarticletitle{Multimodal Dependent Type Theory}. In
  \bibinfo{booktitle}{\emph{{LICS} '20: 35th Annual {ACM/IEEE} Symposium on
  Logic in Computer Science, Saarbr{\"{u}}cken, Germany, July 8-11, 2020}},
  \bibfield{editor}{\bibinfo{person}{Holger Hermanns}, \bibinfo{person}{Lijun
  Zhang}, \bibinfo{person}{Naoki Kobayashi}, {and} \bibinfo{person}{Dale
  Miller}} (Eds.). \bibinfo{publisher}{{ACM}}, \bibinfo{pages}{492--506}.
\newblock
\href{https://doi.org/10.1145/3373718.3394736}{doi:\nolinkurl{10.1145/3373718.3394736}}


\bibitem[Gratzer et~al\mbox{.}(2021)]%
        {DBLP:journals/lmcs/GratzerKNB21}
\bibfield{author}{\bibinfo{person}{Daniel Gratzer}, \bibinfo{person}{G.~A.
  Kavvos}, \bibinfo{person}{Andreas Nuyts}, {and} \bibinfo{person}{Lars
  Birkedal}.} \bibinfo{year}{2021}\natexlab{}.
\newblock \showarticletitle{Multimodal Dependent Type Theory}.
\newblock \bibinfo{journal}{\emph{Log. Methods Comput. Sci.}}
  \bibinfo{volume}{17}, \bibinfo{number}{3} (\bibinfo{year}{2021}).
\newblock
\href{https://doi.org/10.46298/LMCS-17(3:11)2021}{doi:\nolinkurl{10.46298/LMCS-17(3:11)2021}}


\bibitem[Hu et~al\mbox{.}(2023)]%
        {hu_jang_pientka_2023}
\bibfield{author}{\bibinfo{person}{Jason Z.~S. Hu}, \bibinfo{person}{Junyoung
  Jang}, {and} \bibinfo{person}{Brigitte Pientka}.}
  \bibinfo{year}{2023}\natexlab{}.
\newblock \showarticletitle{Normalization by Evaluation for Modal Dependent
  Type Theory}.
\newblock \bibinfo{journal}{\emph{J. Funct. Program.}}  \bibinfo{volume}{33}
  (\bibinfo{year}{2023}), \bibinfo{pages}{e7}.
\newblock
\href{https://doi.org/10.1017/S0956796823000060}{doi:\nolinkurl{10.1017/S0956796823000060}}


\bibitem[Jang et~al\mbox{.}(2025)]%
        {mctt}
\bibfield{author}{\bibinfo{person}{Junyoung Jang}, \bibinfo{person}{Antoine
  Gaulin}, \bibinfo{person}{Jason Z.~S. Hu}, {and} \bibinfo{person}{Brigitte
  Pientka}.} \bibinfo{year}{2025}\natexlab{}.
\newblock \showarticletitle{McTT: {A} Verified Kernel for a Proof Assistant}.
\newblock \bibinfo{journal}{\emph{Proc. {ACM} Program. Lang.}}
  \bibinfo{volume}{9}, \bibinfo{number}{{ICFP}} (\bibinfo{year}{2025}),
  \bibinfo{pages}{190--221}.
\newblock
\href{https://doi.org/10.1145/3747511}{doi:\nolinkurl{10.1145/3747511}}


\bibitem[Jiang et~al\mbox{.}(2025)]%
        {jiang_noncumu_2025}
\bibfield{author}{\bibinfo{person}{Shengyi Jiang}, \bibinfo{person}{Jason Z.~S.
  Hu}, {and} \bibinfo{person}{Bruno~C. d. S.~Oliveira}.}
  \bibinfo{year}{2025}\natexlab{}.
\newblock \showarticletitle{Normalization by Evaluation for Non-cumulativity}.
\newblock \bibinfo{journal}{\emph{Proc. ACM Program. Lang.}}
  \bibinfo{volume}{9}, \bibinfo{number}{ICFP} (\bibinfo{year}{2025}),
  \bibinfo{pages}{93--127}.
\newblock
\href{https://doi.org/10.1145/3747508}{doi:\nolinkurl{10.1145/3747508}}


\bibitem[Laurent(2025)]%
        {DBLP:phd/hal/Laurent25}
\bibfield{author}{\bibinfo{person}{Th{\'{e}}o Laurent}.}
  \bibinfo{year}{2025}\natexlab{}.
\newblock \emph{\bibinfo{title}{Structural Subtyping in {MLTT.} (Sous-typage
  structurel en th{\'{e}}orie des types {\`{a}} la Martin-L{\"{o}}f)}}.
\newblock \bibinfo{thesistype}{Ph.\,D. Dissertation}.
  \bibinfo{school}{University of Montpellier, France}.
\newblock
\urldef\tempurl%
\url{https://tel.archives-ouvertes.fr/tel-05144931}
\showURL{%
\tempurl}


\bibitem[Laurent et~al\mbox{.}(2024)]%
        {laurent_functoriality_2024}
\bibfield{author}{\bibinfo{person}{Th{\'{e}}o Laurent}, \bibinfo{person}{Meven
  Lennon{-}Bertrand}, {and} \bibinfo{person}{Kenji Maillard}.}
  \bibinfo{year}{2024}\natexlab{}.
\newblock \showarticletitle{Definitional Functoriality for Dependent
  (Sub)Types}. In \bibinfo{booktitle}{\emph{Programming Languages and Systems -
  33rd European Symposium on Programming, {ESOP} 2024, Held as Part of the
  European Joint Conferences on Theory and Practice of Software, {ETAPS} 2024,
  Luxembourg City, Luxembourg, April 6-11, 2024, Proceedings, Part {I}}}
  \emph{(\bibinfo{series}{Lecture Notes in Computer Science},
  Vol.~\bibinfo{volume}{14576})}, \bibfield{editor}{\bibinfo{person}{Stephanie
  Weirich}} (Ed.). \bibinfo{publisher}{Springer}, \bibinfo{pages}{302--331}.
\newblock
\href{https://doi.org/10.1007/978-3-031-57262-3_13}{doi:\nolinkurl{10.1007/978-3-031-57262-3_13}}


\bibitem[Lennon{-}Bertrand(2021)]%
        {lennonbertrand_bidir}
\bibfield{author}{\bibinfo{person}{Meven Lennon{-}Bertrand}.}
  \bibinfo{year}{2021}\natexlab{}.
\newblock \showarticletitle{Complete Bidirectional Typing for the Calculus of
  Inductive Constructions}. In \bibinfo{booktitle}{\emph{12th International
  Conference on Interactive Theorem Proving, {ITP} 2021}}
  \emph{(\bibinfo{series}{LIPIcs}, Vol.~\bibinfo{volume}{193})}.
  \bibinfo{publisher}{Schloss Dagstuhl - Leibniz-Zentrum f{\"{u}}r Informatik},
  \bibinfo{pages}{24:1--24:19}.
\newblock
\href{https://doi.org/10.4230/LIPIcs.ITP.2021.24}{doi:\nolinkurl{10.4230/LIPIcs.ITP.2021.24}}


\bibitem[Lennon{-}Bertrand(2025)]%
        {lennonbertrand2025doescertifyconversion}
\bibfield{author}{\bibinfo{person}{Meven Lennon{-}Bertrand}.}
  \bibinfo{year}{2025}\natexlab{}.
\newblock \showarticletitle{What Does It Take to Certify a Conversion
  Checker?}. In \bibinfo{booktitle}{\emph{10th International Conference on
  Formal Structures for Computation and Deduction, {FSCD} 2025, July 14-20,
  2025, Birmingham, {UK}}} \emph{(\bibinfo{series}{LIPIcs},
  Vol.~\bibinfo{volume}{337})}, \bibfield{editor}{\bibinfo{person}{Maribel
  Fern{\'{a}}ndez}} (Ed.). \bibinfo{publisher}{Schloss Dagstuhl -
  Leibniz-Zentrum f{\"{u}}r Informatik}, \bibinfo{pages}{27:1--27:23}.
\newblock
\href{https://doi.org/10.4230/LIPICS.FSCD.2025.27}{doi:\nolinkurl{10.4230/LIPICS.FSCD.2025.27}}


\bibitem[Liesnikov and Cockx(2024)]%
        {DBLP:conf/aplas/LiesnikovC24}
\bibfield{author}{\bibinfo{person}{Bohdan Liesnikov} {and}
  \bibinfo{person}{Jesper Cockx}.} \bibinfo{year}{2024}\natexlab{}.
\newblock \showarticletitle{Building a Correct-by-Construction Type Checker for
  a Dependently Typed Core Language}. In \bibinfo{booktitle}{\emph{Programming
  Languages and Systems - 22nd Asian Symposium, {APLAS} 2024, Kyoto, Japan,
  October 22-24, 2024, Proceedings}} \emph{(\bibinfo{series}{Lecture Notes in
  Computer Science}, Vol.~\bibinfo{volume}{15194})},
  \bibfield{editor}{\bibinfo{person}{Oleg Kiselyov}} (Ed.).
  \bibinfo{publisher}{Springer}, \bibinfo{pages}{63--83}.
\newblock
\href{https://doi.org/10.1007/978-981-97-8943-6\_4}{doi:\nolinkurl{10.1007/978-981-97-8943-6\_4}}


\bibitem[Liu and Weirich(2026)]%
        {liu_conversion_2026}
\bibfield{author}{\bibinfo{person}{Yiyun Liu} {and} \bibinfo{person}{Stephanie
  Weirich}.} \bibinfo{year}{2026}\natexlab{}.
\newblock \showarticletitle{Algorithmic Conversion with Surjective Pairing: {A}
  Syntactic and Untyped Approach}.
\newblock \bibinfo{journal}{\emph{Proc. {ACM} Program. Lang.}}
  \bibinfo{volume}{10}, \bibinfo{number}{{POPL}} (\bibinfo{year}{2026}),
  \bibinfo{pages}{865--894}.
\newblock
\href{https://doi.org/10.1145/3776672}{doi:\nolinkurl{10.1145/3776672}}


\bibitem[Luo(1994)]%
        {luo_ecc_1994}
\bibfield{author}{\bibinfo{person}{Zhaohui Luo}.}
  \bibinfo{year}{1994}\natexlab{}.
\newblock \bibinfo{booktitle}{\emph{Computation and Reasoning: A Type Theory
  for Computer Science}}.
\newblock \bibinfo{publisher}{Clarendon Press}.
\newblock
\showISBNx{978-0-19-853835-6}
\href{https://doi.org/10.1093/oso/9780198538356.001.0001}{doi:\nolinkurl{10.1093/oso/9780198538356.001.0001}}


\bibitem[Luo(1999)]%
        {luo_coercive_1999}
\bibfield{author}{\bibinfo{person}{Zhaohui Luo}.}
  \bibinfo{year}{1999}\natexlab{}.
\newblock \showarticletitle{Coercive Subtyping}.
\newblock \bibinfo{journal}{\emph{J. Log. Comput.}} \bibinfo{volume}{9},
  \bibinfo{number}{1} (\bibinfo{year}{1999}), \bibinfo{pages}{105--130}.
\newblock
\href{https://doi.org/10.1093/LOGCOM/9.1.105}{doi:\nolinkurl{10.1093/LOGCOM/9.1.105}}


\bibitem[Luo(2012)]%
        {luo_universe_notes_2012}
\bibfield{author}{\bibinfo{person}{Zhaohui Luo}.}
  \bibinfo{year}{2012}\natexlab{}.
\newblock \bibinfo{title}{Notes on Universes in Type Theory}.
\newblock
\urldef\tempurl%
\url{https://ncatlab.org/ufias2012/files/LuoUniverse.pdf}
\showURL{%
\tempurl}
\newblock
\shownote{Notes for a talk at the Institute for Advanced Study, Princeton}.


\bibitem[Luo et~al\mbox{.}(2013)]%
        {luo_coercive_2013}
\bibfield{author}{\bibinfo{person}{Zhaohui Luo}, \bibinfo{person}{Sergei
  Soloviev}, {and} \bibinfo{person}{Tao Xue}.} \bibinfo{year}{2013}\natexlab{}.
\newblock \showarticletitle{Coercive subtyping: Theory and implementation}.
\newblock \bibinfo{journal}{\emph{Inf. Comput.}}  \bibinfo{volume}{223}
  (\bibinfo{year}{2013}), \bibinfo{pages}{18--42}.
\newblock
\href{https://doi.org/10.1016/J.IC.2012.10.020}{doi:\nolinkurl{10.1016/J.IC.2012.10.020}}


\bibitem[Martin{-}L{\"{o}}f(1984)]%
        {DBLP:books/daglib/0000395}
\bibfield{author}{\bibinfo{person}{Per Martin{-}L{\"{o}}f}.}
  \bibinfo{year}{1984}\natexlab{}.
\newblock \bibinfo{booktitle}{\emph{Intuitionistic Type Theory}}.
  \bibinfo{series}{Studies in proof theory}, Vol.~\bibinfo{volume}{1}.
\newblock \bibinfo{publisher}{Bibliopolis}.
\newblock
\showISBNx{978-88-7088-228-5}


\bibitem[Palmgren(1998)]%
        {palmgren_universes_1998}
\bibfield{author}{\bibinfo{person}{Erik Palmgren}.}
  \bibinfo{year}{1998}\natexlab{}.
\newblock \showarticletitle{On Universes in Type Theory}.
\newblock In \bibinfo{booktitle}{\emph{Twenty {Five} {Years} of {Constructive}
  {Type} {Theory}}}. \bibinfo{publisher}{Oxford University Press}.
\newblock
\showISBNx{978-0-19-850127-5 978-0-19-191654-0}
\href{https://doi.org/10.1093/oso/9780198501275.003.0012}{doi:\nolinkurl{10.1093/oso/9780198501275.003.0012}}


\bibitem[Pujet and Tabareau(2022)]%
        {DBLP:journals/pacmpl/PujetT22}
\bibfield{author}{\bibinfo{person}{Lo{\"{\i}}c Pujet} {and}
  \bibinfo{person}{Nicolas Tabareau}.} \bibinfo{year}{2022}\natexlab{}.
\newblock \showarticletitle{Observational Equality: Now for Good}.
\newblock \bibinfo{journal}{\emph{Proc. {ACM} Program. Lang.}}
  \bibinfo{volume}{6}, \bibinfo{number}{{POPL}} (\bibinfo{year}{2022}),
  \bibinfo{pages}{1--27}.
\newblock
\href{https://doi.org/10.1145/3498693}{doi:\nolinkurl{10.1145/3498693}}


\bibitem[Pujet and Tabareau(2023)]%
        {DBLP:journals/pacmpl/PujetT23}
\bibfield{author}{\bibinfo{person}{Lo{\"{\i}}c Pujet} {and}
  \bibinfo{person}{Nicolas Tabareau}.} \bibinfo{year}{2023}\natexlab{}.
\newblock \showarticletitle{Impredicative Observational Equality}.
\newblock \bibinfo{journal}{\emph{Proc. {ACM} Program. Lang.}}
  \bibinfo{volume}{7}, \bibinfo{number}{{POPL}} (\bibinfo{year}{2023}),
  \bibinfo{pages}{2171--2196}.
\newblock
\href{https://doi.org/10.1145/3571739}{doi:\nolinkurl{10.1145/3571739}}


\bibitem[Sa{\"i}bi(1997)]%
        {saibi_inheritance_1997}
\bibfield{author}{\bibinfo{person}{Amokrane Sa{\"i}bi}.}
  \bibinfo{year}{1997}\natexlab{}.
\newblock \showarticletitle{Typing Algorithm in Type Theory with Inheritance}.
  In \bibinfo{booktitle}{\emph{Proceedings of the 24th {ACM} {SIGPLAN-SIGACT}
  Symposium on Principles of Programming Languages, {POPL} 1997}}.
  \bibinfo{publisher}{{ACM}}, \bibinfo{pages}{292--301}.
\newblock
\href{https://doi.org/10.1145/263699.263742}{doi:\nolinkurl{10.1145/263699.263742}}


\bibitem[Scherer and Abel(2011)]%
        {scherer_universe_2011}
\bibfield{author}{\bibinfo{person}{Gabriel Scherer} {and}
  \bibinfo{person}{Andreas Abel}.} \bibinfo{year}{2011}\natexlab{}.
\newblock \bibinfo{booktitle}{\emph{Universe Subtyping in Martin-L{\"o}f Type
  Theory}}.
\newblock \bibinfo{type}{{T}echnical {R}eport}.
  \bibinfo{institution}{Ludwig-Maximilians-University Munich}.
\newblock
\newblock
\shownote{Internship report,
  \url{https://gallium.inria.fr/~scherer/doc/munich-report-2011-scherer-abel.pdf}}.


\bibitem[Sozeau et~al\mbox{.}(2020)]%
        {coqcoqcorrect}
\bibfield{author}{\bibinfo{person}{Matthieu Sozeau}, \bibinfo{person}{Simon
  Boulier}, \bibinfo{person}{Yannick Forster}, \bibinfo{person}{Nicolas
  Tabareau}, {and} \bibinfo{person}{Th{\'{e}}o Winterhalter}.}
  \bibinfo{year}{2020}\natexlab{}.
\newblock \showarticletitle{Coq Coq correct! verification of type checking and
  erasure for Coq, in Coq}.
\newblock \bibinfo{journal}{\emph{Proc. ACM Program. Lang.}}
  \bibinfo{volume}{4}, \bibinfo{number}{POPL} (\bibinfo{year}{2020}),
  \bibinfo{pages}{8:1--8:28}.
\newblock
\href{https://doi.org/10.1145/3371076}{doi:\nolinkurl{10.1145/3371076}}


\bibitem[{The Agda Team}(2024)]%
        {agda}
\bibfield{author}{\bibinfo{person}{{The Agda Team}}.}
  \bibinfo{year}{2024}\natexlab{}.
\newblock \bibinfo{title}{Agda 2.6.4.3}.
\newblock
\urldef\tempurl%
\url{https://wiki.portal.chalmers.se/agda/pmwiki.php}
\showURL{%
\tempurl}


\bibitem[{The Rocq Development Team}(2025)]%
        {rocq}
\bibfield{author}{\bibinfo{person}{{The Rocq Development Team}}.}
  \bibinfo{year}{2025}\natexlab{}.
\newblock \bibinfo{booktitle}{\emph{The Rocq Prover}}.
\newblock


\bibitem[Timany and Sozeau(2017)]%
        {DBLP:journals/corr/abs-1710-03912}
\bibfield{author}{\bibinfo{person}{Amin Timany} {and} \bibinfo{person}{Matthieu
  Sozeau}.} \bibinfo{year}{2017}\natexlab{}.
\newblock \showarticletitle{Consistency of the Predicative Calculus of
  Cumulative Inductive Constructions (pCuIC)}.
\newblock \bibinfo{journal}{\emph{CoRR}}  \bibinfo{volume}{abs/1710.03912}
  (\bibinfo{year}{2017}).
\newblock
\showeprint[arXiv]{1710.03912}
\urldef\tempurl%
\url{http://arxiv.org/abs/1710.03912}
\showURL{%
\tempurl}


\bibitem[Wieczorek and Biernacki(2018)]%
        {wieczorek_coq_2018}
\bibfield{author}{\bibinfo{person}{Pawel Wieczorek} {and}
  \bibinfo{person}{Dariusz Biernacki}.} \bibinfo{year}{2018}\natexlab{}.
\newblock \showarticletitle{A Coq Formalization of Normalization by Evaluation
  for Martin-L{\"{o}}f Type Theory}. In \bibinfo{booktitle}{\emph{Proceedings
  of the 7th {ACM} {SIGPLAN} International Conference on Certified Programs and
  Proofs, {CPP} 2018, Los Angeles, California, USA, January 8-9, 2018}},
  \bibfield{editor}{\bibinfo{person}{June Andronick} {and}
  \bibinfo{person}{Amy~P. Felty}} (Eds.). \bibinfo{publisher}{{ACM}},
  \bibinfo{pages}{266--279}.
\newblock
\href{https://doi.org/10.1145/3167091}{doi:\nolinkurl{10.1145/3167091}}


\bibitem[Zwanenburg(1999)]%
        {zwanenburg_pts_1999}
\bibfield{author}{\bibinfo{person}{Jan Zwanenburg}.}
  \bibinfo{year}{1999}\natexlab{}.
\newblock \showarticletitle{Pure Type Systems with Subtyping}. In
  \bibinfo{booktitle}{\emph{Typed Lambda Calculi and Applications, 4th
  International Conference, {TLCA} '99}} \emph{(\bibinfo{series}{Lecture Notes
  in Computer Science}, Vol.~\bibinfo{volume}{1581})}.
  \bibinfo{publisher}{Springer}, \bibinfo{pages}{381--396}.
\newblock
\href{https://doi.org/10.1007/3-540-48959-2\_27}{doi:\nolinkurl{10.1007/3-540-48959-2\_27}}


\end{thebibliography}
